\documentclass[acmsmall,screen]{acmart}
\citestyle{acmauthoryear}
\settopmatter{printacmref=false} %
\renewcommand\footnotetextcopyrightpermission[1]{} %
\pagestyle{plain} %

\usepackage{caption}
\usepackage{subcaption}
\usepackage{fancyvrb}
\usepackage{setspace}
\usepackage{url}
\usepackage{color}
\usepackage{setspace}
\usepackage{paralist}
\usepackage{amsmath,amssymb,amsfonts,amsthm}
\usepackage[T1]{fontenc}
\usepackage[latin9]{inputenc}
\usepackage{listings}
\lstset{basicstyle=\ttfamily\scriptsize,
        language=Java,
        emph={},
        emphstyle={\underbar},
        numbers=left,
        numberfirstline=false,
        numberstyle=\tiny,
        escapeinside={(*@}{@*)}}
\lstdefinestyle{nonumbering}{
        basicstyle=\ttfamily\scriptsize,
        language=Java,
        emph={},
        emphstyle={\underbar},
        numbers=none,
        numberfirstline=false,
        numberstyle=\tiny,
        escapeinside={(*@}{@*)}
}
\lstdefinestyle{small}{
        basicstyle=\tiny,
        language=Java,
        emph={},
        emphstyle={\underbar},
        numbers=none,
        numberfirstline=false,
        numberstyle=\tiny,
        escapeinside={(*@}{@*)}
}

\usepackage{hyperref}
\usepackage{cleveref}
\usepackage{pdfpages}

\usepackage{float}
\usepackage{graphics}
\usepackage{graphicx}
\usepackage{wrapfig}
\usepackage{epsfig}
\usepackage[linesnumbered,ruled,vlined]{algorithm2e}
\usepackage{rotating}
\usepackage{mathpartir}
\usepackage{multirow}
\usepackage{mathtools}
\usepackage{hhline}

\Crefname{figure}{Fig.}{Fig.}

\newcommand{\Iaux}{\Inv_{\text{aux}}}

\newcommand{\cassume}{\bf{assume}}

\renewcommand{\phi}{\varphi}

\newcommand\substitute[3]{{{#1} \left[ #3 ~/~ #2 \right]}}
\newcommand\substitutemany[5]{{{#1} \left[ #3 ~/~ #2,\ldots, #5 ~/~ #4 \right]}}

\newcommand{\true}{{\textit{true}}}

\newcommand{\vocabulary}{{\ensuremath{\Sigma}}}

\newcommand{\srts}{\mathcal{S}}

\newcommand{\func}{f}
\newcommand{\srt}{s}

\newcommand{\s}{s}
\newcommand{\shat}{\hat{\s}}
\newcommand{\Reach}{R}
\newcommand{\Reachhat}{\hat{R}}

\newcommand{\init}{\textit{INIT}}
\newcommand{\initu}{\textit{INIT}_u}
\newcommand{\inithat}{\widehat{\init}}

\newcommand{\TR}{\textit{TR}}
\newcommand{\smallTR}{g}
\newcommand{\TRhat}{\widehat{\textit{TR}}}
\newcommand{\Sigmahat}{\widehat{\Sigma}}
\newcommand{\varphihat}{\widehat{\varphi}}
\newcommand{\thetahat}{\widehat{\theta}}
\newcommand{\Inv}{\textit{INV}}
\newcommand{\Invhat}{\widehat{\Inv}}
\newcommand{\safe}{P}
\newcommand{\TRu}{{\textit{TR}_u}}

\newcommand{\sort}[1]{{\text{#1}}}
\newcommand{\snode}{\sort{node}}
\newcommand{\squorum}{\sort{quorum}}
\newcommand{\sround}{\sort{round}}
\newcommand{\svalue}{\sort{value}}
\newcommand{\sinst}{\sort{instance}}
\newcommand{\svotemap}{\sort{votemap}}
\newcommand{\snodeset}{\sort{nodeset}}
\newcommand{\smap}{\sort{map}}
\newcommand{\sfquorum}{\sort{f\_quorum}}
\newcommand{\scquorum}{\sort{c\_quorum}}
\newcommand{\sconfig}{\sort{config}}

\newcommand{\relation}[1]{{\textit{#1}}}
\newcommand{\rmember}{\relation{member}}
\newcommand{\ronea}{\relation{start\_round\_msg}}
\newcommand{\ronebmaxvote}{\relation{join\_ack\_msg}}
\newcommand{\ronebproj}{\relation{joined\_round}}
\newcommand{\rleftround}{\relation{left\_round}}
\newcommand{\rproposal}{\relation{propose\_msg}}
\newcommand{\rvote}{\relation{vote\_msg}}
\newcommand{\rdecision}{\relation{decision}}
\newcommand{\rroundof}{\relation{roundof}}
\newcommand{\rvalueof}{\relation{valueof}}
\newcommand{\rapply}{\relation{apply}}
\newcommand{\ravailable}{\relation{available}}
\newcommand{\ractive}{\relation{active}}
\newcommand{\rfast}{\relation{fast}}
\newcommand{\rany}{\relation{any\_msg}}
\newcommand{\rfmember}{\relation{f\_member}}
\newcommand{\rcmember}{\relation{c\_member}}
\newcommand{\rquorumin}{\relation{quorum\_in}}
\newcommand{\rcompleteof}{\relation{complete\_of}}
\newcommand{\rcompletemsg}{\relation{complete\_msg}}
\newcommand{\rconfig}{\relation{configure\_round\_msg}}
\newcommand{\rmastercomp}{\relation{master\_complete}}
\newcommand{\rquorumof}{\relation{quorum\_of\_round}}
\newcommand{\rstop}{\relation{stop}}

\newcommand{\action}[1]{{\textsc{#1}}}
\newcommand{\asendonea}{\action{start\_round}}
\newcommand{\ajoinround}{\action{join\_round}}
\newcommand{\apropose}{\action{propose}}
\newcommand{\acastvote}{\action{vote}}
\newcommand{\adecide}{\action{learn}}
\newcommand{\ainstate}{\action{instate\_round}}
\newcommand{\aproposenew}{\action{propose\_new\_value}}

\newcommand{\acdecide}{\action{c\_decide}}
\newcommand{\afdecide}{\action{f\_decide}}
\newcommand{\aconfig}{\action{configure\_round}}
\newcommand{\amark}{\action{mark\_complete}}

\newcommand{\none}{{\bot}}

\DeclarePairedDelimiter{\floor}{\lfloor}{\rfloor}

\bibliographystyle{ACM-Reference-Format}

\setcopyright{none}

\begin{document}

\title{Paxos Made EPR: Decidable Reasoning about Distributed Protocols}
\titlenote{This is the full version of a paper presented in OOPLSA 2017 \cite{oopsla17-epr}.
This version includes appendices that do not appear in the conference proceedings, and a slightly improved model of Multi-Paxos.}

\author{Oded Padon}
\affiliation{
  \institution{Tel Aviv University}
  \country{Israel}
}
\email{odedp@mail.tau.ac.il}

\author{Giuliano Losa}
\affiliation{
  \institution{University of California, Los Angeles}
  \country{USA}
}
\email{giuliano@cs.ucla.edu}

\author{Mooly Sagiv}
\affiliation{
  \institution{Tel Aviv University}
  \country{Israel}
}
\email{msagiv@post.tau.ac.il}

\author{Sharon Shoham}
\affiliation{
  \institution{Tel Aviv University}
  \country{Israel}
}
\email{sharon.shoham@gmail.com}

\begin{abstract}
Distributed protocols such as Paxos play an important role in many
computer systems. Therefore, a bug in a distributed protocol may have
tremendous effects. Accordingly, a lot of effort has been invested in
verifying such protocols. However, checking invariants of such
protocols is undecidable and hard in practice, as it requires
reasoning about an unbounded number of nodes and messages. Moreover,
protocol actions and invariants involve both quantifier alternations
and higher-order concepts such as set cardinalities and arithmetic.

This paper makes a step towards automatic verification of such
protocols. We aim at a technique that can verify correct protocols and
identify bugs in incorrect protocols. To this end, we develop a
methodology for deductive verification based on effectively
propositional logic (EPR)---a decidable fragment of first-order logic
(also known as the Bernays-Sch\"onfinkel-Ramsey class). In addition to
decidability, EPR also enjoys the finite model property, allowing to
display violations as finite structures which are intuitive for
users. Our methodology involves modeling protocols using general
(uninterpreted) first-order logic, and then systematically
transforming the model to obtain a model and an inductive invariant
that are decidable to check. The steps of the transformations are also
mechanically checked, ensuring the soundness of the method. We have
used our methodology to verify the safety of Paxos, and several of its
variants, including Multi-Paxos, Vertical Paxos, Fast Paxos,
Flexible Paxos and Stoppable Paxos. To the best of our knowledge, this work is the first
to verify these protocols using a decidable logic, and the first
formal verification of Vertical Paxos, Fast Paxos and Stoppable Paxos.
\end{abstract}

\begin{CCSXML}
<ccs2012>
<concept>
<concept_id>10011007.10010940.10010992.10010998</concept_id>
<concept_desc>Software and its engineering~Formal methods</concept_desc>
<concept_significance>500</concept_significance>
</concept>
<concept>
<concept_id>10003033.10003039.10003041</concept_id>
<concept_desc>Networks~Protocol correctness</concept_desc>
<concept_significance>300</concept_significance>
</concept>
<concept>
<concept_id>10003752.10003790.10002990</concept_id>
<concept_desc>Theory of computation~Logic and verification</concept_desc>
<concept_significance>300</concept_significance>
</concept>
</ccs2012>
\end{CCSXML}

\ccsdesc[500]{Software and its engineering~Formal methods}
\ccsdesc[300]{Networks~Protocol correctness}
\ccsdesc[300]{Theory of computation~Logic and verification}

\keywords{Paxos, safety verification, inductive invariants, deductive verification, effectively propositional logic, distributed systems}

\maketitle
\thispagestyle{plain} %
\section{Introduction}

Paxos is a family of protocols for solving consensus in a network of
unreliable processors with unreliable communication.  Consensus is the
process of deciding on one result among a group of participants. Paxos
protocols play an important role in our daily life.  For example, Google
uses the Paxos algorithm in their Chubby distributed lock service in
order to keep replicas consistent in case of
failure~\cite{Chuby}. VMware uses a Paxos-based protocol within the
NSX Controller. Amazon Web Services uses Paxos-like algorithms
extensively to power its platform \cite{amazon}.  The key safety property of Paxos
is consistency: processors cannot decide on different values.

Due to its importance, verifying the safety of distributed protocols
like Paxos is an ongoing research challenge.  The systems and
programming languages communities have had several recent success
stories in verifying the safety of Paxos-like protocols in projects
such as IronFleet~\cite{IronFleet},
Verdi~\cite{DBLP:conf/pldi/WilcoxWPTWEA15}, and
PSync~\cite{dragoi_psync:_2016}\footnote{IronFleet and PSync also
  verify certain liveness properties.}.

\subsection{Main Results}
This work aims to increase the level of automation in verification of
distributed protocols, hoping that it will eventually lead to wider
adoption of formal verification in this domain. We follow IronFleet,
Verdi, and PSync, in requiring that the user supplies inductive
invariants for the protocols.
We aim to automate the process of checking the inductiveness of the
user-supplied invariants.
The goal is that the system can reliably produce in finite time either a proof that the invariant is inductive
or display a comprehensible counterexample to induction (CTI), i.e., a
concrete transition of the protocol from state $s$ to state $s'$ such
that $s$ satisfies the given invariant and $s'$ does not\footnote{Such
  a CTI indicates that there is a bug in the protocol itself, or that
  the provided invariant is inadequate (e.g., too weak or too
  strong).}.
Such a task seems very difficult since these protocols are usually
expressed in rich programming languages in which automatically
checking inductive invariants is both undecidable and very hard in
practice.  In fact, in the IronFleet project, it was observed that
undecidability of the reasoning performed by Z3~\cite{z3} is a major
hurdle in their verification
process.

\subsubsection{Criteria for Automatic Deductive Verification}
We aim for an automated deductive verification technique that achieves three goals:
\begin{description}
\item [Natural] Making the invariants readable even for users who are not expert in the tools.
\item [Completeness] Making sure that if the invariant is inductive then the solver is guaranteed to
prove it.
\item [Finite Counterexamples] Guaranteeing that if the invariant is not inductive then the
solver can display a concrete counterexample to induction with a finite number of nodes
which can be diagnosed by users.
\end{description}
These goals are highly ambitious. Expressing the verification
conditions in a decidable logic with a small model property (e.g.,
EPR~\cite{JAR:PiskacMB10}) will guarantee Completeness and Finite
Counterexamples. However, it is not clear how to model complex protocols
like Paxos in such logics.  Consensus
protocols such as Paxos often require higher-order reasoning about
sets of nodes (majority sets or quorums), combined with complex
quantification. In fact, some researchers conjectured that decidable
logics are too restrictive to be useful.

Furthermore, we are aiming to obtain natural invariants.
We decided to verify the designs of the protocols and not their
implementations since the invariants are more natural and since we
wanted to avoid dealing with low level implementation issues.
In the future we plan to use refinement to synthesize efficient low level
implementations. Systems such as Alloy~\cite{alloy} and
TLA~\cite{tla} have already been used for finding bugs in
protocols and inductive invariants (e.g., by Amazon~\cite{amazon}).
Again they verify and identify faults in
the designs and not the actual implementation. However, in contrast to our approach, they cannot automatically
produce proofs for inductiveness (Completeness).

\subsubsection{A Reusable Verification Methodology}
In this work, we develop a novel reusable verification methodology based on Effectively Propositional logic (EPR) for achieving the above goals.
Our methodology allows the expression of complex protocols and systems, while guaranteeing that the verification conditions are expressed in EPR. EPR provides both decidability and finite counterexamples, and is supported by existing solvers (e.g., Z3~\cite{z3}, iProver~\cite{DBLP:conf/cade/Korovin08}, VAMPIRE~\cite{vampire}, CVC4~\cite{DBLP:conf/cav/BarrettCDHJKRT11}).
We have used our methodology to verify the safety of Paxos, and
several of its variants, including Multi-Paxos, Vertical Paxos, Fast
Paxos, Flexible Paxos and Stoppable Paxos. To the best of our knowledge, this work is
the first to verify these protocols using a decidable logic, and,
in the case of Vertical Paxos, Fast Paxos, and Stoppable Paxos, it is also the first mechanized safety proof.

We have also compared our methodology to a traditional approach based
on a state-of-the-art interactive theorem
prover---Isabelle/HOL~\cite{nipkow_isabelle/hol:_2002}. Our comparison
shows that the inductive invariants used are very similar in both
approaches (Natural), and that our methodology allows more reliable
and predictable automation: an interactive theorem prover can
discharge proof obligations to theorem provers using undecidable
theories, but these often fail due to the undecidability. In such
cases, it requires an experienced expert user to prove the inductive
invariant. In contrast, with our methodology all the verification
conditions are decidable and therefore checking them is fully
automated.

\paragraph{First-order uninterpreted abstraction}
The first phase in our verification process is expressing the system and invariant in (undecidable) many-sorted first-order logic
over uninterpreted structures.
This is in contrast to SMT which allows the use of interpreted theories such as arithmetic and the theory of arrays.
The use of theories is natural specifically for handling low level aspects such as machine arithmetic and low level storage.
However, SMT leads to inherent undecidability with quantifiers which are used to model unbounded systems.
In contrast to SMT, we handle concepts, such as arithmetic and set cardinalities, using abstraction
expressible in first-order logic, e.g., a totally ordered set instead of the natural numbers.
This involves coming up with domain knowledge encoded as first-order axioms (e.g. a first-order formula expressing transitivity of a total order).

We are encouraged by the simplicity of our abstractions and the fact that they are precise enough to prove complex protocols.
We also note that using first-order logic has led us to axioms and invariants that elegantly capture the essence of the protocols.
This is also enabled by the fact that we are modeling high-level protocols and not their low level implementations.

At the end of this phase, the verification conditions are in general first-order logic. This is already useful as it allows to use resolution-based theorem provers
(e.g., SPASS~\cite{spass} and VAMPIRE~\cite{vampire}). Yet, at this stage the verification conditions are still undecidable, and solvers are not guaranteed to terminate.

One way to obtain decidability is to restrict quantifier alternations.
We examine the \emph{quantifier alternation graph} of the verification condition, which connects sorts that alternate in
$\forall\exists$ quantification. When this graph contains cycles, solvers such as Z3 often diverge into infinite loops while instantiating quantifiers.
This issue is avoided when the graph is acyclic, in which case the verification condition is essentially in EPR.
Therefore, the second phase of our methodology provides a systematic way
to soundly eliminate the cycles.

\paragraph{Eliminating quantifier alternations using derived relations}
The most creative part in our methodology is adding derived
relations and rewriting the code to break the cycles in
the quantifier alternation graph.
The main idea is to capture an existential formula by a derived relation,
and then to use the derived relation as a substitute for the formula,
both in the code and in the invariant, thus eliminating some quantifier alternations.
The user is responsible for defining the derived relations and performing the rewrites.
The system automatically generates update code for the derived relations,
and automatically checks the soundness of the rewrites.
For the generation of update code, we exploit the locality of updates,
as relations (used for defining the derived relations) are updated by inserting a single entry at a time.
We identify a class of formulas for which this automatic procedure is always possible
and use this class for verifying the Paxos protocols.

We are encouraged by the fact that the transformations needed in this step are reusable across all Paxos variants
we consider. Furthermore, the transformations maintain the simplicity and readability of both the code and the inductive invariants.

\subsection{Summary of the rest of the paper}

In \Cref{sec:background} we present the technical background on using
first-order logic to express transition systems, and on the EPR
fragment. We then develop our general methodology for EPR-based
verification in \Cref{sec:transformations}.  \Cref{sec:intro-to-paxos}
reviews the Paxos consensus algorithm, which is the basis for all
Paxos-like protocols. We present our model of the Paxos consensus
algorithm as a transition system in first-order logic in
\Cref{sec:paxos-fol}, and continue to verify it using EPR by applying
our methodology in \Cref{sec:paxos-epr}.  In \Cref{sec:multi-paxos},
we describe our verification of Multi-Paxos using EPR.  We briefly
discuss the verification of Vertical Paxos, Fast Paxos, Flexible
Paxos, and Stoppable Paxos in \Cref{sec:paxos-variants}. In
\Cref{sec:exp} we report on our implementation and experimental
evaluation. We discuss related work in \Cref{sec:related}, and
\Cref{sec:conclusion} concludes the paper.  More details about the
verification of Vertical Paxos, Fast Paxos, Flexible Paxos, and
Stoppable Paxos appear in \Cref{sec:paxos-variants-long}.
\Cref{sec:isabelle} contains a worked out comparison of the proof of
Paxos using our methodology to a proof using the Isabelle/HOL
interactive proof assistant.

\section{Background: Verification using EPR}
\label{sec:background}

In this section we present the necessary background on the formalization of transition systems using first-order logic, as well as on the EPR fragment of first-order logic.

\subsection{Transition Systems in First Order Logic}

We model transition systems using many-sorted first-order logic. We
use a vocabulary $\Sigma$ which consists of sorted constant symbols,
function symbols and relation symbols to capture the state of the
system, and formulas to capture sets of states and transitions.
Formally, given a vocabulary $\Sigma$, a \emph{state} is a first-order
structure over $\Sigma$. We sometimes use \emph{axioms} in the form of
closed first-order formulas over $\Sigma$, to restrict the set of
states to those that satisfy all the axioms.  A \emph{transition
  system} is a pair $(\init, \TR)$, where $\init$ is the \emph{initial
  condition} given by a closed formula over $\Sigma$, and $\TR$ is the
\emph{transition relation} given by a closed formula over $\Sigma
\uplus \Sigma'$ where $\Sigma$ is used to describe the source state of
the transition and $\Sigma' = \{a' \mid a \in \Sigma\}$ is used to
describe the target state. The set of initial states and the set of
transitions of the system consist of the states, respectively, pairs
of states, that satisfy $\init$, respectively, $\TR$.  We define the
set of reachable states of a transition system in the usual way.  A
\emph{safety property} is expressed by a closed formula $\safe$
over $\Sigma$. The system is \emph{safe} if all of its reachable
states satisfy $\safe$.

In the paper, we use the \emph{relational modeling language} (RML)~\cite{ivy} to express transition systems.
An RML program consists of \emph{actions}, each of which consists of a loop-free code that is executed atomically, and corresponds to a single transition.
RML commands include non-deterministic choice, sequential composition,
and updates to constant symbols, function symbols and relation symbols
(representing the system's state), where updates are expressed by
first-order formulas. In addition, conditions in RML are expressed
using {\cassume} commands. RML programs naturally translate to
formulas $(\init,\TR)$, where $\TR$ is a disjunction of the transition
relation formulas associated with each action (see \cite{ivy} for
details of the translation). As such, we will use models, programs and
transition systems interchangeably throughout the paper. We note that
RML is Turing-complete, and remains so when $\init$ and $\TR$ are
restricted to the EPR fragment.

A closed first-order formula $\Inv$ over $\Sigma$ is an
\emph{inductive invariant} for a transition system $(\init, \TR)$ if
$\init \models \Inv$ and $\Inv \land \TR \models \Inv'$, where $\Inv'$
results from substituting every symbol in $\Inv$ by its primed
version. These requirements ensure that an inductive invariant
represents a superset of the reachable states. Given a safety property
$\safe$, an inductive invariant $\Inv$ proves that the transition
system is safe if $\Inv \models \safe$. Equivalently, $\Inv$ proves
safety of $(\init,\TR)$ for $\safe$ if the following formulas are
unsatisfiable: (i) $\init \wedge \neg \Inv$, (ii) $\Inv \wedge \TR
\wedge \neg \Inv'$, and (iii) $\Inv \wedge \neg \safe$. We refer to
these formulas as the \emph{verification condition} of $\Inv$.
When $\Inv \wedge \TR \wedge \neg \Inv'$ is satisfiable, and $(s,s') \models \Inv \wedge \TR \wedge \neg \Inv'$, we say that the transition $(s,s')$ is a \emph{counterexample to induction} (CTI).

\subsection{Extended Effectively Propositional Logic (EPR)}

The effectively-propositional (EPR) fragment of first-order logic,
also known as the Bernays-Sch\"onfinkel-Ramsey class is restricted to
relational first-order formulas (i.e., formulas over a vocabulary that
contains constant symbols and relation symbols but no function
symbols) with a quantifier prefix $\exists^* \forall^*$ in prenex
normal form.  Satisfiability of EPR formulas is
decidable~\cite{LEWIS1980317}.  Moreover, formulas in this fragment
enjoy the \emph{finite model property}, meaning that a satisfiable
formula is guaranteed to have a finite model. The size of this model
is bounded by the total number of existential quantifiers and
constants in the formula. The reason for this is that given an
$\exists^* \forall^*$-formula, we can obtain an equi-satisfiable
quantifier-free formula by Skolemization, i.e., replacing the
existentially quantified variables by constants, and then
instantiating the universal quantifiers for all constants.
While EPR does not allow any function symbols nor quantifier alternation except $\exists^* \forall^*$,
it can be easily extended to allow \emph{stratified} function symbols and quantifier alternation
(as formalized below).
The extension maintains both the finite model property and the decidability of the satisfiability
problem.

\paragraph{The quantifier alternation graph}
Let $\varphi$ be a formula in negation normal form over a many-sorted signature $\vocabulary$ with a set of sorts $\srts$.
We define the \emph{quantifier alternation graph} of $\varphi$
as a directed graph %
where the set of vertices is the set of sorts, $\srts$, %
and the set of directed edges, %
called $\forall\exists$ edges, is defined as follows.
\begin{itemize}
\item {\bf Function edges:} let $\func$ be a function in $\varphi$ from sorts $\srt_1,\ldots,\srt_k$ to sort $\srt$. Then there is a $\forall\exists$ edge from $\srt_i$ to $\srt$ for every $1 \leq i \leq k$.
\item {\bf Quantifier edges:} let $\exists x: \srt$ be an existential quantifier that resides in the scope of the universal quantifiers $\forall x_1: \srt_1, \ldots, \forall x_k: \srt_k$ in $\varphi$.
Then there is a $\forall\exists$ edge from $\srt_i$ to $\srt$ for every $1 \leq i \leq k$.
\end{itemize}
Intuitively, the quantifier edges correspond to the edges that would arise as function edges if Skolemization is applied.

\paragraph{Extended EPR}
A formula $\varphi$ is \emph{stratified} if its quantifier alternation
graph %
is acyclic.  The \emph{extended EPR}
fragment consists of all stratified formulas.  This fragment maintains
the finite model property and the decidability of EPR.  The reason for
this is that, after Skolemization, the vocabulary of a stratified
formula can only generate a finite set of ground terms. This allows
complete instantiation of the universal quantifiers in the Skolemized
formula, as in EPR. In the sequel, whenever we say a formula is in
EPR, we refer to the extended EPR fragment.

\section{Methodology for Decidable Verification}
\label{sec:transformations}

In this section we explain the general methodology that we follow in our efforts to verify Paxos using decidable reasoning.
While this paper focuses on Paxos and its variant, the methodology is more general and can be useful for verifying other systems as well.

\subsection{Modeling in Uninterpreted First-Order Logic}
\label{sec:model-fol}

The first step in our verification methodology is to express the
protocol as a transition system in many-sorted uninterpreted first-order
logic. This step involves some abstraction,
since protocols usually employ
concepts that are not directly expressible in uninterpreted
first-order logic.

\subsubsection{Axiomatizing Interpreted Domains}

One of the challenges we face is modeling an interpreted domain using \emph{uninterpreted} first-order logic.
Distributed algorithms often use values from interpreted domains, the
most common example being the natural numbers. These domains are usually not precisely expressible in uninterpreted first order logic.

To express an interpreted domain, such as the natural numbers, in uninterpreted first-order logic, we add a sort that represents elements of the interpreted domain, and
uninterpreted symbols to represent the interpreted symbols (e.g. a $\leq$ binary relation). We capture \emph{part} of the intended interpretation of the symbols by introducing \emph{axioms} to the model.
The axioms are a finite set of
first-order logic formulas that are valid in the interpreted
domain. By adding them to the model, we allow the proof of
verification conditions to rely on these axioms. By using only axioms
that are valid in the interpreted domain, we guarantee that any
invariant proved for the first-order model is also valid for the
actual system.

\begin{wrapfigure}{r}{0.35\textwidth}
\vspace{-20pt}
\begin{center}
\begin{tabular}{l}
$\forall x. \; x \leq x$ \\
$\forall x,y,z. \; x \leq y \land y \leq z \to x \leq z $\\
$\forall x,y. \; x \leq y \land y \leq x \to x = y $\\
$\forall x,y. \; x \leq y \lor y \leq x$
\end{tabular}
\end{center}
    \caption{Axiomatization of total order.}
    \label{fig:total-order}
  \vspace{-10pt}
\end{wrapfigure}

One important example for axioms expressible in first-order logic is
the axiomatization of total orders. In many cases, natural numbers are used as a way to enforce a total order on a set of elements.
In such cases, %
we can add a binary relation $\leq$, along with the axioms listed in \Cref{fig:total-order}, which precisely capture the properties of a total order.

\subsubsection{Expressing Higher-Order Logic}
\label{sec:higher-order}

Another hurdle to using first-order logic is the fact that algorithms
and their invariants often use sets and functions as first class
values, e.g. by quantifying over them, sending them in a message, etc.
Consider an algorithm in which messages contain a set of nodes as one
of the message fields. Then, the set of messages sent so far (which
may be part of the state of the system) is a set of tuples, where one of the
elements in the tuples is itself a set of nodes. Similarly, messages may
contain maps, which are naturally modeled by functions (e.g., a
message may contain a map from nodes to values). In such cases, the
invariants needed to prove the algorithms will usually include
higher-order quantification.

While higher-order logic cannot be fully reduced to first-order logic,
it is well-known that we can partly express high-order concepts in
first-order logic in the following way.

\paragraph{Sets} Suppose we want to express
quantification over sets of nodes. We add a new sort called
$\snodeset$, and a binary relation $\rmember : \snode, \snodeset$. We
then use $\rmember(n,s)$ instead of $n \in s$, and express
quantification over sets of nodes as quantification over
$\snodeset$. Typically, we will need to add first-order assumptions
or axioms to correctly express the algorithm and to prove its inductive
invariant. For example, the algorithm may set $s$ to the empty set as
part of a transition. We can translate this in the transition relation
using $\forall x:\snode. \, \neg \rmember(x,s')$ (where $s'$ is the
value of $s$ after the transition).

\paragraph{Functions} Functions can be encoded as first-order elements in a similar
way. Suppose messages in the algorithm contain a map from nodes to
values. In this case, we can add a new first-order sort called
$\smap$, and a function symbol $\rapply : \smap,\snode \to
\svalue$. Then, we can use $\rapply(m,n)$ instead of $m(n)$, and
replace quantification over functions with quantification of the
first-order sort $\smap$. As before, we may need to add axioms that
capture some of the intended second-order meaning of the sort $\smap$.

\medskip
While this encoding is sound (as long as we only use axioms that are valid
in the higher-order interpretation), it cannot be made complete due to the limitation of first-order logic.
However, we did not experience this incompleteness to
be a practical hurdle for verification in first-order logic.

\subsubsection{Semi-Bounded Verification}  Given a transition system in first-order logic with a candidate
inductive invariant, it may still be
undecidable to check the resulting verification condition.
However, bounded verification is decidable, and extremely useful
for debugging the model before continuing with the efforts of unbounded verification.
Contrary to the usual practice of bounding the number of elements in each sort for bounded verification,
we use the quantifier alternation graph to determine only a \emph{subset} of the sorts to bound in order to make verification decidable.
We call this procedure \emph{semi-bounded verification},
and it follows from the observation that whenever we make a sort bounded,
we can remove its node from the quantifier alternation graph.
When the resulting graph becomes acyclic, satisfiability is decidable without bounding the sizes of the remaining sorts.

\subsection{Transformation to EPR Using Derived Relations}
\label{sec:transformation-epr}

\begin{figure}
\centering
\includegraphics[scale=0.4]{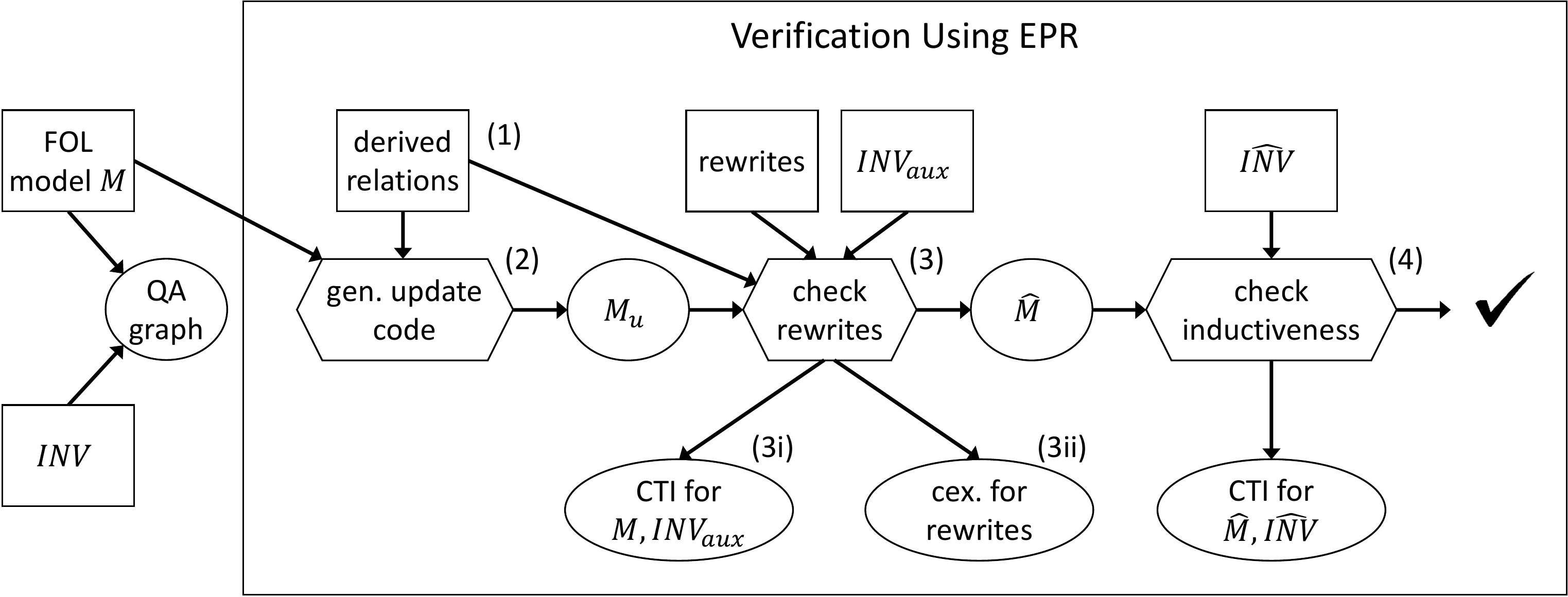}
\caption{
\label{fig:methodology-flow}%
Flow chart of the methodology for verification using EPR.
User provided inputs are depicted as rectangles,
automated procedures are depicted as hexagons, and
automatically generated outputs are depicted as ellipses.
The original first-order logic model $M = (\init,\TR)$,
and the original (first-order logic) inductive invariant $\Inv$
result in a quantifier alternation graph, which guides the process.
The transformation to EPR is carried in steps 1-4, detailed in \Cref{sec:transformation-epr}.
In step 1, the user provides definitions for derived relations.
In step 2, update code is automatically generated, resulting in $M_u = (\initu, \TRu)$.
In step 3, the user provides rewrites that use the derived relations, as well as an auxiliary inductive invariant
to prove the soundness of the rewrites. An automated procedure first checks the soundness of the rewrites
(and provides a counterexample in case they are unsound, or a counterexample to induction if the auxiliary inductive invariant is not inductive), and then outputs the transformed model $\widehat{M} = (\inithat, \TRhat)$.
In step 4, the user provides an inductive invariant $\Invhat$ that proves safety of the transformed model, and an automated check either
verifies the inductiveness or provides a counterexample to induction.
}
\end{figure}

The second step in our methodology for decidable \emph{unbounded}
verification is to transform the model expressed in first-order logic
to a model that has an inductive invariant whose verification
condition is in EPR, and is therefore decidable to check.
The methodology is manual, but following it ensures soundness of the
verification process. The key idea is to use \emph{derived relations}
to simplify the transition relation and the inductive invariant.
Derived relations extend the state of the system and are updated in
its transitions.  Derived relations are somewhat analogous to ghost
variables. However there are two key differences. First, derived
relations are typically not used to record the history of an
execution. Instead, they capture properties of the current state in a
way that facilitates verification using EPR. Second, derived relations
are not only updated in the transitions, but can also affect them.

The transformation of the model using derived relations is conducted
in steps, as detailed below. The various steps are depicted in
\Cref{fig:methodology-flow}.  The inputs provided by the user are
depicted by rectangles, while the automated procedures are depicted as
hexagons, and their outputs are depicted as ellipses.  As illustrated
by the figure, the user is guided by the quantifier alternation graph
of the verification conditions.

In the sequel, we fix a model over a vocabulary $\Sigma$ and let
$\init$ and $\TR$ denote its initial condition and transition
relation, respectively.

\paragraph{(1) Defining a derived relation}
In the first, and most creative part of the process, the user
identifies an existentially quantified formula $\psi(\bar{x})$
that will be captured by a derived relation $r(\bar{x})$. The
selection of $\psi$ is guided by the quantifier alternation graph of the verification condition,
with the purpose of eliminating cycles it contains. Quantifier
alternations in the verification condition originate both from the
model and the inductive invariant. As we shall see, using $r$ will
allow us to eliminate some quantifier alternations.
As an example for demonstrating the next steps, consider a program
defined with a binary relation $p$, and let $r(x)$ be a derived
relation capturing the formula $\psi(x) = \exists y.\, p(x,y)$.

\paragraph{(2) Tracking $\psi$ by $r$} This step automatically extends the model into a model $(\initu, \TRu)$ over vocabulary $\Sigmahat = \Sigma \cup \{r\}$ which makes the same transitions as before, but also updates $r$ to capture $\psi$.
Formally, the transformed model $(\initu, \TRu)$ over $\Sigmahat$ is obtained by adding:
\begin{inparaenum}[(i)]
\item an initial condition that initializes $r$, and
\item \emph{update code} that modifies $r$ whenever the relations mentioned
in $\psi$ are modified.
\end{inparaenum}
The initial condition and update code are automatically generated in a way that guarantees that the following formula is an invariant %
of $(\initu, \TRu)$:
\begin{equation} \label{eq:inst-inv}
\forall \bar{x}.\, r(\bar{x}) \leftrightarrow\psi(\bar{x}).
\end{equation}
We call this invariant the \emph{representation invariant} of $r$.
Our scheme for automatically obtaining $(\initu, \TRu)$ and the class
of formulas $\psi$ that it supports, are discussed in
\Cref{sec:update-code}. In our example, suppose that initially $p$ is
empty. Then, the resulting model would initialize $r$ to be empty as
well. For an action that inserts a pair $(a,b)$ to $p$, the resulting
model would contain update code that inserts $a$ to $r$.

\paragraph{(3) Rewriting the transitions using $r$}
In this step, the user exploits $r$ to eliminate quantifier
alternations in the verification condition by rewriting the system's
transitions, obtaining a model $(\inithat, \TRhat)$ defined over
$\Sigmahat$. The idea is to rewrite the transitions in a way that
ensures that the reachable states are unchanged, while eliminating
quantifier alternations. This is done by rewriting some program
conditions used in {\cassume} commands in the code (e.g., to use $r$
instead of $\psi$, but other rewrites are also possible). The
vocabulary of the model does not change further in this step, nor does
the initial condition (i.e., $\inithat = \initu$).

While the rewrites are performed by the user, we automatically check
that the effect of the modified commands on the reachable states
remains the same (under the assumption of the representation
invariant).  Suppose the user rewrites {\cassume} $\varphi$ to
{\cassume} $\varphihat$. The simplest way to ensure this has the same
effect on the reachable states is to check that the following
\emph{rewrite condition} is valid: $\varphi \leftrightarrow
\substitute{\varphihat}{r(\bar{x})}{\psi(\bar{x})}$.  This condition
guarantees that the two formulas $\varphi$ and $\varphihat$ are
equivalent in any reachable state, due to the representation
invariant. In some cases, the rewrite is such that
$\substitute{\varphihat}{r(\bar{x})}{\psi(\bar{x})}$ is syntactically
identical to $\varphi$, which makes the rewrite condition
trivial.

However, to allow greater flexibility in rewriting the code,
we allow using an EPR check to verify the rewrite condition, and also
relax the condition given above in two ways.
First, we observe that it suffices to verify the equivalence of subformulas of $\varphi$ that were modified by the rewrite. Formally, if $\varphi$ is syntactically identical to $\substitutemany{\varphihat}{\thetahat_1(\bar{y}_1)}{\theta_1(\bar{y}_1)}{\thetahat_k(\bar{y}_k)}{\theta_k(\bar{y}_k)}$, then to establish the rewrite condition, it suffices to prove that for every $1 \leq i \leq k$ the following equivalence is valid:
$\theta_i \leftrightarrow \substitute{\thetahat_i}{r(\bar{x})}{\psi(\bar{x})}$. (The case where $\varphi$ was completely modified is captured by the case where $k=1$, $\varphi = \theta_1$ and $\varphihat = \thetahat_1$.)
Second, and more importantly, recall that we are only interested in
preserving the transitions from reachable states of the system. Thus,
we allow the user to provide an auxiliary invariant $\Iaux$ (by
default $\Iaux = \true$) %
which is used to prove that the reachable transitions remain unchanged
after the transformation. Technically, this is done by automatically
checking that
\begin{enumerate}[(i)]
\item $\Iaux$ is an inductive invariant of $(\init, \TR)$, and
\item the following rewrite condition holds for every $1 \leq i \leq k$:
\begin{equation}
\label{eq:rewrite-cond}
\Iaux \wedge \smallTR \models \theta_i \leftrightarrow \substitute{\thetahat_i}{r(\bar{x})}{\psi(\bar{x})},
\end{equation}
where $\smallTR$ captures additional conditions that guard the
modified {\cassume} command ($\smallTR$ is automatically computed from the program).
\end{enumerate}
These conditions guarantee that the two formulas $\varphi$ and
$\varphihat$ are equivalent whenever the modified {\cassume} command
is executed. To ensure that these checks can be done automatically, we
require that
the corresponding formulas are in EPR.  We note that verifying $\Iaux$
for $(\init, \TR)$ can be possible in EPR even in cases where
verifying safety of $(\init, \TR)$ is not in EPR, since $\Iaux$ can be
weaker (and contain less quantifier alternations) than an invariant
that proves safety.

In our example, suppose the program contains the command
{\cassume}~$\exists y.\, p(a,y)$. Then we could rewrite it to
{\cassume}~$r(a)$.  For a more sophisticated example, suppose that the
program contains the command {\cassume}~$\exists y.\, p(a,y) \land
q(a,y)$, and suppose this command is guarded by the condition $u(a)$
(i.e., the {\cassume} only happens if $u(a)$ holds).  Suppose further
that we can verify that $\Iaux = \forall x,y.\, u(x) \land p(x,y) \to
q(x,y)$ is an invariant of the original system.  Then we could rewrite
the assume command as {\cassume}~$r(a)$ since
$\left(
\forall x,y.\, u(x) \land p(x,y) \to q(x,y)
\right)
\land u(a)
\models (\exists y.\, p(a,y) \land q(a,y)) \leftrightarrow \exists y.\, p(a,y)$.

\paragraph{(4) Providing an inductive invariant}
Finally, the user proves the safety of the transformed model $(\inithat,\TRhat)$ by providing an inductive invariant $\Invhat$ for it,
whose verification condition will be in EPR. Usually this is composed of:
\begin{inparaenum}[(i)]
\item Using $r$ in the inductive invariant as a substitute to using $\psi$. The point here is that using $\psi$ would introduce quantifier alternations, and using $r$ instead avoids them. In our example, the safety proof might require the property that $\forall x.\, u(x) \to \exists y.\, p(x,y)$, and using $r$ we can express this as $\forall x.\, u(x) \to r(x)$.
\item Letting the inductive invariant express some properties that are implied by the representation invariant.
Note that expressing the full representation invariant would typically introduce quantifier alternations that break stratification. However, some properties implied by it
may still be expressible while keeping the verification condition in EPR.
In our example, we may add $\forall x,y.\, p(x,y) \to r(x)$ to the inductive invariant.
Note that adding $\forall x.\, r(x) \to \exists y.\, p(x,y)$ to the inductive invariant would make the verification condition outside of EPR.
\end{inparaenum}

Given $(\inithat,\TRhat)$ and $\Invhat$, we can now automatically derive the verification conditions in EPR and check that they hold.
The following theorem summarizes the soundness of the approach:

\begin{theorem}[Soundness]
\label{thm:soundness}
Let $(\init,\TR)$ be a model over vocabulary $\Sigma$, and $\safe$ be a safety property over $\Sigma$.
If $(\inithat,\TRhat)$ is a model obtained by the above procedure,
and $\Invhat$ is an inductive invariant for it such that $\Invhat \models \safe$,
then $\safe$ holds in all reachable states of $(\init,\TR)$.
\end{theorem}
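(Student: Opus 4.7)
The plan is to show that the set of reachable states of $(\init, \TR)$ coincides, after appropriately reducting to $\Sigma$, with the set of reachable states of $(\inithat, \TRhat)$. Since $\safe$ is a closed formula over $\Sigma$, its truth in a $\Sigmahat$-structure depends only on the $\Sigma$-reduct, so establishing this coincidence together with the given fact that $\Invhat \models \safe$ and $\Invhat$ is inductive for $(\inithat, \TRhat)$ immediately yields safety of $(\init, \TR)$.

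First I would analyze the intermediate model $(\initu, \TRu)$ produced by step~(2). By induction on trace length, every reachable state of $(\initu, \TRu)$ satisfies the representation invariant $\forall \bar{x}.\, r(\bar{x}) \leftrightarrow \psi(\bar{x})$; this is precisely the correctness guarantee of the automatic update-code generation discussed in \Cref{sec:update-code}. Furthermore, because step~(2) only introduces the fresh symbol $r$ together with its update code, the projections of $\TRu$ onto $\Sigma$ agree with $\TR$ and the initial condition $\initu$ agrees with $\init$ on $\Sigma$; hence the $\Sigma$-reducts of the reachable states of $(\initu, \TRu)$ are exactly the reachable states of $(\init, \TR)$. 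As a consequence, any invariant of $(\init, \TR)$ expressible over $\Sigma$ lifts to an invariant of $(\initu, \TRu)$; in particular $\Iaux$, which was automatically verified to be inductive for $(\init, \TR)$, is invariant for $(\initu, \TRu)$ as well.

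Next I would compare $(\initu, \TRu)$ with $(\inithat, \TRhat)$. Their initial conditions coincide by construction ($\inithat = \initu$). For the transition relations, consider any rewrite that replaced a subformula $\theta_i$ of some {\cassume} condition by $\thetahat_i$, with the surrounding guard captured by $\smallTR$. From any reachable state of $(\initu, \TRu)$, both the representation invariant and $\Iaux$ hold, so the verified rewrite condition \eqref{eq:rewrite-cond}, combined with $\forall \bar{x}.\, r(\bar{x}) \leftrightarrow \psi(\bar{x})$, yields $\theta_i \leftrightarrow \thetahat_i$ at the point where the {\cassume} is evaluated. Hence the original and rewritten {\cassume} commands admit the same executions from any reachable state, so $\TRhat$ and $\TRu$ have the same enabled transitions out of reachable states, and by induction the reachable-state sets of the two models coincide. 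Combined with the previous paragraph, the reachable states of $(\inithat, \TRhat)$ restricted to $\Sigma$ equal the reachable states of $(\init, \TR)$. Since $\Invhat$ is inductive for $(\inithat, \TRhat)$ and $\Invhat \models \safe$, every reachable state of $(\inithat, \TRhat)$ satisfies $\safe$, and the desired conclusion follows.

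The main obstacle will be the bookkeeping in step~(3): it is not immediate that per-subformula equivalence $\theta_i \leftrightarrow \thetahat_i$ at the point of a {\cassume}, conditioned on the guard $\smallTR$, yields equivalence between the corresponding transition-relation disjuncts of $\TRu$ and $\TRhat$. Making this precise requires formalizing how $\smallTR$ is extracted from the program context surrounding the rewritten command, verifying that this captures exactly the path condition under which the modified {\cassume} is reached, and then lifting the local equivalence through RML's compositional translation of actions into first-order transition formulas. Once this local-to-global step is discharged, the rest of the argument is a straightforward induction over execution traces.
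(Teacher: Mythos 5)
Your proof is correct and takes essentially the same route as the paper's: the paper packages your reachable-state-equality argument as the observation that $B = \{(\s,\shat) \mid \s \in \Reach,\ \shat \in \Reachhat,\ \shat|_{\Sigma} = \s\}$ is a bisimulation guaranteed by steps 2 and 3, and then transfers $\safe$ (a $\Sigma$-formula) from $(\inithat,\TRhat)$ back to $(\init,\TR)$ exactly as you do. Your explicit pass through $(\initu,\TRu)$, using the representation invariant and $\Iaux$ to justify that the rewritten {\cassume} commands agree on reachable states, merely spells out what the paper's sketch leaves implicit.
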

\begin{proof}[Proof Sketch]
Let $B = \{(\s,\shat) \mid \s \in \Reach \wedge \shat \in \Reachhat
\wedge \shat|_{\Sigma} = \s \}$, where $\Reach$ and $\Reachhat$ denote
the reachable states of $(\init,\TR)$ and $(\inithat,\TRhat)$
respectively, and $\shat|_\Sigma$ denotes the projection of a state
$\shat$ (defined over $\Sigmahat$) to $\Sigma$. Steps 2 and 3 of the
transformation above ensure that $B$ is a bisimulation relation between
$(\init,\TR)$ and $(\inithat,\TRhat)$, i.e., every transition possible
in the reachable states of one of these systems has a corresponding
transition in the other. This ensures that $(\inithat,\TRhat)$ has the
same reachable states as $(\init,\TR)$, up to the addition of relation
$r$. Therefore, any safety property expressed over $\Sigma$ which is
verified to hold in $(\inithat,\TRhat)$ also holds in $(\init,\TR)$.
\end{proof}

As shown in the proof of \Cref{thm:soundness}, the transformed model
$(\inithat,\TRhat)$ is bisimilar to the original model. While this ensures that both are equivalent w.r.t. to the safety property, note that we check safety by checking inductiveness of a candidate invariant. Unlike safety, inductiveness is not necessarily preserved by the transformation. Namely, given a candidate inductive invariant $\Invhat$ which is not inductive for $(\inithat,\TRhat)$, the counterexample to induction cannot in general be transformed to the original model, as it might depend on the derived relations and the rewritten {\cassume} commands. An example of this phenomenon appears in \Cref{sec:joined-rounds}.

\paragraph{Using the methodology}
Our description above explains a final successful verification using
the proposed methodology.  As always, obtaining this involves a series
of attempts, where in each attempt the user provides the verification
inputs, and gets a counterexample. Each counterexample guides the user
to modify the verification inputs, until eventually verification is
achieved. As depicted in \Cref{fig:methodology-flow}, with the EPR
verification methodology, the user provides 5 inputs, and could obtain
3 kinds of counterexamples.  The inputs are the model, the derived
relations, the rewrites, the auxiliary invariant for proving the
soundness of the rewrites, and finally the inductive invariant for the
resulting model. The possible counterexamples are either a
counterexample to inductiveness (CTI) for the auxiliary invariant and
the original model, or a counterexample to the soundness of the
rewrite itself, or a CTI for the inductive invariant of the transformed
model. After obtaining any of the 3 kinds of counterexamples, the user
can modify any one of the 5 inputs. For example, a CTI for the
inductive invariant of the transformed model may be eliminated by changing the
inductive invariant itself, but it may also be overcome by an additional
rewrite, which in turn requires an auxiliary invariant for its
soundness proof. Indeed, we shall see an example of this in
\Cref{sec:joined-rounds}.

The task of managing the inter-dependence between the 5 verification
inputs may seem daunting, and indeed it requires some expertise and
creativity from the user. This is expected, since the inputs from the
user reduce the undecidable problem of safety verification to
decidable EPR checks. This burden on the user is eased by the fact
that for every input, the user always obtains an answer from the
system, either in the form of successful verification, or in the form
of a \emph{finite} counterexample, which is displayed graphically and
guides the user towards the solution. Furthermore, our experience
shows that most of the creative effort is reusable across similar
protocols.  In the verification of all the variants of Paxos we
consider in this work, we use the same two derived relations and very
similar rewrites (as explained in \Cref{sec:paxos-epr,sec:paxos-variants}).

\paragraph{Incompleteness of EPR verification}
While the transformation using a given set of derived relations and
rewrites results in a bisimilar transition system, the methodology for
EPR verification is not complete.  This is expected, as there can be
no complete proof system for safety in a formalism that is
Turing-complete. For the EPR verification methodology, the
incompleteness can arise from several sources. It may happen that
after applying the transformation, the resulting transition system,
while safe, cannot be verified with an inductive invariant that
results in EPR verification conditions. Another potential source for
incompleteness is our requirement that the rewrites should also be
verified in EPR. It can be the case that a certain (sound) rewrite
leads to a system that can be verified using EPR, but the soundness of
the rewrite itself cannot be verified using EPR.  Another potential
source of incompleteness can be the inability to express sufficiently
powerful axioms about the underlying domain. We note that the three
mentioned issues interact with each other, as it may be the case that
a certain axiom is expressible in first-order logic, but it happens to
introduce a quantifier alternation cycle, when considered together
with either the inductive invariant or the verification conditions for
the rewrites.

We consider developing a proof-theoretic understanding of which systems
can and cannot be verified using EPR to be an intriguing direction for
future investigation. We are encouraged by the fact that in practice,
the proposed methodology has proven itself to be powerful enough to verify Paxos and
its variants considered in this work.

\paragraph{Multiple derived relations}
For simplicity, the description above considered a single derived
relation. In practice, we usually add multiple derived relations,
where each one captures a different formula. The methodology remains
the same, and each derived relation allows us to transform the model
and eliminate more quantifier alternations, until the resulting model
can be verified in EPR. In this case, the resulting inductive
invariant may include properties implied by the representation
invariants of several relations and relate them directly. For example,
suppose we add the following derived relations: $r_1(x)$ defined by
$\psi_1(x) = \exists y.\, p(x,y)$, and $r_2(x)$ defined by $\psi_2(x)
= \exists y,z.\, p(x,y) \land p(y,z)$.  Then, the inductive invariant
may include the property: $\forall x.\, r_2(x) \to r_1(x)$.

\paragraph{Overapproximating the reachable states} Our methodology ensures that the transformed model is bisimilar to the original model.
It is possible to generalize our methodology and only require that the modified model simulates the original model, which maintains soundness.
This may allow more flexibility both in the update code and in the manual rewrites performed by the user.

\subsection{Automatic Generation of Update Code}
\label{sec:update-code}

In this subsection, we describe a rather na\"{\i}ve scheme for automatic
generation of initial condition and update code for derived relations,
which suffices for verification of the Paxos variants considered in
this paper. We refer the reader to, e.g.,
\cite{PaigeK82,TOPLAS:RepsSL10}, for more advanced techniques for
generation of update code for derived relations.

We limit the formula $\psi(\bar{x})$ which defines a derived relation
$r(\bar{x})$ to have the following form:
\begin{equation*}
\psi(x_1,\ldots,x_n) = \exists y_1,\ldots,y_m. \; \varphi(x_1,\ldots,x_n,y_1,\ldots,y_m) \land p(x_{i_1},\ldots,x_{i_k},y_1,\ldots,y_m)
\end{equation*}
where $\varphi$ is a quantifier-free formula, $p$ is a relation
symbol and $x_{i_j} \in \{x_1,\ldots,x_n\}$ for every $1 \leq j \leq k$. Note that $p$ occurs positively, and that it depends on
\emph{some} (possibly none) of the variables $x_i$ and \emph{all} of the variables $y_i$.
Our scheme further requires that the relations appearing in $\varphi$
are never modified, and that $p$ is initially empty and only updated
by inserting a single tuple at a time\footnote{These restrictions can
  be relaxed, e.g., to support removal of a single tuple or
  addition of multiple tuples. However, such updates were not needed
  for verification of the protocols considered in this paper, so for simplicity of the presentation we do not handle them.}.

Since $p$ is initially empty, the initial condition for $r(\bar{x})$
is that it is empty as well, i.e.:
\begin{equation*}
  \forall x_1,\ldots,x_n . \; \neg r(x_1,\ldots,x_n).
\end{equation*}
The only updates allowed for $p$ are insertions of a single tuple by a command of the form:\\
\begin{equation*}
p(x_{i_1},\ldots,x_{i_k},y_1,\ldots,y_m) \text{ := } p(x_{i_1},\ldots,x_{i_k},y_1,\ldots,y_m) \lor \left( \bigwedge_{j=1}^k x_{i_j} = a_j \land \bigwedge_{j=1}^m y_j = b_j \right).
\end{equation*}
For such an update, we generate the following update for $r(\bar{x})$:
\begin{equation*}
r(x_1,\ldots,x_n) \text{ := } r(x_1,\ldots,x_n) \lor \left( \varphi(x_1,\ldots,x_n,b_1,\ldots,b_m) \land \bigwedge_{j=1}^k x_{i_j} = a_j \right).
\end{equation*}
Notice that the update code translates to a purely universally
quantified formula, since $\varphi$ is quantifier-free, so it does not
introduce any quantifier alternations.

\begin{lemma}
The above scheme results in a model which maintains the representation
invariant: $\forall \bar{x}.\, r(\bar{x}) \leftrightarrow\psi(\bar{x})$.
\end{lemma}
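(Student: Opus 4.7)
The plan is to argue by induction on the length of an execution of the transformed model $(\initu, \TRu)$, showing that the representation invariant $\forall \bar{x}.\, r(\bar{x}) \leftrightarrow \psi(\bar{x})$ holds in every reachable state. Since the scheme only augments $\init$ with an initializer for $r$ and only adds update code for $r$ alongside updates to $p$ (the relations in $\varphi$ being immutable by assumption), it suffices to check the invariant at the initial states and after each action that might affect either side of the biconditional.

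For the base case, I would observe that $\init$ requires $p$ to be initially empty and that the generated initializer makes $r$ empty. Then for any $\bar{x}$, the right-hand side $\psi(\bar{x}) = \exists \bar{y}.\; \varphi(\bar{x},\bar{y}) \land p(x_{i_1},\ldots,x_{i_k},\bar{y})$ is $\false$ because the $p$-conjunct can never be satisfied, matching $r(\bar{x}) = \false$.

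For the inductive step, fix a transition from state $s$ (where the invariant holds) to state $s'$. If the transition does not update $p$, then by the restriction that relations in $\varphi$ are immutable and that $r$ is only updated in tandem with $p$, both $r$ and $\psi$ have the same interpretation in $s$ and $s'$, so the invariant is preserved. Otherwise, the transition executes the prescribed insertion into $p$ with witnesses $\bar{a}, \bar{b}$, and the generated code performs the corresponding update to $r$. I would then compute $\psi'(\bar{x})$ by substituting the updated $p'$ into the body and distributing $\exists \bar{y}$ over the resulting disjunction:
\begin{align*}
\psi'(\bar{x}) &\;\leftrightarrow\; \exists \bar{y}.\; \varphi(\bar{x},\bar{y}) \land \Bigl(p(x_{i_1},\ldots,x_{i_k},\bar{y}) \lor \bigwedge_{j=1}^{k} x_{i_j} = a_j \land \bigwedge_{j=1}^{m} y_j = b_j\Bigr) \\
&\;\leftrightarrow\; \psi(\bar{x}) \;\lor\; \Bigl(\varphi(\bar{x},\bar{b}) \land \bigwedge_{j=1}^{k} x_{i_j} = a_j\Bigr),
\end{align*}
where the second step uses that only $\bar{y} = \bar{b}$ satisfies the equality conjunct. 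Applying the induction hypothesis $\psi(\bar{x}) \leftrightarrow r(\bar{x})$ on the first disjunct yields exactly the right-hand side of the generated update for $r(\bar{x})$, i.e., $r'(\bar{x})$, closing the induction.

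The only subtle point, and thus the main thing to get right, is the distribution of $\exists \bar{y}$ over the disjunction produced by expanding $p'$: this relies crucially on $p$ occurring positively in $\psi$ and on $\bar{y}$ being fully pinned down by the equality conjuncts introduced by the insertion command, which is why the scheme restricts $\psi$ to that particular syntactic form and allows only single-tuple insertions into $p$. Once this algebraic step is justified, the rest is bookkeeping.
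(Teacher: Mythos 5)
Your proof is correct and follows essentially the same route as the paper: it treats the representation invariant as an inductive invariant of the transformed model, with a trivial initiation case (both $p$ and $r$ empty) and a consecution step that is exactly the first-order validity the paper cites, which you justify by distributing $\exists \bar{y}$ over the disjunction introduced by the single-tuple insertion and using that $\bar{y}$ is pinned to $\bar{b}$. Your explicit handling of transitions that do not touch $p$ (via the immutability of the relations in $\varphi$) is a minor bookkeeping addition the paper leaves implicit, not a different argument.
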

\begin{proof}[Proof Sketch]
The representation invariant is an inductive invariant of the
resulting model. Initiation is trivial, since both $p$ and $r$ are
initially empty. Consecution follows from the following, which is valid in first-order logic:
$
\left(\forall \bar{x}.  r(\bar{x}) \leftrightarrow \psi(\bar{x}) \right)\land
\left(\forall \bar{w}, \bar{y}.  p'(\bar{w},\bar{y}) \leftrightarrow
p(\bar{w},\bar{y}) \lor (\bar{w}=\bar{a} \land \bar{y}=\bar{b}) \right) \land
\left(\forall \bar{x}.  r'(\bar{x}) \leftrightarrow
r(\bar{x}) \lor \left( \varphi(\bar{x},\bar{b}) \land \bigwedge_{j=1}^k x_{i_j} = a_j \right) \right)
$ $
\models
\left(\forall \bar{x}.  r'(\bar{x}) \leftrightarrow \psi'(\bar{x}) \right)
$.
\end{proof}

\section{Introduction to Paxos} \label{sec:intro-to-paxos}

A popular approach for implementing distributed systems is
state-machine replication (SMR) \cite{schneider_implementing_1990},
where a (virtual) centralized sequential state machine is replicated
across many nodes (processors), providing fault-tolerance and exposing to its clients the familiar semantics of a centralized state machine.
SMR can be thought of as repeatedly agreeing on a command to be executed next by the state machine, where each time agreement is obtained by solving a \emph{consensus} problem.
In the consensus problem, a set of nodes each propose a value and then reach agreement on a single proposal.

The Paxos family of protocols is widely used in practice for implementing SMR. Its core is the Paxos consensus algorithm~\cite{paxos,paxos-made-simple}.
A Paxos-based SMR implementation executes a sequence of Paxos consensus instances, with various optimizations.
The rest of this section explains the Paxos consensus algorithm (whose verification in EPR we discuss in~\Cref{sec:paxos-fol,sec:paxos-epr}).
We return to the broader context of SMR in~\Cref{sec:paxos-variants}.

\paragraph{Setting}

We consider a fixed set of nodes, which operate asynchronously and communicate by message passing,
where every node can send a message to every node. Messages can be
lost, duplicated, and reordered, but they are never corrupted.
Nodes can fail by stopping, but otherwise faithfully execute their algorithm.
A stop failure of a node can be captured by a loss of all messages to and from this node.
Nodes must solve the consensus problem: each node has a value to propose and all nodes must eventually decide on a unique value among the proposals.

\paragraph{Paxos consensus algorithm}

We assume that nodes in the Paxos consensus algorithm can all propose values, vote for values, and learn about decisions.
The algorithm operates in a sequence of numbered rounds in which nodes can participate.
At any given time, different nodes may operate in different rounds, and a node stops participating in a round once it started participating in a higher round.
Each round is associated with a single node that is the \emph{owner} of that round.
This association from rounds to nodes is static and known to all nodes.

Every round represents a possibility for its owner to propose a value to the other nodes and get it decided on by having a \emph{quorum} of the nodes vote for it in the round.
Quorums are sets of nodes such that any two quorums intersect (e.g., sets consisting of a strict majority of the nodes).
To avoid the algorithm being blocked by the stop failure of a node which made a proposal, any node can start one of its rounds and make a new proposal in it at any time (in particular, when other rounds are still active) by executing the following two phases:
\begin{description}
\item[phase 1.]
The owner $p$ of round $r$ starts the round by communicating with the other nodes
to have a majority of them \emph{join} round $r$, and to determine which values are \emph{choosable} in lower rounds than $r$, i.e., values that might have or can still be decided in rounds lower than $r$.

\item [phase 2.]
If a value $v$ is choosable in $r'< r$, in order not to contradict a potential decision in $r'$, node $p$ proposes $v$ in round $r$.
If no value is choosable in any $r' < r$, then $p$ proposes a value of its choice in round $r$.
If a majority of nodes \emph{vote} in round $r$ for $p$'s proposal, then it becomes \emph{decided}.
\end{description}
Note that it is possible for different values to be proposed in
different rounds, and also for several decisions to be made in different rounds.
Safety is guaranteed by the fact that (by definition of choosable) a value can be decided in a round $r' < r$  only if it is choosable in $r'$, and that if a value $v$ is choosable in round $r' < r$, then a node proposing in $r$ will only propose $v$.
The latter relies on the property that choosable values from prior rounds cannot be missed.
Next, we describe in more detail what messages the nodes exchange and how a node makes sure not to miss any choosable value from prior rounds.

\emph{Phase 1a}: The owner $p$ of round $r$ sends a ``start-round'' message, requesting all nodes to join round $r$.

\emph{Phase 1b}: Upon receiving a request to join round $r$, a
node will only join if it has not yet joined a higher round. If it
agrees to join, it will respond with a ``join-acknowledgment'' message
that will also contain its maximal vote so far, i.e., its vote in the
highest round prior to $r$, or $\bot$ if no such vote exists.  By
sending the join-acknowledgment message, the node promises that it
will not join or vote in any round smaller than $r$.

\emph{Phase 2a}: After $p$ receives join-acknowledgment messages from a quorum of the nodes, it proposes
a value $v$ for round $r$ by sending a ``propose'' message to all
nodes. Node $p$ selects the value $v$ by taking the maximal
vote reported by the nodes in their join-acknowledgment messages,
i.e., the value that was voted for in the highest round prior to $r$ by any of the
nodes whose join-acknowledgment messages formed the quorum.
As we will see, only this value can be choosable in any $r'<r$ out of all proposals from lower rounds.
If all of these nodes report they have not voted in any prior round,
then $p$ may propose any value.

\emph{Phase 2b}: Upon receiving a propose message proposing value $v$
for round $r$, a node will ignore it if it already joined a round
higher than $r$, and otherwise it will vote for it, by sending a vote
message to all nodes. Whenever a quorum of nodes vote for a
value in some round, this value is considered to be decided. Nodes
learn this by observing the vote messages.

Note that a node can successfully start a new round or get a value decided only if at least one quorum of nodes is responsive. When quorums are taken to be sets consisting of a strict majority of the nodes, this means Paxos tolerates the failure of at most $\floor{n/2}$ nodes, where $n$ is the total number of nodes.
Moreover, Paxos may be caught in a live-lock if nodes keep starting new rounds before any value has a chance to be decided on.

\section{Paxos in First-Order Logic}
\label{sec:paxos-fol}

The first step of our verification methodology is to model the Paxos
consensus algorithm as a transition system in many-sorted first-order
logic over uninterpreted domains. This section explains our model, listed in
\Cref{fig:paxos-fol}, as well as its safety proof via an inductive invariant. %

\lstset{ %
  breakatwhitespace=false,         %
  keywordstyle=\bf,       %
  language=C,                 %
  otherkeywords={module,individual,init,action,returns,assert,assume,instantiate,isolate,mixin,before,relation,function,sort,variable,axiom,then,constant,let,*,local},           %
  numbers=left,                    %
  numbersep=5pt,                   %
  numberstyle=\tiny,               %
  rulecolor=\color{black},         %
  tabsize=8,	                   %
   columns=fullflexible,
}

\begin{figure}
\begin{minipage}{.56\textwidth}
\begin{lstlisting}[
    %
    basicstyle=\scriptsize,%
    keepspaces=true,
    numbers=left,
    %
    xleftmargin=2em,
    numberstyle=\tiny,
    emph={
      %
      %
      %
      %
      %
      %
    },
    emphstyle={\bfseries},
    mathescape=true,
  ]
sort $\snode$, $\squorum$, $\sround$, $\svalue$

relation $\leq$ : $\sround,\sround$
axiom total_order($\leq$)
constant $\none$ : $\sround$

relation $\rmember$ : $\snode,\squorum$
axiom $\forall q_1,q_2 : \squorum. \,  \exists n:\snode. \, \rmember(n,q_1) \land \rmember(n, q_2) \label{line:axiom-quorum}$

relation $\ronea$ : $\sround$
relation $\ronebmaxvote$ : $\snode,\sround,\sround,\svalue$
relation $\rproposal$ : $\sround,\svalue$
relation $\rvote$ : $\snode,\sround,\svalue$
relation $\rdecision$ : $\snode,\sround,\svalue$

init $\forall r. \; \neg\ronea(r)$
init $\forall n,r_1,r_2,v. \; \neg\ronebmaxvote(n,r_1,r_2,v)$
init $\forall r,v. \; \neg\rproposal(r,v)$
init $\forall n,r,v. \; \neg\rvote(n,r,v)$
init $\forall n,r,v. \; \neg\rdecision(n,r,v)$

action $\asendonea(\text{r} : \sround)$ {
  assume $\text{r} \neq \none$
  $\ronea(\text{r})$ := true
}
action $\ajoinround(\text{n} : \snode ,\, \text{r} : \sround)$ {
  assume $\text{r} \neq \none$
  assume $\ronea(\text{r})$
  assume $\neg \exists r',r'',v. \; r' > \text{r} \land \ronebmaxvote(\text{n},r',r'',v) \label{line:join-round-if}$
  # find maximal round in which n voted, and the corresponding vote.
  # maxr = $\bot$ and v is arbitrary when n never voted.
  local maxr, v := max $\{ (r',v') \mid \rvote(\text{n},r',v') \land r' < \text{r}\} \label{line:join-round-max}$
  $\ronebmaxvote(\text{n},\text{r},\text{maxr},\text{v})$ := true $\label{line:join-round-send}$
}
action $\apropose(\text{r} : \sround ,\, \text{q} : \squorum)$ {
  assume $\text{r} \neq \none$
  assume $\forall v. \; \neg\rproposal(\text{r},v) \label{line:propose-assume-unique}$
  # 1b from quorum q
  assume $\forall n. \; \rmember(n, \text{q}) \to \exists r',v. \;  \ronebmaxvote(n,\text{r},r',v)$  $\label{line:propose-assume-ae}$
  # find the maximal round in which a node in the quorum reported
  # voting, and the corresponding vote.
  # v is arbitrary $\text{if}$ the nodes reported not voting.
  local maxr, v := max $\{ (r',v') \mid \exists n. \; \rmember(n, \text{q})$
                                             $\land \ronebmaxvote(n,\text{r},r',v') \land r' \neq \none \} \label{line:propose-max}$
  $\rproposal(\text{r}, \text{v})$ := true $\label{line:propose-send}$  # propose value v
}
action $\acastvote(\text{n} : \snode ,\, \text{r} : \sround ,\, \text{v} : \svalue)$ {
  assume $\text{r} \neq \none$
  assume $\rproposal(\text{r}, \text{v})$
  assume $\neg \exists r',r'',v. \;  r' > \text{r} \land \ronebmaxvote(\text{n},r',r'',v) \label{line:vote-round}$
  $\rvote(\text{n}, \text{r}, \text{v})$ := true
}
action $\adecide(\text{n} : \snode, \text{r} : \sround ,\, \text{v} : \svalue ,\, \text{q} : \squorum)$ {
  assume $\text{r} \neq \none$
  # 2b from quorum q
   assume $\forall n. \; \rmember(n, \text{q}) \to \rvote(n, \text{r}, \text{v})$
  $\rdecision(\text{n}, \text{r}, \text{v})$ := true
}
\end{lstlisting}
\captionof{figure}{Model of Paxos consensus algorithm as a\\transition system in many-sorted first-order logic.}
  \label{fig:paxos-fol}
\end{minipage}%
\begin{minipage}{.43\textwidth}
\begin{minipage}{\textwidth}
\centering
\vspace{-1cm}
\includegraphics[scale=0.3]{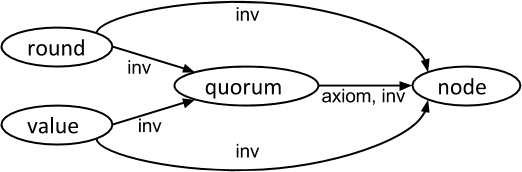}
\captionof{figure}{Quantifier alternation graph for EPR model of Paxos.
The graph is obtained for the model of \Cref{fig:paxos-epr}, and the inductive invariant
given by the conjunction of \cref{eq:safety,eq:proposal-unique,eq:vote-proposal,eq:quorum-of-decision,eq:one_b_1,eq:one_b_2,eq:one_b_3,eq:choosable-epr,eq:proj-msg,eq:proj-left-round}. The $\forall\exists$ edges come from:
(i) the quorum axiom (\Cref{fig:paxos-fol} \cref{line:axiom-quorum}) - edge from $\squorum$ to $\snode$;
(ii) \cref{eq:quorum-of-decision} -- edges from $\sround$ and $\svalue$ to $\squorum$, and an edge from  $\squorum$ to $\snode$ (from the negation of the inductive invariant in the VC); and
(iii) \cref{eq:choosable-epr} -- edges from $\sround$, $\svalue$ and $\squorum$ to $\snode$. }
  \label{fig:paxos-epr-graph}
\end{minipage}

\vspace{0.1cm}

\begin{minipage}{\textwidth}
\begin{lstlisting}[
    %
    basicstyle=\scriptsize,%
    keepspaces=true,
    numbers=left,
    %
    xleftmargin=2em,
    numberstyle=\tiny,
    emph={
      %
      %
      %
      %
      %
      %
    },
    emphstyle={\bfseries},
    mathescape=true,
    escapeinside={(*@}{@*)},
  ]
relation $\rleftround$ : $\snode,\sround$
relation $\ronebproj$ : $\snode,\sround$

init $\forall n,r. \; \neg \rleftround(n,r) \label{line:epr-init-left-round}$
init $\forall n,r. \; \neg \ronebproj(n,r) \label{line:epr-init-joined-round}$

action $\ajoinround(\text{n} : \snode ,\, \text{r} : \sround)$ {
  assume $\text{r} \neq \none$
  assume $\ronea(\text{r})$
  assume $\neg \rleftround(n,r) \label{line:epr-join-round-if}$ # rewritten
  local maxr, v := max $\{ (r',v') \mid$
                              $\rvote(\text{n},r',v') \land r' < \text{r}\} \label{line:epr-join-round-max}$
  $\ronebmaxvote(\text{n},\text{r},\text{maxr},\text{v})$ := true $\label{line:epr-join-round-send}$
  $\text{\# generated update code for derived relations:}$
  $\rleftround(\text{n},R)$ := $\rleftround(\text{n},R) \lor R < \text{r} \label{line:epr-join-round-left}$
  $\ronebproj(\text{n},\text{r})$ := true $\label{line:epr-join-round-proj}$
}
action $\apropose(\text{r} : \sround ,\, \text{q} : \squorum)$ {
  assume $\text{r} \neq \none$
  assume $\forall v. \; \neg\rproposal(\text{r},v) \label{line:epr-propose-assume-unique}$
  # rewritten to aviod quantifier alternation
  assume $\forall n. \; \rmember(n, \text{q}) \to \ronebproj(n,\text{r}) \label{line:epr-propose-assume}$
  # rewritten to use $\rvote$ instead of $\ronebmaxvote$
  local maxr, v := max $\{ (r',v') \mid \exists n. \; \rmember(n, \text{q})$
                              $\land \rvote(n,r',v') \land r' < \text{r} \} \label{line:epr-propose-max}$
  $\rproposal(\text{r}, \text{v})$ := true
}
action $\acastvote(\text{n} : \snode ,\, \text{r} : \sround ,\, \text{v} : \svalue)$ {
  assume $\text{r} \neq \none$
  assume $\rproposal(\text{r}, \text{v})$
  assume $\neg \rleftround(\text{n},\text{r}) \label{line:epr-vote-round}$ # rewritten
  $\rvote(\text{n}, \text{r}, \text{v})$ := true
}
\end{lstlisting}
\captionof{figure}{Changes to the Paxos model that allow verification in EPR.
Declarations that appear in \Cref{fig:paxos-fol} are omitted, as well as the $\adecide$ action which is left unmodified.}
  \label{fig:paxos-epr}
\end{minipage}%
\end{minipage}
\end{figure}

\subsection{Model of the Protocol}

Our model of Paxos involves some abstraction.
Since each round $r$ has a unique owner that will exclusively propose in $r$, we abstract away the owner node and treat the round itself as the
proposer. We also abstract the mechanism by which nodes receive
the values up for proposal, and allow them to propose arbitrary values.

Additional abstractions are needed as some aspects of the protocol cannot
be fully expressed in uninterpreted first-order logic.
One such aspect is the fact that round numbers are integers, as arithmetic cannot be fully captured in first-order logic.
Another aspect which must be abstracted is the use of sets of nodes as quantification over sets is also beyond first-order logic.
We model these aspects %
according to the principles of \Cref{sec:model-fol}:

\paragraph{Sorts and Axioms}
We use the following four uninterpreted sorts:
(i)~$\snode$ - to represent nodes of the system, (ii)~$\svalue$ - to represent the values subject to the consensus algorithm,
(iii)~$\sround$ - to model the rounds of Paxos, and (iv)~$\squorum$ - to model sets of nodes with pairwise intersection in a first-order abstraction. %
While nodes and values are naturally uninterpreted, the rounds and the
quorums are uninterpreted representations of interpreted concepts:
integers and sets of nodes that intersect pairwise, respectively.
We express some features that come from the desired interpretation using
relations and axioms.

For rounds, we include a binary relation $\leq$, and axiomatize it to
be a total order (\Cref{fig:total-order}).
Our model also includes a constant $\none$ of sort $\sround$, which represents %
a special round that is not
considered an actual round of the protocol, and instead serves as a
special value used in the join-acknowledgment (1b) message when a node
has not yet voted for any value. Accordingly, any action assumes that
the round it involves is not $\none$.

The quorum sort is used to represent sets of nodes that contain strictly more
than half of the nodes.  As explained in \Cref{sec:model-fol}, we
introduce a membership relation between nodes and quorums. An
important property for Paxos is that any two quorums intersect. We
capture this with an axiom in first-order logic (\Cref{fig:paxos-fol}
\cref{line:axiom-quorum}).

\paragraph{State}
The state of the protocol consists of the set of messages the nodes
have sent. We represent these using relations, where each tuple in a
relation corresponds to a single message. The relations $\ronea$,
$\ronebmaxvote$, $\rproposal$, $\rvote$ correspond to the 1a, 1b, 2a, 2b
phases of the algorithm, respectively.  In modeling the algorithm, we assume all
messages are sent to all nodes, so the relations do not contain
destination fields. Note that recording messages via relations (i.e., sets) is an abstraction of the network but it is consistent with the messaging model we assume, in which messages may
be lost, duplicated, and reordered.
The $\rdecision$ relation captures the decisions
learned by the nodes.

\paragraph{Actions} The different atomic steps taken by the nodes in the protocol are modeled using actions.  The {\asendonea} action models phase 1a of the protocol, sending a start round message to all nodes.
The {\ajoinround} action models the receipt of a start round message and the
transmission of a join-acknowledgment (1b) message. The {\apropose}
action models the receipt of join-acknowledgment (1b) messages
from a quorum of nodes, and the transmission of a propose (2a) message
which proposes a value for a round. The {\acastvote}
action models the receipt of a propose (2a) message by a node,
and voting for a value by sending a vote (2b) message. Finally, the {\adecide}
action models learning a decision by node $n$, when it is voted
for by a quorum of nodes.

In these actions, sending a message is expressed by inserting the corresponding tuple to the corresponding relation. Different conditions (e.g., not joining a round if already joined  higher round, properly reporting the previous votes, or appropriately selecting the proposed value) are expressed using assume statements.
To prepare a join-acknowledgment message  in {\ajoinround}, as well as to propose a value in {\apropose}, a node needs to compute the maximal vote (performed by it or reported to it, respectively).
This is done by a max operation (\cref{line:join-round-max} and \cref{line:propose-max}) which operates with respect to the order on rounds, and
returns the round $\none$ and an arbitrary value in case the set is
empty. The $r, v := \text{max} \left\{(r',v') \mid \varphi(r',v')
\right\}$ operation is syntactic sugar for an assume of the following formula:
\begin{equation} \label{eq:maxrv}
(r = \none \land \forall r', v'. \; \neg\varphi(r',v')) \lor
(r \neq \none \land \varphi(r, v) \land \forall r', v'. \; \varphi(r',v') \to r' \leq r).
\end{equation}
Note that if $\varphi$ is a purely existentially quantified formula,
then \cref{eq:maxrv} is alternation-free.

\subsection{Inductive Invariant}
\label{sec:paxos-fol-inv}

The key safety property we wish to verify about Paxos is that only a
single value can be decided (it can be decided at multiple rounds, as
long as it is the same value). This is expressed by the following
universally quantified formula:
\begin{small}
\begin{equation} \label{eq:safety}
\forall n_1,n_2:\snode, r_1,r_2:\sround, v_1,v_2:\svalue. \; \rdecision(n_1,r_1,v_1) \land \rdecision(n_2,r_2,v_2) \to v_1 = v_2
\end{equation}
\end{small}

While the safety property holds in all the reachable states of the protocol, it is not inductive.
That is, assuming that it holds is not sufficient to prove that it still holds
after an action is taken. For example, consider a state $s$ in which
$\rdecision(n_1,r_1,v_1)$ holds and there is a quorum $q$ of nodes
such that, for every node $n$ in $q$, $\rvote(n,r_2,v_2)$ holds, with
$v_2\neq v_1$. Note that the safety property holds in $s$. However, a
$\adecide$ action introduces a transition from state $s$ to a state $s'$ in which both
$\rdecision(n_2,r_1,v_1)$ and $\rdecision(n_2,r_2,v_2)$ hold,
violating the safety property.
This counterexample to induction does not indicate a violation of safety,
but it indicates that the safety property needs to be strengthened in order to obtain an inductive invariant.
We now describe such an inductive invariant.

Our inductive invariant contains, in addition to the safety property,
the following rather simple statements that are maintained by the
protocol and are required for inductiveness:
\begin{small}
\begin{align}
& \forall r:\sround, v_1,v_2:\svalue. \; \rproposal(r,v_1) \land \rproposal(r,v_2) \to v_1 = v_2 \label{eq:proposal-unique} \\
& \forall n:\snode, r:\sround, v:\svalue. \; \rvote(n,r,v) \to \rproposal(r,v) \label{eq:vote-proposal}\\
&\!\begin{multlined}
\forall r:\sround, v:\svalue. \\ (\exists n:\snode. \; \rdecision(n,r,v)) \to
\exists q:\squorum. \forall n:\snode. \;\rmember(n, q) \to \rvote(n,r,v)
\end{multlined}
\label{eq:quorum-of-decision}
\end{align}
\end{small}
\Cref{eq:proposal-unique} states that there is a unique proposal per round.
\Cref{eq:vote-proposal} states that a vote for $v$ in round $r$ is cast only when a proposal for $v$ has been made in round $r$.
\Cref{eq:quorum-of-decision} states that a decision for $v$ is made in round $r$ only if a quorum of nodes have voted for $v$ in round $r$.
In addition, the inductive invariant restricts the join-acknowledgment
messages so that they faithfully represent the maximal vote (up to the
joined round), or $\none$ if there are no votes so far, and also
asserts that there are no actual votes at round $\none$:
\begin{small}
\begin{align}
& \forall n:\snode,\, r,r':\sround,\, v,v':\svalue. \; \ronebmaxvote(n,r,\none,v) \land r' < r \to \neg \rvote(n,r',v') \label{eq:one_b_1} \\
& \forall n:\snode,\, r,r':\sround,\, v:\svalue. \; \ronebmaxvote(n,r,r',v) \land r' \neq \none \to  r' < r \land \rvote(n,r',v) \label{eq:one_b_2}\\
&\!\begin{multlined}
\forall n:\snode,\, r,r',r'':\sround,\, v,v':\svalue. \\ \ronebmaxvote(n,r,r',v) \land r' \neq \none \land r' < r'' < r \to
 \neg \rvote(n,r'',v')
\end{multlined}
\label{eq:one_b_3}
\\
& \forall n:\snode, v:\svalue. \; \neg \rvote(n,\none,v) \label{eq:votenone}
\end{align}
\end{small}
The properties stated so far are rather straightforward, and are
usually not even mentioned in paper proofs or explanations of the
protocol. The key to the correctness argument of the protocol is the
observation that when the owner of round $r_2$ proposes a value in $r_2$, it cannot miss any value that is choosable at a lower round:
whenever a value $v_2$ is proposed at round $r_2$, then in all rounds $r_1$ prior to $r_2$, no other value $v_1 \neq v_2$ is choosable.
The property that no $v_1 \neq v_2$ is choosable at $r_1$ is captured in the inductive invariant by the
requirement that in any quorum of nodes, there must be at least one
node that has already left round $r_1$ (i.e., joined a higher round),
and did not vote for $v_1$ at $r_1$ (and hence will also not vote for it in the future). Formally, this is:
\begin{small}
\begin{equation} \label{eq:choosable}
\begin{split}
& \forall r_1,r_2:\sround,v_1,v_2:\svalue, q:\squorum. \;
\rproposal(r_2,v_2) \land r_1 < r_2 \land v_1 \neq v_2 \to \\
& \; \exists n:\snode, r', r'':\sround, v:\svalue. \; \rmember(n,q) \land \neg \rvote(n,r_1,v_1) \land
r' > r_1 \land \ronebmaxvote(n,r',r'',v)
\end{split}
\end{equation}
\end{small}

The fact that this property is maintained by the protocol is obtained
by the proposal mechanism and the interaction between phase 1 and
phase 2 (see \Cref{sec:isabelle} for a detailed explanation). %

\Cref{eq:safety,eq:proposal-unique,eq:vote-proposal,eq:quorum-of-decision,eq:one_b_1,eq:one_b_2,eq:one_b_3,eq:votenone,eq:choosable}
define an inductive invariant that proves the safety of the Paxos
model of \Cref{fig:paxos-fol}. However, the verification condition for
this inductive invariant contains cyclic quantifier alternations, and
is therefore outside of EPR. We now review the quantifier alternations
in the verification condition, which originate both from the model and
from the inductive invariant.

In the model, the axiomatization of quorums (\Cref{fig:paxos-fol}
\cref{line:axiom-quorum}) introduces a $\forall\exists$-edge from
quorum to node. In addition, the assumption in
the $\apropose$ action that join-acknowledgment messages were received from a quorum of
nodes (\cref{line:propose-assume-ae})
introduces $\forall\exists$-edges from node to round and from node to
value.

In the inductive invariant, only \cref{eq:quorum-of-decision,eq:choosable} include quantifier alternations (the rest are universally quantified).
\Cref{eq:quorum-of-decision} has quantifier structure $\forall \sround,
\svalue \, \exists \squorum \, \forall \snode$\footnote{The local existential quantifier in $(\exists n:\snode. \rdecision(n,r,v))$ does not affect the quantifier alternation graph.}. Note that the inductive
invariant appears both positively and negatively in the verification
condition, so \cref{eq:quorum-of-decision} adds $\forall\exists$-edges
from round to quorum and from value to quorum (from the positive
occurrence), as well as an edge from quorum to node (from the negative
occurrence). While the latter coincides with the edge that comes from
the quorum axiomatization (\cref{line:axiom-quorum}), the former edges
closes a cycle in the quantifier alternation graph.
\Cref{eq:choosable} has quantifier prefix $\forall \sround, \svalue,
\squorum \, \exists \snode, \sround, \svalue$. Thus, it introduces 9
edges in the quantifier alternation graph, including self-loops at
round and value.
In conclusion, while the presented model in first-order logic has an
inductive invariant in first-order logic, the resulting verification condition is outside of EPR.

\section{Paxos in EPR}
\label{sec:paxos-epr}

The quantifier alternation graph of the model of Paxos described in
\Cref{sec:paxos-fol} contains cycles. To obtain a safety proof of
Paxos in EPR, we apply the methodology described in
\Cref{sec:transformations} to transform this model in a way that
eliminates the cycles from the quantifier alternation graph. The
resulting changes to the model are presented in \Cref{fig:paxos-epr},
and the rest of this section explains them step by step.

\subsection{Derived Relation for Left Rounds}

We start by addressing the quantifier alternation that appears in
\cref{eq:choosable} as part of the inductive invariant.  We observe
that the following existentially quantified formula appears both as a subformula
there, and in the conditions of the $\ajoinround$ and the $\acastvote$
actions (\Cref{fig:paxos-fol} \cref{line:join-round-if,line:vote-round}):
\begin{equation*} %
\psi_1(n,r) = \exists r',r'':\sround,v:\svalue. \; r' > r \land \ronebmaxvote(n,r',r'',v)
\end{equation*}
This formula captures the fact that node $n$ has joined a round higher
than $r$, which means it promises to never participate in round $r$ in
any way, i.e., it will neither join nor vote in round $r$.
We add a derived relation called $\rleftround$ to capture $\psi_1$, so that
$\rleftround(n,r)$ captures the fact that node $n$ has left round $r$.
The formula $\psi_1$ is in the class of formulas handled by the scheme
described in \Cref{sec:update-code}, and thus we obtain the initial condition
and update code for $\rleftround$. The result appears in \Cref{fig:paxos-epr} \cref{line:epr-init-left-round,line:epr-join-round-left}.

\paragraph{Rewriting (steps 3+4)}
Using the $\rleftround$ relation, we rewrite the conditions of the
$\ajoinround$ and $\acastvote$ actions (\Cref{fig:paxos-epr}
\cref{line:epr-join-round-if,line:epr-vote-round}). These rewrites are
trivially sound as explained in \Cref{sec:transformation-epr} (with a trivial rewrite condition).
We also rewrite \cref{eq:choosable} as:
\begin{equation} \label{eq:choosable-epr}
\begin{split}
\forall & r_1,r_2:\sround,v_1,v_2:\svalue, q:\squorum. \;
\rproposal(r_2,v_2) \land r_1 < r_2 \land v_1 \neq v_2 \to \\
&\exists n:\snode, \; \rmember(n,q) \land \neg \rvote(n,r_1,v_1) \land \rleftround(n,r_1)
\end{split}
\end{equation}
\Cref{eq:choosable-epr} contains less quantifer alternations than
\cref{eq:choosable}, and it will be part of the inductive invariant
that will eventually be used to prove safety.

\subsection{Derived Relation for Joined Rounds} \label{sec:joined-rounds}

After the previous transformation, the verification condition is still
not stratified. The reason is the combination of
\cref{eq:choosable-epr} and the condition of the $\apropose$ action (\Cref{fig:paxos-fol} \cref{line:propose-assume-ae}):
\[
\forall n : \snode. \; \rmember(n, q) \to \exists r' : \sround, \, v:\svalue. \;  \ronebmaxvote(n,r,r',v)
.
\]
While each of these introduces quantifier alternations that are
stratified when viewed separately, together they form
cycles. \Cref{eq:choosable-epr} introduces $\forall\exists$-edges from
$\sround$ and $\svalue$ to $\snode$, while the $\apropose$ condition
introduces edges from $\snode$ to $\sround$ and $\svalue$. The
$\apropose$ condition expresses the fact that every node in the quorum
q has joined round r by sending a join-acknowledgment (1b) message to
round r. However, because the join-acknowledgment message contains two
more fields (representing the node's maximal vote so far), the
condition existentially quantifies over them. To eliminate this
existential quantification and remove the cycles, we add a derived
relation, called $\ronebproj$, that captures the projection of
$\ronebmaxvote$ over its first two components, given by the formula:
\begin{equation*}%
\psi_2(n,r) = \exists r':\sround,v:\svalue. \; \ronebmaxvote(n,r,r',v)
\end{equation*}
This binary relation over nodes and rounds records the sending of
join-acknowledgment messages, ignoring the maximal vote so far
reported in the message. Thus, $\ronebproj(n,r)$ captures the fact
that node $n$ has agreed to join round $r$.
The formula $\psi_2$ is in the class of formulas handled by the scheme
of \Cref{sec:update-code}, and thus we obtain the initial condition
and update code for $\ronebproj$, as it appears in
\Cref{fig:paxos-epr}
\cref{line:epr-init-joined-round,line:epr-join-round-proj}.

\paragraph{Rewriting (steps 3+4): first attempt}
We rewrite the condition of the $\apropose$ action to use $\ronebproj$ instead of $\psi_2$. (The rewrite condition is again trivial, ensuring soundness.)
The result appears in \Cref{fig:paxos-epr}
\cref{line:epr-propose-assume},
and is purely universally quantified.
When considering the transformed model, and the candidate inductive
invariant given by the conjunction of
\cref{eq:safety,eq:proposal-unique,eq:vote-proposal,eq:quorum-of-decision,eq:one_b_1,eq:one_b_2,eq:one_b_3,eq:votenone,eq:choosable-epr},
the resulting quantifier alternation graph is acyclic. %
This means that the verification
condition is in EPR and hence decidable to check. However, it turns
out that this candidate invariant is not inductive,
and the check yields a counterexample to induction.

\begin{figure}
\centering
\includegraphics[scale=0.333]{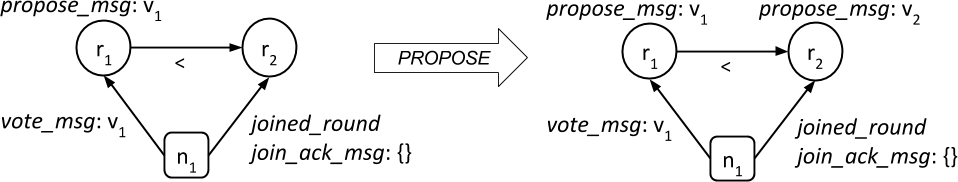}
\caption{
\label{fig:paxos-epr-cti}%
Counterexample to induction of EPR model of Paxos after the first
attempt. The counterexample contains one node $n_1$ (depicted as square),
two rounds $r_1 < r_2$ (depicted as circles), two values $v_1, v_2$, and a single quorum (that contains $n_1$).
The figure displays the the
$\ronebmaxvote,\,\ronebproj,\,\rproposal,\,\rvote$ relations, as well
as the $<$ relation (derived from $\leq$).  The action occurring in the
counterexample is $\apropose$, where an arbitrary value is proposed
for $r_2$, even though node $n_1$ voted for $v_1$ in $r_1$. This
erroneous behavior occurs due to the fact that the representation
invariant of $\ronebproj$ is violated in the pre-state of the counterexample:
$\ronebproj(n_1,r_2)$ holds, eventhough there is no corresponding $\ronebmaxvote$ entry.
This allows a $\apropose$ action to erroneously propose $v_2$ for $r_2$, in spite of the fact that $v_1$ is choosable at $r_1$. 
}
\end{figure}

The counterexample is depicted in \Cref{fig:paxos-epr-cti}.
The counterexample shows a $\apropose$ action that leads to a violation of \cref{eq:choosable-epr}.
The example contains a single node $n_1$ (which forms a quorum), that has voted for value $v_1$ in round $r_1$,
and yet a different value $v_2$ is proposed for a later round $r_2$ (based on the quorum composed only of $n_1$) which leads to a violation of \cref{eq:choosable-epr}.
The $\apropose$ action is enabled since $\ronebproj(n_1,r_2)$ holds.
However, an arbitrary value is proposed since $\ronebmaxvote$ is empty.
The root cause for the counterexample is that the inductive invariant does not capture the connection between
$\ronebproj$ and $\ronebmaxvote$, so it allows a state in which a node
$n_1$ has joined round $r_2$ according to $\ronebproj$ (i.e.,
$\ronebproj(n_1,r_2)$ holds), but it has not joined it according to
$\ronebmaxvote$ (i.e., $\exists r',v. \;
\ronebmaxvote(n_1,r_2,r',v)$ does not hold).
Note that the counterexample is spurious,
in the sense that it does not represent a reachable state. However,
for a proof by an inductive invariant, we must eliminate this
counterexample nonetheless.

\paragraph{Rewriting (steps 3+4): second attempt}
One obvious way to eliminate the counterexample discussed above is to
add the representation invariant of $\ronebproj$ to the inductive
invariant. However, this will result in a cyclic quantifier
alternation, causing the verification condition to be outside of
EPR. Instead, we will eliminate this counterexample by rewriting the
code of the $\apropose$ action, relying on an auxiliary invariant to
verify the rewrite, as explained in \Cref{sec:transformation-epr}.  We
observe that the mismatch between $\ronebproj$ and $\ronebmaxvote$ is
only problematic in this example because node $n_1$ voted in $r_1 <
r_2$. While the condition of the $\apropose$ action is supposed to
ensure that the max operation considers past votes of all nodes in the
quorum, such a scenario where the $\ronebproj$ is inconsistent with
$\ronebmaxvote$ makes it possible for the $\apropose$ action to
overlook past votes, which is the case in this counterexample.  Our
remedy is therefore to rewrite the max operation (which is implemented
by an assume command, as explained before) to consider the votes directly
by referring to the \emph{vote} messages instead of the
\emph{join-acknowledgment} messages that report them. We first
formally state the rewrite and then justify its correctness.

As before, we rewrite the condition of the $\apropose$ action to use
$\ronebproj$ (\Cref{fig:paxos-epr} \cref{line:epr-propose-assume}). In
addition, we rewrite the max operation in \Cref{fig:paxos-fol} \cref{line:propose-max},
i.e., max $\{ (r',v') \mid \varphi_1(r',v') \}$ where
\[
\varphi_1(r',v') = \exists n. \, \rmember(n, q) \land \ronebmaxvote(n,r,r',v') \land r' \neq \none
\]
to the max operation in \Cref{fig:paxos-epr} \cref{line:epr-propose-max},
i.e., max $\{ (r',v') \mid \varphi_2(r',v') \}$ where
\[
\varphi_2(r',v') = \exists n.\, \rmember(n, q) \land \rvote(n,r',v')\land r' < r
\]

The key to the correctness of this change is that a
join-acknowledgment message from node $n$ to round $r$ contains its
maximal vote prior to round $r$, and once the node sent this message,
it will never vote in rounds smaller than $r$. Therefore, while the
original $\apropose$ action considers the maximum over votes reflected
by join-acknowledgment messages from a quorum, looking at the actual
\emph{votes} from the quorum in rounds prior to $r$ (as captured by
the $\rvote$ relation) yields the same maximum.

Formally, we establish the rewrite condition of step 3 given by
\cref{eq:rewrite-cond} using an auxiliary invariant $\Iaux$, defined
as the conjunction of
\cref{eq:proposal-unique,eq:vote-proposal,eq:one_b_1,eq:one_b_2,eq:one_b_3,eq:votenone}.
This invariant captures the connection between $\ronebmaxvote$ and
$\rvote$ explained above. The invariant $\Iaux$ is inductive for the
original model, and its verification condition is in EPR (the
resulting quantifier alternation graph is acyclic). Second, we prove
that under the assumption of $\Iaux$ and the condition $\forall
n.\,\rmember(n, q) \to \exists r',v. \ronebmaxvote(n,r,r',v)$
(\Cref{fig:paxos-fol} \cref{line:propose-assume-ae}), the operation
$\max \{ (r',v') \mid \varphi_1(r',v') \}$ is equivalent to the
operation $\max \{ (r',v') \mid \varphi_2(r',v') \}$ (recall that both
translate to assume's according to \cref{eq:maxrv}). This check
is also in EPR.
In conclusion, we are able to establish the rewrite condition using
two EPR checks: one for proving $\Iaux$, and one for proving
\cref{eq:rewrite-cond}.

\paragraph{Inductive Invariant}
After the above rewrite, the conjunction of
\cref{eq:safety,eq:proposal-unique,eq:vote-proposal,eq:quorum-of-decision,eq:one_b_1,eq:one_b_2,eq:one_b_3,eq:votenone,eq:choosable-epr}
is still not an inductive invariant, due to a counterexample to
induction in which a node has joined a higher round according to
$\ronebproj$, but has not left a lower round according to
$\rleftround$. As before, the counterexample is inconsistent with the
representation invariants. However, this time the counterexample (and
another similar one) can be eliminated by strengthening the inductive
invariant with the following facts, which are implied by the
representation invariants of $\ronebproj$ and $\rleftround$:
\begin{align}
& \forall n:\snode, \, r_1,r_2:\sround. \; r_1 < r_2 \land \ronebproj(n,r_2) \to \rleftround(n,r_1) \label{eq:proj-left-round} \\
& \forall n:\snode, \, r,r':\sround, \, v:\svalue. \; \ronebmaxvote(n,r,r',v) \to  \ronebproj(n,r) \label{eq:proj-msg}
\end{align}
Both are purely universally quantified and therefore do not affect
the quantifier alternation graph.

Finally, the invariant given by the conjunction of
\cref{eq:safety,eq:proposal-unique,eq:vote-proposal,eq:quorum-of-decision,eq:one_b_1,eq:one_b_2,eq:one_b_3,eq:votenone,eq:choosable-epr,eq:proj-msg,eq:proj-left-round}
is indeed an inductive invariant for the transformed model. \Cref{fig:paxos-epr-graph} depicts the quantifier alternation graph of the resulting verification condition. This graph is acyclic, and so the invariant can be verified in EPR.
This invariant proves the safety of the
transformed model (\Cref{fig:paxos-epr}).
Using \Cref{thm:soundness}, the safety of the original Paxos model (\Cref{fig:paxos-fol}) follows.

\section{Multi-Paxos}
\label{sec:multi-paxos}

In this section we describe our verification of
Multi-Paxos\footnote{The version described here is slightly improved
  compared to \cite{oopsla17-epr}, by using the unary $\ractive$
  relation instead of the binary $\ravailable$ relation.}. Multi-Paxos
is an implementation of state-machine replication (SMR): nodes run a
sequence of instances of the Paxos algorithm, where the
$i^{\text{th}}$ instance is used to decide on the $i^{\text{th}}$
command in the sequence of commands executed by the machine.  For
efficiency, when starting a round, a node uses a single message to
simultaneously do so in all instances.  In response, each node sends a
join-acknowledgment message that reports its maximal vote in each
instance; this message has finite size as there are only finitely many
instances in which the node ever voted.  The key advantage over using
one isolated incarnation of Paxos per instance is that when a unique
node takes on the responsibility of starting a round and proposing
commands, only phase 2 of Paxos has to be executed for every proposal.

Below we provide a description of the EPR verification of Multi-Paxos.
The main change compared to the Paxos consensus algorithm is in the
modeling of the algorithm in first-order logic, where we use the
technique of \Cref{sec:higher-order} to model higher-order
concepts. The transformation to EPR is essentially the same as in
\Cref{sec:paxos-epr}, using the same derived relations and rewrites.

\subsection{Model of the Protocol}

\lstset{ %
  breakatwhitespace=false,         %
  keywordstyle=\bf,       %
  language=C,                 %
  otherkeywords={module,individual,init,action,returns,assert,assume,instantiate,isolate,mixin,before,relation,function,sort,variable,axiom,then,constant,let,*,local},           %
  numbers=left,                    %
  numbersep=5pt,                   %
  numberstyle=\tiny,               %
  rulecolor=\color{black},         %
  tabsize=8,	                   %
   columns=fullflexible,
}

\begin{figure}
\begin{lstlisting}[
    %
    basicstyle=\scriptsize,%
    keepspaces=true,
    numbers=left,
    %
    xleftmargin=2em,
    numberstyle=\tiny,
    emph={
      %
      %
      %
      %
      %
      %
    },
    emphstyle={\bfseries},
    mathescape=true,
  ]
sort $\snode$
sort $\squorum$
sort $\sround$
sort $\svalue$
sort $\sinst$
sort $\svotemap$

relation $\leq$ : $\sround,\sround$
axiom total_order($\leq$)
constant $\none$ : $\sround$
relation $\rmember$ : $\snode,\squorum$
axiom $\forall q_1,q_2 : \squorum. \;  \exists n:\snode. \;  \rmember(n,q_1) \land \rmember(n, q_2) \label{line:multi-fol-axiom-quorum}$

relation $\ronea$ : $\sround$
relation $\ronebmaxvote$ : $\snode,\sround,\svotemap$
relation $\rproposal$ : $\sinst,\sround,\svalue$
relation $\ractive$ : $\sround$
relation $\rvote$ : $\snode,\sinst,\sround,\svalue$
relation $\rdecision$ : $\snode,\sinst,\sround,\svalue$
function $\rroundof$ : $\svotemap,\sinst \to \sround$
function $\rvalueof$ : $\svotemap,\sinst \to \svalue$

init $\forall r. \; \neg\ronea(r)$
init $\forall n,r,m. \; \neg\ronebmaxvote(n,r,m)$
init $\forall i,r,v. \; \neg\rproposal(i,r,v)$
init $\forall r. \; \neg\ractive(r)$
init $\forall n,i,r,v. \; \neg\rvote(n,i,r,v)$
init $\forall r,v. \; \neg\rdecision(r,v)$

action $\asendonea(\text{r} : \sround)$ { assume $\text{r} \neq \none$ ; $\ronea(\text{r})$ := true } $\label{line:multi-fol-sendonea}$
action $\ajoinround(\text{n} : \snode ,\, \text{r} : \sround)$ { $\label{line:multi-fol-joinround}$
    assume $\text{r} \neq \none \land \ronea(\text{r}) \land \neg \exists r',m. \; r' > \text{r} \land \ronebmaxvote(\text{n},r',m) \label{line:multi-fol-join-round-if}$
    local m : $\svotemap$ := *
    assume $\forall i. \; (\rroundof(\text{m},i),\rvalueof(\text{m},i)) = \text{max }\left\{ (r',v') \mid \rvote(\text{n},i,r',v') \land r' < \text{r}\right\} \label{line:multi-fol-join-round-max}$
    $\ronebmaxvote(\text{n},\text{r},\text{m})$ := true $\label{line:multi-fol-join-round-send}$
}
action $\ainstate(\text{r} : \sround ,\, \text{q} : \squorum)$ {
    assume $\text{r} \neq \none$
    assume $\neg\ractive(\text{r}) \label{line:multi-fol-instate-assume-once}$
    assume $\forall n. \; \rmember(n, \text{q}) \to \exists m. \;  \ronebmaxvote(n,\text{r},m) \label{line:multi-fol-instate-assume-ae}$
    local m : $\svotemap$ := *
    assume $\forall i. \; (\rroundof(\text{m},i),\rvalueof(\text{m},i)) = \text{max }\left\{ (r',v') \mid \exists n,m'. \; \rmember(n, \text{q}) \land \ronebmaxvote(n,\text{r},m') \, \land \right. \label{line:multi-fol-instate-max}$
                                                                                                                $\left.  r' = \rroundof(m',\text{i}) \land v' = \rvalueof(m',\text{i}) \land r' \neq \none \, \right\}$
    $\ractive(\text{r})$ := true $\label{line:multi-fol-instate-active}$
    $\rproposal(I,\text{r},V)$ := $\rproposal(I,\text{r},V) \lor (\rroundof(\text{m},I) \neq \none \land V = \rvalueof(\text{m},I))$ $\label{line:multi-fol-instate-propose}$
}
action $\aproposenew(\text{r} : \sround ,\, \text{i} : \sinst ,\, \text{v} : \svalue)$ {
    assume $\text{r} \neq \none$
    assume $\ractive(\text{r}) \land \forall v. \; \neg\rproposal(\text{i},\text{r},v) \label{line:multi-fol-propose-available}$
    $\rproposal(\text{r}, \text{v})$ := true $\label{line:multi-fol-propose-send}$
}
action $\acastvote(\text{n} : \snode ,\, \text{i} : \sinst ,\, \text{r} : \sround ,\, \text{v} : \svalue)$ { $\label{line:multi-fol-castvote}$
    assume $\text{r} \neq \none \land \rproposal(\text{i}, \text{r}, \text{v}) \land \neg \exists r',m. \;  r' > \text{r} \land \ronebmaxvote(\text{n},r',m) \label{line:multi-fol-vote-round}$
    $\rvote(\text{n}, \text{i}, \text{r}, \text{v})$ := true
}
action $\adecide(\text{n} : \snode, \text{i} : \sinst ,\, \text{r} : \sround ,\, \text{v} : \svalue ,\, \text{q} : \squorum)$ { $\label{line:multi-fol-decide}$
    assume $\text{r} \neq \none \land \forall n'. \; \rmember(n', \text{q}) \to \rvote(n', \text{i}, \text{r}, \text{v})$
    $\rdecision(\text{n}, \text{i}, \text{r}, \text{v})$ := true
}
\end{lstlisting}
\caption{%
\label{fig:multi-paxos-fol}%
Model of Multi-Paxos as a transition system in many-sorted first-order
logic.%
}
\end{figure}

Multi-Paxos uses the same message types as the basic Paxos algorithm,
but when a node joins a round, it sends a join-acknowledgment message
(1a) with its maximal vote for each instance (there will only be a
finite set of instances for which it actually voted). Upon receipt of
join-acknowledgment messages from a quorum of nodes that join round
$r$, the owner of the round $r$ determines the instances for which it
is obliged to propose a value (because it may be choosable in a prior
round) and proposes values accordingly for these instances.  Other
instances are considered \emph{available}, and in these instances the
round owner can propose any value. Next, the owner of round $r$ can
propose commands in available instances (in response to client
requests, which are abstracted in our model). This means that the
$\apropose$ action of Paxos is split into two actions in Multi-Paxos:
one that processes the join-acknowledgment messages from a quorum and
another that proposes new values.

Our model of Multi-Paxos in first-order logic appears in
\Cref{fig:multi-paxos-fol}. We explain the key differences compared to the Paxos model from \Cref{fig:paxos-fol}.

\paragraph{State}
We extend the vocabulary of the Paxos model with two new sorts:
$\sinst$ and $\svotemap$. The $\sinst$ sort represents instances, and
the $\rproposal$, $\rvote$ and $\rdecision$ relations are extended to
include an instance in each tuple. In practice, instances may be
natural numbers that give the order in which commands of the state
machine must be executed by each replica. However, we are only
interested in proving consistency (i.e., that decisions are unique per
instance), and the consistency proof does not depend on the instances
being ordered. Therefore, our model does not include a total order
over instances.

The $\svotemap$ sort models a map from instances to
$(\sround,\svalue)$ pairs, which are passed in the join-acknowledgment
messages (captured by the relation
$\ronebmaxvote:\snode,\sround,\svotemap$).  We use the encoding
explained in \Cref{sec:higher-order}, and add two functions,
$\rroundof : \svotemap,\sinst \to \sround$ and $\rvalueof :
\svotemap,\sinst \to \svalue$, that allow access to the content of a
$\svotemap$.

We also add a new relation $\ractive : \sround$, to support the
splitting of the $\apropose$ action into two actions. A round is
considered $\ractive$ once the round owner has received a quorum of
join-acknowledgment messages, and then it can start proposing new
values for available instances.

\paragraph{Actions}
The $\asendonea$ action is identical to Paxos, and the $\acastvote$
and $\adecide$ actions are identical except they are now parameterized
by an $\sinst$. The $\ajoinround$ action is identical in principle,
except it now must obtain and deliver a $\svotemap$ that maps each
instance to the maximal vote of the node (and $\none$ for instances in
which the node did not vote). To express the computation of the
maximal vote of every instance i,
we use an assume statement in
\cref{line:multi-fol-join-round-max} of the form $\forall i. \;
(\rroundof(m,i),\rvalueof(m,i)) = \text{max} \left\{ (r',v') \mid
\varphi(i,r',v') \right\}$ which follows a
non-deterministic choice of $\text{m}:\svotemap$.
The assume statement is realized by the following formula in the transition relation (which is an adaptation of \cref{eq:maxrv} to account for multiple instances):
\begin{equation} \label{eq:maxvotemap}
\begin{split}
\forall i. \; &
(\rroundof(m,i) = \none \land \forall r', v'. \; \neg\varphi(i,r',v')) \; \lor \\
& (\rroundof(m,i) \neq \none \land \varphi(i,\rroundof(m,i), \rvalueof(m,i)) \land \forall r', v'. \; \varphi(i,r',v') \to r' \leq \rroundof(m,i))
\end{split}
\end{equation}
Note that for \cref{line:multi-fol-join-round-max}, $\varphi$ is quantifier free,
so \cref{eq:maxvotemap} is purely universally quantified.

The most notable difference in the actions is that, in Multi-Paxos,
the $\apropose$ action is split into two actions: $\ainstate$ and
$\aproposenew$. $\ainstate$ processes the join-acknowledgment
messages from a quorum, and $\aproposenew$ proposes new values, %
modeling the fact that only phase 2 is
repeated for every instance.

The $\ainstate$ action takes place when the owner of a round received
join-acknowledgment messages from a quorum of nodes. The owner then
finds the maximal vote reported in the messages for each instance,
which is done in \cref{line:multi-fol-instate-max}. This is realized
using \cref{eq:maxvotemap}. Observe that $\varphi$ here contains
existential quantifiers over $\snode$ and $\svotemap$. This introduces
quantifier alternation, which results in $\forall\exists$ edges from
$\sinst$ to both $\snode$ and $\svotemap$. Fortunately, these edges do
not create cycles in the quantifier alternation graph. Next, in
\cref{line:multi-fol-instate-active}, the round is marked as active,
and in \cref{line:multi-fol-instate-propose}, all obligatory proposals
are made.

The $\aproposenew$ action models the proposal of a new value in an
available instance, after the $\ainstate$ action took place and the
round is made active. This action occurs due to client requests, which
are abstracted in our model. Therefore, the only preconditions of this
action in our model are that the $\ainstate$ action took place
(captured by the $\ractive$ relation), and that the instance is still
available, i.e., that the round owner has not proposed any other value
for it. These preconditions are expressed in
\cref{line:multi-fol-propose-available}. In practice, a round owner
will choose the next available instance (according to some total
order). However, since this is not necessarily for correctness, our
model completely abstracts the total order over instances, and allows
a new value to be proposed in any available instance.

\subsection{Inductive Invariant}
\label{sec:multi-paxos-inv}

The safety property we wish to prove for Multi-Paxos is that each
Paxos instance is safe. Formally:
\begin{equation} \label{eq:multi-safety}
\begin{split}
& \forall i:\sinst, n_1,n_2:\snode, r_1,r_2:\sround, v_1,v_2:\svalue. \; \\
& \qquad \qquad \qquad \qquad \rdecision(n_1,i,r_1,v_1) \land \rdecision(n_2,i,r_2,v_2) \to v_1 = v_2
\end{split}
\end{equation}
\Cref{eq:multi-safety} generalizes \cref{eq:safety} by universally
quantifying over all instances. The inductive invariant that proves
safety contains similarly generalized versions of
\cref{eq:proposal-unique,eq:vote-proposal,eq:quorum-of-decision,eq:one_b_1,eq:one_b_2,eq:one_b_3,eq:votenone},
where the $\ronebmaxvote$ relation is adjusted to contain a
$\svotemap$ element instead of a $\sround,\svalue$ pair as the message
content. In addition, the inductive invariant asserts that proposals
are only made for active rounds:
\begin{equation} \label{eq:active-proposal}
\forall i:\sinst,r:\sround,v:\svalue. \;
\rproposal(i,r,v) \to \ractive(r)
\end{equation}

Finally, the inductive invariant for Multi-Paxos contains a
generalized version of \cref{eq:choosable}, that states that if round
$r$ is active, any value which is not proposed in it is also not
choosable for lower rounds:
\begin{equation} \label{eq:multi-choosable}
\begin{split}
\forall & i:\sinst,r_1,r_2:\sround,v:\svalue, q:\squorum. \;
\ractive(r_2) \land r_1 < r_2  \land \neg \rproposal(i,r_2,v) \to \\
&\exists n:\snode, r':\sround, m:\svotemap. \; \rmember(n,q) \land \neg \rvote(n,r_1,v) \land
r' > r_1 \land \\ & \qquad \qquad \qquad \qquad \qquad \qquad \qquad \qquad \qquad \qquad \qquad \qquad \qquad \ronebmaxvote(n,r',m)
\end{split}
\end{equation}
This generalizes \cref{eq:choosable}, in which the condition is that
another value is proposed, since proposals are unique by
\cref{eq:proposal-unique}.
\Cref{eq:multi-safety,eq:active-proposal,eq:multi-choosable} together
with
\cref{eq:proposal-unique,eq:vote-proposal,eq:quorum-of-decision,eq:one_b_1,eq:one_b_2,eq:one_b_3,eq:votenone}
(generalized by universally quantifying over all instances) provide an
inductive invariant for the model of \Cref{fig:multi-paxos-fol}.

\subsection{Transformation to EPR}
\label{sec:multi-paxos-epr}

As with the Paxos model of \Cref{fig:paxos-fol}, the resulting
verification condition for the Multi-Paxos model is outside of EPR,
and must be transformed to allow EPR verification. The required
transformations are essentially identical to the Paxos model (the new sorts do not appear in any cycles in the quantifier alternation graph), where we
define the $\rleftround$ relation by:
\begin{equation*}
\psi_1(n,r) = \exists r',m. \; r' > r \land \ronebmaxvote(n,r',m)
\end{equation*}
And the $\ronebproj$ relation by:
\begin{equation*}
\psi_2(n,r) = \exists m. \; \ronebmaxvote(n,r,m)
\end{equation*}
The $\rleftround$ relation is used to rewrite
\cref{eq:multi-choosable} in precisely the same way it was used to
rewrite \cref{eq:choosable}. In addition, we rewrite the max operation
in \cref{line:multi-fol-instate-max} of \Cref{fig:multi-paxos-fol} to
use $\rvote$ instead of $\ronebmaxvote$, which is exactly the same
change that was required to verify the Paxos model. Thus, the
transformations to EPR are in this case completely reusable, and allow
EPR verification of Multi-Paxos.

\section{Paxos Variants}
\label{sec:paxos-variants}

In this section we briefly describe our verification in EPR of several
variants of Paxos. More elaborate explanations are provided in
\Cref{sec:paxos-variants-long}. In all cases, the transformations to
EPR of \Cref{sec:paxos-epr} were employed (with slight modifications),
demonstrating the reusability of the derived relations and rewrites
across different Paxos variants.

\subsection{Vertical Paxos}
\label{sec:vertical-paxos-short}

Vertical Paxos~\cite{lamport_vertical_2009} is a variant of Paxos
whose set of participating nodes and quorums (called the
configuration) can be changed dynamically by an external
reconfiguration master.  By using reconfiguration to replace failed
nodes, Vertical Paxos makes Paxos reliable in the long-term.  The
reconfiguration master dynamically assigns configurations to rounds,
which means that each round uses a different set of quorums.  A
significant algorithmic complication is that old configurations must
be eventually retired in practice.  This is achieved by having the
nodes inform the master when a round $r$ becomes \emph{complete},
meaning that $r$ holds all the necessary information about choosable
values in lower rounds.  The master tracks the highest complete round
and passes it on to each new configuration to indicate that lower
rounds need not be accessed. Rounds below the highest complete round
can then be retired safely.

We model configurations in first-order logic by introducing a new sort
$\sconfig$ that represents a set of quorums, with a suitable member
relation called $\rquorumin$.  Moreover, we change the axiomatization
of quorums to only require that quorums of the same configuration
intersect:
\begin{align*}
&
\forall c:\sconfig, q_1,q_2:\squorum. \;
\rquorumin(q_1,c) \land \rquorumin(q_2,c) \rightarrow
\\
&
\qquad \qquad \qquad \qquad
\exists n:\snode. \rmember(n,q_1) \land \rmember(n, q_2)
\end{align*}
To model the complete round associated to each configuration, we introduce a
function symbol $\rcompleteof: \sconfig \to \sround$.  This function symbol
introduces additional cycles to the quantifier alternation graph. The
transformation to EPR replaces the $\rcompleteof$ function by a derived relation
defined by the formula $\rcompleteof(c) = r$.  With this derived relation, we can
rewrite the model and invariant so that the function no longer appears in the
verification condition, and hence the cycles that it introduced are eliminated.
Other than that, the transformation to EPR uses the same derived relations and
rewrites of \Cref{sec:paxos-epr} (in fact, it only requires the $\rleftround$
derived relation). A full description appears in \Cref{sec:paxos-vertical}.

\subsection{Fast Paxos}

Fast Paxos~\cite{lamport_fast_2006} is a variant of Multi-Paxos that
improves its latency. The key idea is to mark some of the rounds as
\emph{fast} and allow any node to directly propose values in these
rounds without going through the round owner. As a result, multiple
values can be proposed, as well as voted for, in the same (fast) round.
In order to maintain consistency, Fast Paxos uses two kinds of quorums: \emph{classic}
quorums and \emph{fast} quorums. The quorums have the property that
any two classic quorums intersect, and any classic quorum and
\emph{two} fast quorums intersect.
Now, in a propose action receiving join-acknowledgment messages from the
classic quorum $q$ with a maximal vote reported in a fast round $maxr$,
multiple different values may be reported by nodes in $q$ in $maxr$.  To
determine which one may be choosable in $maxr$, a node will check whether there
exists a \emph{fast} quorum $f$ such that all the nodes in $q\cap f$ reported
voting $v$ in $maxr$.  If yes, by the intersection property of quorums, only
this value $v$ may be choosable in $maxr$, and hence must be proposed.

We model fast quorums with an
additional sort $\sfquorum$ (and relation $\rfmember$), and axiomatize its intersection property as:
\[
\forall q:\squorum, f_1,f_2:\sfquorum. \; \exists n:\snode. \;
\rmember(n,q) \land \rfmember(n, f_1) \land \rfmember(n, f_2)
\]
The rest of the details of the model appear in \Cref{sec:fast-paxos}.
The transformation to EPR is similar to \Cref{sec:paxos-epr}.  This
includes the rewrite of the new condition for proposing a value.
Interestingly, in this case, the verification of the latter rewrite is
not in EPR when considering the formulas as a whole, but it is in EPR
when we consider only the subformulas that change (see
\Cref{sec:fast-paxos}).

\subsection{Flexible Paxos}

Flexible Paxos~\cite{howard_flexible_2016} extends Paxos based on the
observation that it is only necessary for safety that every phase 1 quorum intersects
with every phase 2 quorum (quorums of the same phase do not have to intersect).
This allows greater flexibility and introduces an adjustable trade-off between the cost of deciding on new values and
the cost of starting a new round. For example, in a system with 10
nodes, one may use sets of 8 nodes as phase 1 quorums,
and sets of 3 nodes phase 2 quorums.
EPR verification of Flexible Paxos is essentially the same as for
normal Paxos, except we introduce two quorum sorts (for phase 1 and phase 2),
and adapt the intersection axiom to:
\[
\forall q_1:\squorum_1, q_2:\squorum_2. \; \exists n:\snode. \;
\rmember_1(n,q_1) \land \rmember_2(n, q_2)
\]
The detailed model and the adjusted invariant appear in \Cref{sec:flexible-paxos}.

\subsection{Stoppable Paxos}

Stoppable Paxos~\cite{lamport_stoppable_2008} extends Multi-Paxos with the ability for a node to propose a special stop command in order to stop the algorithm, with the guarantee that if the stop command is decided in instance $i$, then no command is ever decided at an instance $j>i$.
Stoppable Paxos therefore enables Virtually Synchronous system reconfiguration~\cite{birman_virtually_2010,chockler_group_2001}: Stoppable Paxos stops in a state known to all participants, which can then start a new instance of Stoppable Paxos in a new configuration (e.g., in which participants have been added or removed); moreover, no pending commands can leak from a configuration to the next, as only the final state of the command sequence is transfered from one configuration to the next.

Stoppable Paxos may be the most intricate algorithm in the Paxos family: as acknowledged by Lamport et al.~\cite{lamport_stoppable_2008}, ``getting the details right was not easy''.
The main algorithmic difficulty in Stoppable Paxos is to ensure that no command may be decided after a stop command while at the same time allowing a node to propose new commands without waiting, when earlier commands are still in flight (which is important for performance).
In contrast, in the traditional approach to reconfigurable SMR~\cite{lamport_reconfiguring_2010}, a node that has $c$ outstanding command proposals may cause up to $c$ commands to be decided after a stop command is decided;
Those commands needs to be passed-on to the next configuration and may contain other stop commands, adding to the complexity of the reconfiguration system.

Before proposing a command in an instance in Stoppable Paxos, a node
must check if other instances have seen stop commands proposed and in
which round.  This creates a non-trivial dependency between rounds and
instances, which are mostly orthogonal concepts in other variants of
Paxos.  This manifest as the instance sort having no incoming edge in
the quantifier alternation graph in other variants, while such edges
appear in Stoppable Paxos.  Interestingly, the rule given by Lamport
et al. to propose commands results in verification conditions that are
not in EPR, and rewriting seems difficult.  However, we found an
alternative rule which results in EPR verification conditions.  This
alternative rule soundly overapproximates the original rule (and
introduces new behaviors), and, as we have verified (in EPR), it also
maintains safety. The details of the modified rule and its
verification appear in \Cref{sec:stoppable-paxos}.

\section{Experimental Evaluation}
\label{sec:exp}

\newcommand{\tepr}{\textbf{EPR}}
\newcommand{\tiaux}{{\mathbf{\Iaux}}}
\newcommand{\trewrite}{\textbf{RW}}
\newcommand{\tfol}{\textbf{FOL}}
\newcommand{\tna}{-}

\begin{figure}
  \begin{footnotesize}
  \arraycolsep=4pt
  \[
  \begin{array}{||l|c|c|c|c|c|c|c|c|c|c|c|c|c|c|c|c|c|c||}
    \hhline{|t:===================:t|}
    &
    \multicolumn{2}{c|}{ \tepr} &
    \multicolumn{2}{c|}{\tiaux} &
    \multicolumn{2}{c|}{\trewrite} &
    \multicolumn{3}{c|}{\tfol-2} &
    \multicolumn{3}{c|}{\tfol-4} &
    \multicolumn{3}{c|}{\tfol-8} &
    \multicolumn{3}{c||}{\tfol-16}
    \\
    \textbf{Protocol} &
    \mu & \sigma &
    \mu & \sigma &
    \mu & \sigma &
    \mu & \sigma & \tau &
    \mu & \sigma & \tau &
    \mu & \sigma & \tau &
    \mu & \sigma & \tau
    \\
    \hhline{||-------------------||}
    \mbox{Paxos} &
    1.0 & 0.1 &
    0.7 & 0 &
    0.5 & 0 &
    1.2 & 0.2 & 0 &
    2.5 & 1.2 & 0 &
    86 & 112 & 2 &
    278 & 65 & 9
    \\
    \hhline{||-------------------||}
    \mbox{Multi-Paxos} &
    1.2 & 0.1 &
    0.8 & 0 &
    0.6 & 0 &
    1.2 & 0.1 & 0 &
    1.8 & 0.4 & 0 &
    107 & 129 & 3 &
    229 & 110 & 7
    \\
    \hhline{||-------------------||}
    \mbox{Vertical Paxos} &
    2.2 & 0.2 &
    \tna & \tna &
    \tna & \tna &
    27 & 10 & 0 &
    47 & 32 & 0 &
    209 & 104 & 5 &
    274 & 78 & 9
    \\
    \hhline{||-------------------||}
    \mbox{Fast Paxos} &
    4.7 & 1.6 &
    0.9 & 0 &
    0.6 & 0 &
    3.7 & 0.9 & 0 &
    19 & 25 & 0 &
    127 & 97 & 2 &
    300 & 0 & 10
    \\
    \hhline{||-------------------||}
    \mbox{Flexible Paxos} &
    1.0 & 0 &
    0.7 & 0 &
    0.5 & 0 &
    1.1 & 0.1 & 0 &
    2.7 & 2.1 & 0 &
    100 & 120 & 2 &
    275 & 75 & 9
    \\
    \hhline{||-------------------||}
    \mbox{Stoppable Paxos} &
    3.8 & 0.9 &
    1.0 & 0 &
    0.6 & 0 &
    186 & 123 & 5 &
    300 & 0 & 10 &
    300 & 0 & 10 &
    300 & 0 & 10
    \\
    \hhline{|t:===================:t|}
  \end{array}
  \]
  \end{footnotesize}
  \caption{
    \label{fig:exp-results}
    Run times (in seconds) of checking verification conditions using IVy and Z3.
    Each experiment was repeated 10 times (with random seeds used for Z3's heuristics).
    $\mu$ reports the mean time, $\sigma$ reports the standard deviation, and $\tau$ reports the number of runs
    that timed out at 300 seconds (where this occurred).
    $\tepr$ is the verification of the EPR model.
    $\tiaux$ is the verification of the auxiliary invariant.
    $\trewrite$ is the verification of the rewrite condition.
    ${\tfol}-N$ is the run time of semi-bounded verification of the first-order model, with bound 2 for values and bound $N$ for rounds (in all variants, bounding the number of values and rounds eliminates cycles from the quantifier alternation graph).
  }
\end{figure}

We have implemented our methodology using the IVy
tool~\cite{ivy,ken_fmcad16}, which uses the Z3 theorem
prover~\cite{z3} for checking verification
conditions. \Cref{fig:exp-results} lists the run times for the
automated checks performed when verifying the different Paxos
variants. The experiments were performed on a laptop running Linux,
with a Core-i7 1.8 GHz CPU. Z3 version 4.5.0 was used, along with the
latest version of IVy (commit
$\texttt{7ce6738}$)\footnote{\Cref{fig:exp-results} varies from
  \cite{oopsla17-epr}, since the Multi-Paxos and the Stoppable Paxos
  models are slightly improved (see \Cref{sec:multi-paxos}), and a
  more recent version of IVy was used.}. Z3 uses heuristics which
employ randomness. Therefore, each experiment was repeated 10 times
using random seeds. We report the mean times, as well as the standard
deviation and the number of experiments which timed out at 300 seconds
(these are included in the mean).  The IVy files used for these
experiments are available at the supplementary web page of this
paper\footnote{\url{http://www.cs.tau.ac.il/~odedp/paxos-made-epr.html}}.

For each variant, \Cref{fig:exp-results} reports the time for checking
the inductive invariant that proves the safety of the EPR model, as
well as the times required to verify auxiliary invariants and rewrite
conditions (see \Cref{sec:transformation-epr}). We also report on the
times required to check the inductive invariant for the original
first-order logic models, using semi-bounded verification. In all
variants, quantifier alternation cycles can be eliminated by bounding
the number of values and rounds. We bound the number of values to 2,
and vary the bound on the number of rounds.

As \Cref{fig:exp-results} demonstrates, using our methodology for EPR
verification results in verification conditions that are solved by Z3
in a few seconds, with no timeouts, and with negligible variance among
runs. In contrast, when using semi-bounded verification, the run time
quickly increases as we attempt to increase the number of
rounds. Moreover, the variance in run time increases significantly,
causing an unpredictable experience for verification users. We have
also attempted to use unbounded verification for the first-order logic
models, but Z3 diverged in this case for all variants. This shows the
practical value of our methodology, as it allows to transform models
whose verification condition cannot be handled by Z3 (and demonstrate
poor scalability and predictability for bounded verification), into
models that can be verified by Z3 in a few seconds.

\section{Related Work}
\label{sec:related}

\paragraph{Automated verification of distributed protocols}

Here we review several works that developed techniques for automated
verification of distributed protocols, and compare them with our
approach.

The Consensus Verification Logic
$\mathbb{CL}$~\cite{dragoi_logic-based_2014} is a logic tailored to
verify consensus algorithms in the partially synchronous Heard-Of
Model~\cite{charron-bost_heard-model:_2009}, with a decidable fragment
that can be used for verification. PSync~\cite{dragoi_psync:_2016} is
a domain-specific programming language and runtime for developing formally
verified implementations of consensus algorithms based on
$\mathbb{CL}$ and the Heard-Of Model. Once the user provides inductive
invariants and ranking functions in $\mathbb{CL}$, safety and liveness
can be automatically verified. PSync's verified implementation of
LastVoting (Paxos in the Heard-Of Model) is comparable in
performance with state-of-the-art unverified systems.

Many interesting theoretical decidability results, as well as the ByMC
verification tool, have been developed based on the formalism of
Threshold Automata
\cite{vienna_book,konnov_smt_2015,konnov_short_2017,DBLP:conf/ershov/KonnovVW15}. This
formalism allows to express a restricted class of distributed
algorithms operating in a partially synchronous communication
mode. This restriction allows decidability results based on cutoff
theorems, for both safety and liveness.

\cite{alberti_counting_2016} present a decidable fragment of
Presburger arithmetic with function symbols and cardinality constraint
over interpreted sets.  Their work is motivated by applications to the
verification of fault-tolerant distributed algorithms, and they
demonstrate automatic safety verification of some fault-tolerant
distributed algorithms expressed in a partially synchronous
round-by-round model similar to PSync.

\newcommand{\sharpie}{\#\Pi} $\sharpie$
\cite{gleissenthall_cardinalities_2016} present a logic that combines
reasoning about set cardinalities and universal quantifiers, along with an
invariant synthesis method. The logic is not decidable, so a sound and
incomplete reasoning method is used to check inductive
invariants. Inductive invariants are automatically synthesized by
method of Horn constraint solving. The technique is applied to
automatically verify a collection of parameterized systems, including
mutual exclusion, consensus, and garbage collection. However,
Paxos-like algorithms are beyond the reach of this verification
methodology since they require more complicated inductive invariants.

\newcommand{\consl}{\mathit{ConsL}} Recently, \cite{MSB17} presented a
cutoff result for consensus algorithms.  They define $\consl$, a
domain specific language for consensus algorithms,
whose semantics is based on the Heard-Of Model. $\consl$ admits a cutoff
theorem, i.e., a parameterized algorithm expressed in $\consl$ is
correct (for any number of processors) if and only if it is correct
up to a some finite bounded number of processors (e.g., for Paxos the bound is 5) .
This theoretical result shows that for algorithms expressible in
$\consl$, verification is decidable. However, $\consl$ is focused on
algorithms for the core consensus problem, and does not support the
infinite-state per process that is needed, e.g., to model Multi-Paxos
and SMR.

The above mentioned works obtain automation (and some decidability) by
restricting the programming model. We note that our approach takes a
different path to decidability compared to these works. We axiomatize
arithmetic, set cardinalities, and other higher-order concepts in an
uninterpreted first-order abstraction. This is in contrast to the
above works, in which these concepts are baked into specially designed
logics and formalisms. Furthermore, we start with a Turing-complete
modeling language and invariants with unrestricted quantifier
alternation, and provide a methodology to reduce quantifier
alternation to obtain decidability.  This allows us to employ a
general-purpose decidable logic to verify asynchronous Paxos,
Multi-Paxos, and their variants, which are beyond the reach of all of
the above works.

\paragraph{Deductive verification in undecidable logic}
IronFleet~\cite{IronFleet} is a verified implementation of SMR, using
the Dafny~\cite{dafny} program verifier, with verified safety and
liveness properties. Compared to our work, this system implementation
is considerably more detailed. The verification using Dafny employs Z3
to check verification conditions expressed in undecidable logics that
combine multiple theories and quantifier alternations. This leads to
great difficulties due to the unpredictability of the solver.  To
mitigate some of this unpredictability, IronFleet adopted a style they
call \emph{invariant quantifier hiding}.  This style attempts to
specify the invariants in a way that reduces the quantifiers that are
explicitly exposed to the solver. Our work is motivated by the
IronFleet experience and observations. The methodology we develop
provides a more systematic treatment of quantifier alternations, and
reduces the verification conditions to a decidable logic.

\paragraph{Verification using interactive theorem provers}

Recently, the Coq~\cite{coq} proof assistant has been used to develop
verified implementations of systems, such as a file system
\cite{fscq}, and shared memory data structures
\cite{DBLP:conf/pldi/SergeyNB15}. Closer to our work is
Verdi~\cite{DBLP:conf/pldi/WilcoxWPTWEA15}, which presents a verified
implementation of an SMR system based on Raft~\cite{raft}, a
Paxos-like algorithm.
This approach requires great effort, due
to the manual process of the proof; developing a
verified SMR system requires many months of work by verification
experts, and proofs measure in thousands of lines.

\cite{rahli_15th_2015, schiper_developing_2014} verify the safety of
implementations of consensus and SMR algorithms in the EventML
programming language. EventML interfaces with the Nuprl theorem
prover, in which proofs are conducted, and uses Nuprl's code
generation facilities.

Other works applied interactive theorem proving to verify Paxos
protocols at the algorithmic level, without an executable
implementation. \cite{DiskPaxos-AFP} proved the correctness of the
Disk Paxos algorithm in Isabelle/HOL~\cite{nipkow_isabelle/hol:_2002},
in about 6,500 lines of proof script. Recently,
\cite{chand_formal_2016} presented safety proofs of Paxos and
Multi-Paxos using the TLA+~\cite{lamport_temporal_1994} specification
language and the TLA Proof System
TLAPS~\cite{chaudhuri_tla+proof_2010}. TLA+ has also been used in
Amazon to model check distributed algorithms~\cite{amazon}. However,
they did not spend the effort required to obtain formal proofs, and
only used the TLA+ models for bug finding via the TLA+ model
checker~\cite{tlc}.

Compared to our approach, using interactive theorem provers requires
more user expertise and is more labor intensive. We note that
part of the difficulty in using an interactive theorem prover lies in
the unpredictability of the automated proof methods available and the
considerable experience needed to write proofs in an idiomatic style
that facilitates automation. An interesting direction of research is
to integrate our methodology in an interactive theorem prover to
achieve predictable automation in a style that is natural to systems
designers.

\paragraph{Works based on EPR}

\cite{ivy} and \cite{ken_fmcad16} have also used EPR to verify
distributed protocols and cache coherence protocols. \cite{ivy}
develops an interactive technique for assisting the user to find
universally quantified invariants (without quantifier alternations).
In contrast, here we use invariants that contain quantifier alternations,
as used in proofs of Paxos protocols.
\cite{ken_fmcad16} goes beyond our work by extracting executable code from the modeling language.
In the future, we plan to apply a similar extraction methodology to Paxos protocols.

In \cite{CAV:IBINS13,POPL:IBILNS14,shachar_thesis} it was shown that
EPR can express a limited form of transitive closure, in the context
of linked lists manipulations. We notice that in the context of our methodology,
their treatment of transitive closure can be considered
as adding a derived relation. This work and our work both show that EPR is
surprisingly powerful, when augmented with derived relations.

In \cite{tacas17-bh}, bounded quantifier instantiation is explored as
a possible solution to the undecidability caused by %
quantifier alternations. This work shares some of the motivation and
challenges with our work, but proposes an alternative solution. The
context we consider here is also wider, since we deal not only with
quantifier alternations in the inductive invariant, but also with
quantifier alternations in the transition relation. \cite{tacas17-bh}
also shows an interesting connection between derived relations and quantifier instantiation, and these insights may
apply to our methodology as well. An appealing future research direction is to 
combine user provided derived relations and rewrites
together with heuristically generated quantifier instantiation.

\section{Conclusion}
\label{sec:conclusion}

In this paper we have shown how to verify interesting distributed
protocols using EPR---a decidable fragment of first-order logic, which is supported
by existing solvers (e.g.,~\cite{z3,DBLP:conf/cade/Korovin08,vampire,DBLP:conf/cav/BarrettCDHJKRT11}). %
To mitigate the gap between the complexity of Paxos-like protocols
and the restrictions of EPR, we developed a methodology for gradually eliminating complications.
While this process requires assistance from the user, its steps are also mechanically checked (in EPR) to guarantee soundness.

We believe that our methodology can be applied to other distributed
protocols as well, as our setting is very general. The generality of
our approach is rooted in the use of first-order logic, with arbitrary
relations and functions, and a Turing-complete imperative language.

While EPR has shown to be surprisingly powerful, we note that it
is not a panacea, as some cyclic quantifier alternations may not be
avoided. Still, we have successfully used our methodology to
eliminate the cycles from several variants of Paxos, which is
considered a rich and complex protocol of great practical importance.

\begin{acks}
We thank
Yotam M. Y. Feldman,
Ken McMillan,
Yuri Meshman,
James R. Wilcox,
and the anonymous referees
for insightful comments which improved this paper.
Padon and Sagiv were supported by the European Research
Council under the European Union's Seventh Framework Program
(FP7/2007--2013) / ERC grant agreement no. [321174-VSSC].
This research was partially supported by
Len Blavatnik and the Blavatnik Family foundation.
This material is based upon work supported by the National Science
Foundation under Grant No. 1655166.
\end{acks}

\bibliography{refs}

\clearpage
\appendix
\section{Paxos Variants}
\label{sec:paxos-variants-long}

\subsection{Vertical Paxos}
\label{sec:paxos-vertical}

Vertical Paxos~\cite{lamport_vertical_2009} is a variant of Paxos whose set of
participating nodes and quorums (called a configuration) can be changed
dynamically by an external reconfiguration master (master for short) assumed
reliable.  Vertical Paxos is important in practice because reconfiguration
allows to replace failed nodes to achieve long-term reliability. Moreover, by
appropriately choosing the quorums, Vertical Paxos can survive the failure of
all but one node, compared to at most $\floor{n/2}$ nodes for Paxos, where $n$
is the total number of nodes.  Finally, the role of master in Vertical Paxos is
limited to managing configurations and requires few resources.  A reliable
master can therefore be implemented cheaply using an independent replicated
state machine.  Vertical Paxos has two variants, Vertical Paxos I and Vertical
Paxos II.  We consider only Vertical Paxos I.

In Vertical Paxos I, a node starts a round $r$ only when directed to by the
master through a \emph{configure-round} message that includes the configuration
to use for round $r$.  This allows the master to change the configuration by
directing a node to start a new round.  Since quorums are now different
depending on the round, and quorums of different rounds may not intersect at
all, we require the owner of $r$ to send start-round messages and to collect
join-acknowledgment messages from at least one quorum in each round $r'<r$ (we
say the the owner of round $r$ accesses all rounds $r'<r$).  However this may
be costly in practice or simply impossible as nodes from old configurations may
not stay reachable forever.

To limit the number of previous rounds that must be accessed when starting a
new round, the master tracks the rounds $r$ for which (a) no value is choosable
at any $r'<r$, or (b) a value $v$ has beed decided.  Such a round $r$ is said
complete.  Note that if $r$ is complete, then rounds lower than $r$ do not need
to be accessed anymore and can safely be retired, because a quorum of
join-acknowledgment messages from round $r$ suffices to compute a value that is
safe with respect to any decision that may have been made below round $r$.

To let the configuration master keep track of complete rounds, a node noticing
that a round has become complete when it receives a quorum of
join-acknowledgment messages notifies the master.  In turn, when directing a
node to start a new round, the master passes on the highest complete round it
knows of along with the new configuration.  A node starting a new round $r$
then only accesses the rounds above the highest complete round $r_c$ associated to the
configuration of $r$, starting from $r_c$.

\subsubsection{FOL model of Vertical Paxos}

We highlight the main changes compared to the FOL model of Paxos.

\paragraph{Axiomatization of configurations}
We model configurations in first-order logic by introducing a new sort
$\sconfig$, the relation $\rquorumin(c:\sconfig, q:\squorum)$, and the function
$\rcompleteof : \; \sconfig \to \sround$ where $\rquorumin(c,q)$ means that
quorum $q$ is a quorum of configuration $c$ and $\rcompleteof(c)=r$ means that
$r$ is the complete round that the master associated to configuration $c$.
Note that the function $\rcompleteof$ never needs to be updated: when the
master must associate a particular complete round $r$ to a configuration $c$,
we simply pick $c$ such that $\rcompleteof(c)=r$ holds.

We change the intersection property of quorums to only require that quorums of
the same configuration intersect, with the following axiom:
\begin{equation}
\forall c, q_1,q_2. \; \rquorumin(c,q_1) \land \rquorumin(c,q_2) \rightarrow \exists n. \; \rmember(n,q_1) \land \rmember(n, q_2)
\end{equation}

\paragraph{State of the master}
The state of the master consists of a single individual $\rmastercomp :
\sround$ that represents the highest complete round that the master knows of.
The individual $\rmastercomp$ is initializedd to the first round.

\paragraph{Messages}
In addition to the messages used in Paxos, there are two kinds of messages exchanged between the nodes and the master.
First, the master sends messages to instruct a round owner to start a round
with a prescribed configuration and complete round. The relation
$\rconfig(r:\sround, c:\sconfig)$ models configure-round messages from the
master, informing the owner of round $r$ that it must start round $r$ with
configuration $c$ and complete round $\rcompleteof(c)$.  Second, nodes send
messages to the master to notify it that a round has become complete.
The relation $\rcompletemsg(r : \sround)$ models a node notifying the master that round $r$ has become complete.

Finally, in contrast to Paxos, start-round messages are not sent to all nodes, but only to the nodes of the
configuration of particular rounds.  Therefore we make the $\ronea(r:\sround,
  \text{rp}:\sround)$ relation binary to model start-round messages from round
  $r$ to the nodes in the configuration of round $\text{rp}$.

\paragraph{Master actions}
The action $\aconfig(r:\sround, c:\sconfig)$ models the master sending a
message to the owner of round $r$ to inform it that it must start round $r$
with configuration $c$ and complete round $\rcompleteof(c)$.  When this action
happens, we say that the master configures round $r$.  The master can perform
this action when round $r$ has not been configured yet and is strictly greater
than the highest complete round known to the master.  Moreover, the master
picks a configuration $c$ whose complete round equals its highest known
complete round.

The action $\amark(r:\sround)$ models the master receiving a notification from
a node that round $r$ has become complete. The master then updates its estimate
of the highest complete round by updating the relation $\rmastercomp$.

\paragraph{Node actions}

As in Paxos, nodes perform five types of actions: $\asendonea$, $\ajoinround$,
$\apropose$, $\acastvote$, and $\adecide$.  The major changes compared to Paxos
is how the owner of a round starts a new round and determines what proposal to
make, i.e. the actions $\asendonea$, $\ajoinround$, and $\apropose$.

In the action $\asendonea(r, c, cr)$ the owner of round $r$ starts $r$ upon
receiving a configure-round message from the master instructing it to start $r$
with a configuration $c$ and a complete round $\rcompleteof(c)$.  The owner of
round $r$ broadcasts one join-round message to each configuration corresponding
to a round $r'$ such that $\rcompleteof(c)\leq r' < r$. 

The action $\ajoinround(n, r, \text{rp})$ differs from Paxos in that it models
node $n$ responding to a start-round message addressed specifically to round
$\text{rp}$.  Node $n$ responds with a join-acknowledgment message
$\ronebmaxvote(n,r,\text{rp},v)$ indicating that $n$ voted for $v$ in
$\text{rp}$.  If $n$ did not vote in $\text{rp}$, the value component of the
join-acknowledgment message is set to $\none$.  Note that the meaning of a
join-acknowledgment message is different from Paxos; in Paxos, a
$\ronebmaxvote(n,r,\text{rmax},v)$ message contains the highest round rmax lower than r
in which $n$ voted.

In the action $\apropose$, the owner of round $r$ makes a proposal to the
configuration $c$ associated to $r$ after receiving enough join-acknowledgment
messages.  The action requires that for each round $r'$ such that
$\rcompleteof(c)\leq r' < r$, the owner of $r$ received join-acknowledgment
messages from a quorum of the configuration of $r'$ (which also requires that
those rounds have a known configuration).  This is modeled in first-order logic
by fixing a function $\relation{\rquorumof} : \; \sround \to \squorum$
and assuming that for every round $r'$ such that $\rcompleteof(c)\leq r' < r$,
the quorum $\relation{\rquorumof}(r')$ is a quorum of the configuration
of $r'$.  
To determine its proposal, the owner of round $r$ checks whether a vote was reported
between $\rcompleteof(c)$ and $r$.  If no vote was reported (i.e. all
$\ronebmaxvote(n,r,r',v)$ have $v=\none$), then the owner of $r$ proposes an
arbitrary value and notifies the master that $r$ is complete, as no value can
be choosable below $r$.  If a vote was reported, then the owner of round $r$
computes the maximal round $maxr$, with $\rcompleteof(c)\leq maxr< r$, in which
a vote $v$ was reported and proposes $v$ (in this case $v$ is may still be
choosable at a lower round, so $r$ is not complete).

The action $\acastvote(n, v, r)$, modeling node $n$ casting a vote for $v$ in
round $r$  remains the same as in Paxos.  The action $\adecide(n, r, c, v, q)$ 
differs from Paxos in that it must take into account the configuration of a
round to determine the quorums that must vote for a value $v$ for $v$ to become
decided.  Moreover, a node making a decision notifies the master that round $r$
has become complete, modeled by updating the $\rcompletemsg$ relation.

The full FOL model of Vertical Paxos appears in~\cref{fig:vertical-paxos-fol-rml}.

\lstset{ %
  breakatwhitespace=false,         %
  keywordstyle=\bf,       %
  language=C,                 %
  otherkeywords={module,individual,init,action,returns,assert,assume,instantiate,isolate,mixin,before,relation,function,sort,variable,axiom,then,constant,let,*,local},           %
  numbers=left,                    %
  numbersep=5pt,                   %
  numberstyle=\tiny,               %
  rulecolor=\color{black},         %
  tabsize=8,	                   %
   columns=fullflexible,
}

\begin{figure}
  \begin{minipage}{\textwidth}
\begin{minipage}{.50\textwidth}
\begin{lstlisting}[
    %
    basicstyle=\scriptsize,%
    keepspaces=true,
    numbers=left,
    %
    xleftmargin=2em,
    numberstyle=\tiny,
    emph={
      %
      %
      %
      %
      %
      %
    },
    emphstyle={\bfseries},
    mathescape=true,
  ]
sort $\snode$, $\squorum$, $\sround$, $\svalue$, $\sconfig$

# in join_ack_msg, none indicates the absence of a vote.
individual $\none$: $\svalue$

relation $\leq$ : $\sround,\sround$
axiom total_order($\leq$)

relation $\rmember$ : $\snode,\squorum$
relation $\rquorumin(q:\squorum, c:\sconfig)$
axiom $\forall c:\sconfig, q_1, q_2. \rquorumin(q1,c) \land \rquorumin(q_2,c)$
  $\to \exists n:\snode. \; \rmember(n, q_1) \land \rmember(n, q_2)$

relation $\ronea$ : $\sround,\sround$
relation $\ronebmaxvote$ : $\snode,\sround,\sround,\svalue$
relation $\rproposal$ : $\sround,\svalue$
relation $\rvote$ : $\snode,\sround,\svalue$
relation $\rdecision$ : $\snode,\sround,\svalue$
relation $\rconfig$ : $\sround,\sconfig$
individual $\rcompletemsg$ : $\sround$
individual $\rcompleteof$ : $\sconfig \to \sround$
# highest $\text{complete}$ round known to the master (master state)
individual $\rmastercomp$ : $\sround$

init $\forall r_1,r_2. \; \neg\ronea(r_1,r_2)$
init $\forall n,r_1,r_2,v. \; \neg\ronebmaxvote(n,r_1,r_2,v)$
init $\forall r,v. \; \neg\rproposal(r,v)$
init $\forall n,r,v. \; \neg\rvote(n,r,v)$
init $\forall n,r,v. \; \neg\rdecision(n,r,v)$
init $\forall r,c. \; \neg\rconfig(r,c)$
init $\forall r . \; \neg\rcompletemsg(r)$
# $\rmastercomp$ is $\text{initially}$ set to the first round:
init $\forall r' . \; \rmastercomp(r)\leq r'$ 

# master $\text{actions}$:
action $\aconfig(r:\sround, c:\sconfig)$ {
  assume $\forall c. \; \neg\rconfig(r,c)$ # r is not configured
  assume $\rmastercomp \le r$
  assume $\rcompleteof(c) = \rmastercomp$
  $\rconfig(r,c)$ := true 
}
action $\amark(r:\sround)$ {
  assume $\rcompletemsg(r)$ # a node sent a "$\text{complete}$" message
  if ($\rmastercomp < r$) {
    $\rmastercomp$ := r
  }
}

# node $\text{actions}$:
action $\asendonea(r : \sround, c:\sconfig, \text{cr}:\sround)$ {
  # receive a configure-round message:
  assume $\rconfig(r,c)$ 
  # get the $\text{complete}$ round sent with the configuration:
  assume cr = $\rcompleteof(c)$ 
  $\ronea(r,R) :=$ 
    $\ronea(r,R) \lor (\text{cr} \leq R \land R<r)$
}

\end{lstlisting}
\end{minipage}%
\begin{minipage}{.50\textwidth}
\begin{lstlisting}[
    %
    basicstyle=\scriptsize,%
    keepspaces=true,
    numbers=left,
    %
    xleftmargin=2em,
    numberstyle=\tiny,
    firstnumber=58,
    emph={
      %
      %
      %
      %
      %
      %
    },
    emphstyle={\bfseries},
    mathescape=true,
  ]
action $\ajoinround(\text{n} : \snode ,\, \text{r} : \sround, \text{rp} : \sround)$ {
  assume $\ronea(r,\text{rp})$
  assume $\not\exists r',\text{rp},v . \; \ronebmaxvote(n,r',\text{rp},v) \land r<r'$
  local v:$\svalue$ {
    if ($\forall v'. \; \neg\rvote(n,\text{rp},v')$) {
      v := $\none$ } 
    else {
      assume $\rvote(r,\text{rp},v)$
    }
    $\ronebmaxvote(n,r,\text{rp},v)$ := true
  }
}
# $\text{local}$ to propose:
individual $\rquorumof$ : $\sround\to\squorum$ 
action $\apropose(\text{r} : \sround ,\, c:\sconfig, \, \text{cr}:\sround)$ {
  $\rquorumof$ := *
  assume $\rconfig(r,c)$
  assume $\rcompleteof(c) = \text{cr}$
  assume $\forall v . \; \neg\rproposal(r,v)$
  # rounds between the $\text{complete}$ round and r must be configured:
  assume $\forall r' . \;  \text{cr}\le r' < r \to \exists c . \rconfig(r',c)$
  # $\rquorumof(r')$ is a quorum of the configuration of $r'$:
  assume $\forall r' .  \; \text{cr}\le r' < r \land \rconfig(r',c) \to$
        $\rquorumin(\rquorumof(r'),c)$
  # got messages from all quorums between cr and r
  assume $\forall r',n . \; \text{cr}\le r' < r \land \rmember(n, \rquorumof(r')) \to$
        $\exists v . \; \ronebmaxvote(n,r,r',v)$
  local maxr:$\sround$, v:$\svalue$ {
    # find the maximal maximal vote in the quorums:
    maxr, v = max $\{ (r',v') \mid$ 
        $\exists n. cr\leq r' \land r'<r \land \rmember(n, \rquorumof(r'))$
        $\land \ronebmaxvote(n,r,r',v') \land v' \neq \none\}$
    if (v = $\none$) {
      v := * # set v to an arbitrary value different from none.
      assume v $\neq \none$
      $\rcompletemsg(r)$ := true # notify master that r is $\text{complete}$
    }
    $\rproposal(r, v)$ := true # propose value v
  }
action $\acastvote(\text{n} : \snode ,\, \text{r} : \sround ,\, \text{v} : \svalue)$ {
    assume $\text{v} \neq \none$
    assume $\rproposal(\text{r}, \text{v})$
    # never joined a higher round:
    assume $\neg \exists r',r'',v. \;  r' > \text{r} \land \ronebmaxvote(\text{n},r',r'',v)$
    $\rvote(\text{n}, \text{r}, \text{v})$ := true
}
action $\adecide(\text{n} : \snode, \text{r} : \sround ,\, c:\sconfig, \, \text{v} : \svalue ,\, \text{q} : \squorum)$ {
    assume $\text{v} \neq \none$
    assume $\rconfig(r,c)$
    assume $\rquorumin(q,c)$
    assume $\forall n. \; \rmember(n, \text{q}) \to \rvote(n, \text{r}, \text{v})$  # 2b from quorum q
    $\rdecision(\text{n}, \text{r}, \text{v})$ := true
    $\rcompletemsg(r)$ := true
}
\end{lstlisting}
\end{minipage}%
\end{minipage}%
\captionof{figure}{
Model of Vertical Paxos in many-sorted first-order logic.
}
\label{fig:vertical-paxos-fol-rml}
\end{figure}

\subsubsection{Inductive Invariant}

As for Flexible Paxos and Fast Paxos, the safety property that we prove is the
same as in Paxos.

The core property of the inductive invariant, on top of the relation between
proposals and choosable values established for Paxos, is that a round $r$
declared complete by a node is such that either (a) no value is choosable in
any round $r'<r$, or (b) there is a value $v$ decided in $r$.

As for Paxos, we start with rather mundane properties that are required for the inductiveness.
The first such property is that there is at most one proposal per round:
\begin{equation}
  \forall r:\sround, v_1,v_2:\svalue. \; \rproposal(r,v_1) \land \rproposal(r,v_2) \to v_1 = v_2.
\end{equation}
There is a most one configuration assigned to a round:
\begin{equation}
  \forall r:\sround, c_1,c_2:\sconfig. \; \rconfig(r,c_1) \land \rconfig(r, c_2) \to c_1 = c_2.
\end{equation}
A start-round message is sent only upon receiving a configure-round message from the master:
\begin{equation}
  \forall r_1,r_2:\sround . \; \ronea(r_1,r_2) \to \exists c:\sconfig . \; \rconfig(r_1, c).\label{eq:start-imp-exists-config}
\end{equation}
A start-round message starting round $r_1$ is sent only to round strictly lower than $r_1$:
\begin{equation}
  \forall r_1,r_2:\sround . \; \ronea(r_1,r_2) \to r_2<r_1.
\end{equation}
A node votes only for a proposal:
\begin{equation}
  \forall r:\sround, n, v:\svalue. \; \rvote(n,r,v) \to \rproposal(r,v)\\
\end{equation}
A proposal is made only when the configuration of lower rounds is known:
\begin{equation}
  \forall r_1,r_2 : \sround, v:\svalue . \; \rproposal(r_2,v) \land r_1<r_2 \to \exists c:\sconfig . \; \rconfig(r_1,c).\label{eq:vp-lower-than-proposal-config-exists}
\end{equation}
The special round $\none$ is never used for deciding a value:
\begin{equation}
  \forall r:\sround,n:\snode . \; \neg\rproposal(r,\none) \land \neg \rvote(n,r,\none) \land \neg\rdecision(n,r,\none).
\end{equation}
A join-acknowledgment message is sent only in response to a start-round message:
\begin{equation}
\forall n,r_1,r_2,v . \; \ronebmaxvote(n,r_1,r_2,v) \to \ronea(r_1,r_2).
\end{equation}
The join-acknowledgment messages faithfully represent votes:
\begin{align}
  & \forall n,r_1,r_2,v . \; \ronebmaxvote(n,r_1,r_2,\none) \to \neg\rvote(n,r_2,v)\\
  & \forall n,r_1,r_2,v . \; \ronebmaxvote(n,r_1,r_2,v) \land v \neq \none \to \rvote(n,r_2,v).
\end{align}
A configure-round message configuring round $r$ contains a complete round lower than $r$ and lower than the highest complete round known to the master:
\begin{equation}
  \forall r,c,\text{cr} . \; \rconfig(r,c) \land \rcompleteof(c,\text{cr}) \to \text{cr} \leq r \land \text{cr} \leq \rmastercomp.
\end{equation}
Any round lower than a round appearing in a complete messages has been configured:
\begin{equation}
  \forall r_1,r_2 . \; \rcompletemsg(r_2) \land r_1\leq r_2 \to \exists c . \; \rconfig(r_1,c).\label{eq:vp-lower-complte-exists-config}
\end{equation}
The highest complete round known to the master or the complete round that the master assigns to a configuration has been marked complete by a node or is the first round:
\begin{equation}
  \begin{split}
    \forall r_1,r_2,r_3,c . & \; (r_2 = \rmastercomp \lor (\rconfig(r_3,c) \land \rcompleteof(c) = r_2) \land r_1<r_2 \to \\
                            &\rcompletemsg(r_2).
  \end{split}
\end{equation}
A decisions comes from a quorum of votes from a configured round:
\begin{equation}
  \begin{split}
    \forall r,v .\; &(\exists n .\; \rdecision(n,r,v)) \to \\
                    &\exists c:\sconfig, q . \; \rconfig(r,c) \land \rquorumin(q,c) \land \\
                    &\qquad (\forall n . \rmember(n,q) \to \rvote(n,rv)).\label{eq:decision-from-quorum-config}
  \end{split}
\end{equation}

Note that the number of the invariants above may seem overwhelming, but those invariants are easy to infer from the counterexamples to induction displayed graphically by IVy when they are missing. 

Finally, the two crucial invariants of Vertical Paxos I relate, first, proposals to choosable values (similarly to Paxos), and, second, relate rounds declared complete to choosable values:
\begin{align}
  \begin{split}
    \forall r_1,r_2,& v_1,v_2,q,c . \\
                    &r_1<r_2 \land \rproposal(r_2,v_2) \land v_1\neq v_2 \land \rconfig(r_1,c) \land \rquorumin(q,c) \to\\
                    & \exists n, r_3,r_4,v .\; \rmember(n,q) \land r_1<r_3 \land \ronebmaxvote(n,r_3,r_4,v) \land \neg\rvote(n,r_1,v_1)
  \end{split}\\
  \begin{split}
    \forall r_1, r_2, &c, q . \\
                      &\rcompletemsg(r_2) \land r_1<r_2 \land \rconfig(r_1,c) \land \rquorumin(q,c))\land\\
                      &\neg(\exists n,r_3,r_4,v . \rmember(n,q) \land r_1<r_3 \land \ronebmaxvote(n,r_3,r_4,v) \land \neg\rvote(n,r,v))\to\\
                      &\quad\qquad \exists n. \rdecision(n,r,v)
  \end{split}
\end{align}

\subsubsection{Transformation to EPR}

As in Paxos, we start by introducing the derived relation $\rleftround(n,r)$ with representation invariant 
\begin{equation}
  \forall n, r . \; \rleftround(n, r) \leftrightarrow \exists r'>r,\text{rp},v . \; \ronebmaxvote(n, r', \text{rp}, v)
\end{equation}
and we rewrite the $\ajoinround$ and $\apropose$ accordingly.

Then, we notice that the function $\rcompleteof : \; \sconfig \to \sround$, together with~\cref{eq:vp-lower-than-proposal-config-exists,eq:vp-lower-complte-exists-config,eq:decision-from-quorum-config,eq:start-imp-exists-config}, creates a cycle in the quantifier alternation graph involving the sort $\sconfig$ and $\sround$.
We eliminate this cycle by introducing the derived relation $\rcompleteof(c:\sconfig, r:\sround)$ (here we overload $\rcompleteof$ to represent both the function and the relation) with representation invariant 
\begin{equation}
  \forall c, r . \; \rcompleteof(c) = r \leftrightarrow \rcompleteof(c,r)
\end{equation}
We rewrite  the $\aconfig$, $\asendonea$, and $\apropose$ actions to use the $\rcompleteof$ in its relation form instead of the function.
The function $\rcompleteof$ is now unused in the model and we remove it, thereby eliminating the cycle in the quantifier alternation graph involving the sorts $\sconfig$ and $\sround$.

After this step we observe that the quantifier alternation graph is acyclic, and IVy successfully verifies that the invariant is inductive. 

Surprisingly, the transformation of the Vertical Paxos I model to EPR is simpler than the transformation of the Paxos model to EPR because we do not need to introduce the derived relation $\ronebproj$ and, in consequence, to rewrite the propose action to use $\rvote$ instead of $\ronebmaxvote$.
The derived relation $\ronebproj$ is not needed because, compared to Paxos, the assume condition of the $\apropose$ action contains the formula 
\begin{equation}
  \forall r' n . \; \text{cr}\leq r' \land r'<r \land \rmember(n, \relation{\rquorumof(c))} \to \exists v:\svalue . \ronebmaxvote(n,r,r',v)\label{eq:vp-nice-ae}
\end{equation}
instead of the formula 
\begin{equation}
\forall n : \snode. \; \rmember(n, q) \to \exists r' : \sround, \, v:\svalue. \;  \ronebmaxvote(n,r,r',v).\label{eq:vp-not-nice-ae}
\end{equation}
Notice how~\cref{eq:vp-nice-ae} introduces an edge in the quantifier alternation graph from sort $\snode$ to $\svalue$ only, whereas~\cref{eq:vp-not-nice-ae} introduces an edge from sort $\sround$ to sorts $\svalue$ and $\sround$.
In combination with the conjunct of the inductive invariant that expresses the relation between choosable values and proposals, \cref{eq:vp-not-nice-ae} introduces a cycle in the quantifier alternation graph, whereas~\cref{eq:vp-nice-ae} does not. 

The full model of Vertical Paxos in EPR appears in~\cref{fig:vertical-paxos-epr-rml}

\lstset{ %
  breakatwhitespace=false,         %
  keywordstyle=\bf,       %
  language=C,                 %
  otherkeywords={module,individual,init,action,returns,assert,assume,instantiate,isolate,mixin,before,relation,function,sort,variable,axiom,then,constant,let,*,local},           %
  numbers=left,                    %
  numbersep=5pt,                   %
  numberstyle=\tiny,               %
  rulecolor=\color{black},         %
  tabsize=8,	                   %
   columns=fullflexible,
}

\begin{figure}
  \begin{minipage}{\textwidth}
\begin{minipage}{.50\textwidth}
\begin{lstlisting}[
    %
    basicstyle=\scriptsize,%
    keepspaces=true,
    numbers=left,
    %
    xleftmargin=2em,
    numberstyle=\tiny,
    emph={
      %
      %
      %
      %
      %
      %
    },
    emphstyle={\bfseries},
    mathescape=true,
  ]
sort $\snode$, $\squorum$, $\sround$, $\svalue$, $\sconfig$

# in join_ack_msg, none indicates the absence of a vote.
individual $\none$: $\svalue$

relation $\leq$ : $\sround,\sround$
axiom total_order($\leq$)

relation $\rmember$ : $\snode,\squorum$
relation $\rquorumin(q:\squorum, c:\sconfig)$
axiom $\forall c:\sconfig, q_1, q_2. \rquorumin(q1,c) \land \rquorumin(q_2,c)$
  $\to \exists n:\snode. \rmember(n, q_1) \land \rmember(n, q_2)$

relation $\ronea$ : $\sround,\sround$
relation $\ronebmaxvote$ : $\snode,\sround,\sround,\svalue$
relation $\rproposal$ : $\sround,\svalue$
relation $\rvote$ : $\snode,\sround,\svalue$
relation $\rdecision$ : $\snode,\sround,\svalue$
relation $\rconfig$ : $\sround,\sconfig$
individual $\rcompletemsg$ : $\sround$
relation $\rcompleteof$ : $\sconfig, \sround$ # replaced the $\text{function}$
# immutable, so we can use $\text{axiom}$:
axiom $\forall c,r_1,r_2.\; \rcompleteof(c,r_1) \land \rcompleteof(c,r_2) \to $
  $r_1 = r_2$ 
# highest $\text{complete}$ round known to the master (master state)
individual $\rmastercomp$ : $\sround$
relation $\rleftround$ : $\snode,\sround$

init $\forall r_1,r_2. \; \neg\ronea(r_1,r_2)$
init $\forall n,r_1,r_2,v. \; \neg\ronebmaxvote(n,r_1,r_2,v)$
init $\forall r,v. \; \neg\rproposal(r,v)$
init $\forall n,r,v. \; \neg\rvote(n,r,v)$
init $\forall n,r,v. \; \neg\rdecision(n,r,v)$
init $\forall r,c. \; \neg\rconfig(r,c)$
init $\forall r . \; \neg\rcompletemsg(r)$
# $\rmastercomp$ is $\text{initially}$ set to the first round:
init $\forall r' . \; \rmastercomp(r)\leq r'$ 

# master $\text{actions}$:
action $\aconfig(r:\sround, c:\sconfig)$ {
  assume $\forall c. \; \neg\rconfig(r,c)$ # r is not configured
  assume $\rmastercomp \le r$
  assume $\rcompleteof(c) = \rmastercomp$
  $\rconfig(r,c)$ := true 
}
action $\amark(r:\sround)$ {
  assume $\rcompletemsg(r)$ # a node sent a "$\text{complete}$" message
  if ($\rmastercomp < r$) {
    $\rmastercomp$ := r
  }
}

# node $\text{actions}$:
action $\asendonea(r : \sround, c:\sconfig, \text{cr}:\sround)$ {
  # receive a configure-round message:
  assume $\rconfig(r,c)$ 
  # get the $\text{complete}$ round sent with the configuration:
  assume $\rcompleteof(c,\text{cr})$ 
  $\ronea(r,R) :=$
    $\ronea(r,R) \lor (\text{cr} \leq R \land R<r)$
}

\end{lstlisting}
\end{minipage}%
\begin{minipage}{.50\textwidth}
\begin{lstlisting}[
    %
    basicstyle=\scriptsize,%
    keepspaces=true,
    numbers=left,
    %
    xleftmargin=2em,
    numberstyle=\tiny,
    firstnumber=62,
    emph={
      %
      %
      %
      %
      %
      %
    },
    emphstyle={\bfseries},
    mathescape=true,
  ]
action $\ajoinround(\text{n} : \snode ,\, \text{r} : \sround, \text{rp} : \sround)$ {
  assume $\ronea(r,\text{rp})$
  assume $\neg \rleftround(n,r)$ # rewritten
  local v:$\svalue$ {
    if ($\forall v'. \; \neg\rvote(n,\text{rp},v')$) {
      v := $\none$ } 
    else {
      assume $\rvote(r,\text{rp},v)$
    }
    $\ronebmaxvote(n,r,\text{rp},v)$ := true
    $\rleftround(\text{n},R)$ := $\rleftround(\text{n},R) \lor R < \text{r}$
  }
}
# $\text{local}$ to propose:
individual $\rquorumof$ : $\sround\to\squorum$ 
action $\apropose(\text{r} : \sround ,\, c:\sconfig, \, \text{cr}:\sround)$ {
  $\rquorumof$ := *
  assume $\rconfig(r,c)$
  assume $\rcompleteof(c,\text{cr})$
  assume $\forall v . \; \neg\rproposal(r,v)$
  # rounds between the $\text{complete}$ round and r must be configured:
  assume $\forall r' . \;  \text{cr}\le r' < r \to \exists c . \rconfig(r',c)$
  # $\rquorumof(r')$ is a quorum of the configuration of $r'$:
  assume $\forall r' .  \; \text{cr}\le r' < r \land \rconfig(r',c) \to$
        $\rquorumin(\rquorumof(r'),c)$
  # got messages from all quorums between cr and r
  assume $\forall r',n . \; \text{cr}\le r' < r \land \rmember(n, \rquorumof(r')) \to$
        $\exists v . \; \ronebmaxvote(n,r,r',v)$
  local maxr:$\sround$, v:$\svalue$ {
    # find the maximal maximal vote in the quorums:
    assume maxr, v = 
      max $\{ (r',v') \mid \exists n. cr\leq r' \land r'<r \land \rmember(n, \rquorumof(r'))$
        $\land \ronebmaxvote(n,r,r',v') \land v' \neq \none\}$
    if (v = $\none$) {
      v := * # set v to an arbitrary value different from none.
      assume v $\neq \none$
      $\rcompletemsg(r)$ := true # notify master that r is $\text{complete}$
    }
    $\rproposal(r, v)$ := true # propose value v
  }
action $\acastvote(\text{n} : \snode ,\, \text{r} : \sround ,\, \text{v} : \svalue)$ {
    assume $\text{v} \neq \none$
    assume $\rproposal(\text{r}, \text{v})$
    # never joined a higher round:
    assume $\neg \rleftround(\text{n},\text{r})$ # rewritten
    $\rvote(\text{n}, \text{r}, \text{v})$ := true
}
action $\adecide(\text{n} : \snode, \text{r} : \sround ,\, c:\sconfig, \, \text{v} : \svalue ,\, \text{q} : \squorum)$ {
    assume $\text{v} \neq \none$
    assume $\rconfig(r,c)$
    assume $\rquorumin(q,c)$
    assume $\forall n. \; \rmember(n, \text{q}) \to \rvote(n, \text{r}, \text{v})$  # 2b from quorum q
    $\rdecision(\text{n}, \text{r}, \text{v})$ := true
    $\rcompletemsg(r)$ := true
}
\end{lstlisting}
\end{minipage}%
\captionof{figure}{
Model of Vertical Paxos in EPR.
}
\label{fig:vertical-paxos-epr-rml}
\end{minipage}%
\end{figure}

\subsection{Fast Paxos}
\label{sec:fast-paxos}

When a unique node starts a round, a value sent by a client to that node is decided by Paxos in at most 3 times the worst-case message latency (one message to deliver the proposal to the owner of the round, and one round trip for the owner of the round to complete phase 2).
Fast Paxos~\cite{lamport_fast_2006} reduces this latency to twice the worst-case message latency under the assumption that messages broadcast by the nodes are received in the same order by all nodes (a realistic assumption in some settings, e.g.,  in an Ethernet network).
When this assumption is violated, the cost of Fast Paxos increases depending on the conflict-recovery mechanism employed.

In Fast Paxos, rounds are split into a set of fast rounds and a set of classic rounds (e.g. even rounds are fast and odd ones are classic).
In a classic round, nodes vote for the proposal that the owner of the round sends and a value is decided when a quorum of nodes vote for it, as in Paxos.
However, in a fast round $r$, the owner can send an ``any'' message instead of a proposal, and a node receiving it is allowed to vote for any value of its choice in $r$ (but it cannot change its mind after having voted).
This allows clients to broadcast their proposals without first sending it to the leader, saving on message delay.
But, as a result, different values may be voted for in the same (fast) round, a situation that cannot arise in Paxos.

Now that different values may be voted for in the same (fast) round, the rule used in the propose action of Paxos to determine a value to propose does not work anymore, as there may be different votes in the highest reported round.
Remember that this rule ensured that the owner $p$ of round $r$ would not miss any value choosable in a lower round $r'<r$.
With different values being voted for in fast round $r'$, how is node $p$ to determine which value is choosable in $r'$?
This problem is solved by requiring that any 2 fast quorums and one classic quorum have a common node, and by modifying the way $p$ determines which value is choosable at $r'$, as follows.

As in Paxos, the owner $p$ waits for a classic quorum $q$ of nodes to have joined its round $r$.
If the highest reported round $maxr$ is a fast round and there are different votes reported in $maxr$, then $p$ checks whether there exists a fast quorum $f$ such that all nodes in $q\cap f$ voted for the same value $v$ in $maxr$; if there is such a fast quorum $f$ and value $v$, then $p$ proposes $v$, and otherwise it can propose any value.
By the intersection property of quorums there can be at most one value $v$ satisfying those conditions.
Moreover, if $v$ is choosable at $r'$, then there will be a fast quorum $f$ such that all nodes in $q\cap f$ voted for $v$ at $maxr$.
If there is only a single value $v$ reported voted for in $maxr$, then $p$ proposes $v$, as in Paxos.

In Paxos, a round may not decide a value if its owner is not able to contact a quorum of nodes before they leave the round (e.g. because the owner crashed, is slow, or because of message losses in the network).
In Fast Paxos there is one more cause for a round not deciding a value: a fast round $r$ may not decide a value if the votes cast in $r$ are such that, no matter what new votes are cast from this point on, no value can be voted for by a fast quorum (e.g. when every nodes voted for a different value).
In this case we say that a conflict has occured.
Fast Paxos has three different conflict-recovery mechanisms: starting a new round, coordinated recovery, and uncoordinated recovery.
Our model does not include coordinated or uncoordinated recovery, but starting a new round is of course part of the model.

\subsubsection{FOL model of Fast Paxos}

The FOL model of Fast Paxos appears in~\cref{fig:fast-paxos-fol}.
We now highlight the main changes compared to the FOL model of Paxos.

\paragraph{Quorums}
We axiomatize the properties of fast quorums and classic quorums in first-order
logic by defining two different sorts $\scquorum$ and $\sfquorum$ for classic
and fast quorums, and a separate membership relation for each.  The
intersection properties of quorums are expressed as follows:
\begin{align}
& \forall q_1,q_2:\scquorum. \exists n:\snode. \rcmember(n,q_1) \land \rcmember(n, q_2) \\
& \forall q:\scquorum, f_1,f_2:\sfquorum.
  \exists n:\snode. \rcmember(n,q) \land \rfmember(n, f_1) \land \rfmember(n, f_2)
\end{align}

\paragraph{New relations}
To identify fast rounds we add a unary relation $\rfast : \sround$ which contains all fast rounds.
We add a relation $\rany$ to model the ``any'' messages sent by the owners of fast rounds.

\paragraph{Actions}

The $\asendonea$, $\acastvote$, and $\ajoinround$ actions remain the same as in
Paxos.  The $\apropose$ action is modified to reflect the new rule that the
owner of a round $r$ uses to determine what command to propose.
We start by assuming the owner has received join-acknowledgment
messages from a quorum $q$ (\cref{line:fp-propose-assume-ae}), and we
compute the maximal reported round $maxr$ and pick a vote $v$ reported at $maxr$ (\cref{line:fp-propose-max}).  Note that there may be different
votes reported in $maxr$ if $maxr$ is a fast round.

If at least one vote was reported ($maxr\neq \bot$), then the owner propose the value $v'$ chosen as follows (\cref{line:fp-propose-choosable}):
if there exists a fast quorum $f$ such that all nodes in $f\cap q$ reported voting for $v'$ at $maxr$, then the owner proposes $v'$.
Otherwise, if no such fast quorum exists, it proposes $v$ (as defined above).
Formally, we assume
\begin{equation}
  \begin{split}
    &(\exists f.\;  \forall n.\;  \rfmember(n,f) \land \rcmember(n,q) \rightarrow \ronebmaxvote(n,r,maxr,v')) \\
    &\lor \; (v' = v \land \forall v'', f' .\;  \exists n.\;  \rfmember(n,f') \land \rcmember(n,q) \land \neg\ronebmaxvote(n,r,maxr,v''))
\end{split}
\end{equation}
If no vote was reported ($maxr=\bot$) then, if the round being started is a fast round then the owner sends an ``any'' message, and otherwise the owner proposes an arbitrary value.

Finally, we replace the $\adecide$ action by two actions $\acdecide$ and
$\afdecide$ that update the $\rdecision$ relation when a classic (for
$\acdecide$) or fast (for $\afdecide$) quorum has voted for the same value in
the same round.

\subsubsection{Inductive Invariant}

The safety property we prove is the same as in Paxos, namely that all decisions are for the same value regardless of the node making them and of the round in which they are made:
\begin{equation}
\forall n_1,n_2,r_1,r_2,v . \; \rdecision(n_1,r_1,v_1) \land \rdecision(n_2, r_2,v_2) \to v_1 = v_2
\end{equation}

The inductive invariant is similar to the one of Paxos with cases to distinguish classic and fast rounds.
The main addition is that no value can be choosable at a round $r'<r$ when there is an ``any'' message in $r$.

We start by expressing rather mundane facts about the algorithm, but which are necessary for inductiveness. 
There is at most one proposal per round:
\begin{equation}
  \forall r:\sround, v_1,v_2:\svalue. \; \neg\rfast(r) \land \rproposal(r,v_1) \land \rproposal(r,v_2) \to v_1 = v_2.\label{eq:fp1}
\end{equation}
In a classic round, nodes voted for a value $v$ only if $v$ was proposed: 
\begin{equation}
  \forall r:\sround, n, v:\svalue. \; \neg\rfast(r) \land \rvote(n,r,v) \to \rproposal(r,v).\label{eq:fp2}
\end{equation}
An ``any'' message can occur only in a fast round: 
\begin{equation}
  \forall r:\sround . \; \rany(r) \to \rfast(r).\label{eq:fp3}
\end{equation}
In a fast round, a node votes for a value $v$ only if $v$ was proposed or if an ``any'' message was sent in the round: 
\begin{equation}
  \forall r:\sround, n, v . \; \rfast(r) \land \rvote(n,r,v) \to (\rproposal(r,v) \lor \rany(r)).\label{eq:fp4}
\end{equation}
There cannot be both a proposal and an any message in the same round: 
\begin{equation}
  \forall r:\sround, v . \; \neg(\rproposal(r,v) \land \rany(r)).\label{eq:fp5}
\end{equation}
A node votes only once per round: 
\begin{equation}
\forall n:\snode, r:\sround, v_1,v_2:\svalue. \; \rvote(n,r,v_1)\land \rvote(n,r,v_2) \to v_1=v_2.\label{eq:fp6}
\end{equation}
There is no vote in the round $\bot$: 
\begin{equation}
\forall n:\snode, v:\svalue. \; \neg \rvote(n,\none,v).\label{eq:fp7}
\end{equation}
Decisions come from fast quorums in fast rounds and classic quorums in classic rounds:
\begin{align}
  \begin{split} 
    \forall r:\sround, v:\svalue. \; &\neg \rfast(r) \land (\exists n:\snode. \rdecision(n,r,v)) \to \\
                                     &\exists q:\scquorum. \forall n:\snode. \rcmember(n, q) \to \rvote(n,r,v)\label{eq:fp8}
  \end{split}\\
  \begin{split}
    \forall r:\sround, v:\svalue. \; &\rfast(r) \land (\exists n:\snode. \rdecision(n,r,v)) \to \\
                                     &\exists f:\sfquorum. \forall n:\snode. \rfmember(n, f) \to \rvote(n,r,v).\label{eq:fp9}
  \end{split} 
\end{align}

Then we express the fact that join-acknowledgment messages faithfully represent the node votes (exactly as in Paxos).
\begin{align}
  \forall n:\snode,\, &r,r':\sround,\, v,v':\svalue. \; \ronebmaxvote(n,r,\none,v) \land r' < r \to \neg \rvote(n,r',v')\label{eq:fp10}  \\
  \forall n:\snode,\, &r,r':\sround,\, v:\svalue. \; \ronebmaxvote(n,r,r',v) \land r' \neq \none \to  \\
                                                 &r' < r \land \rvote(n,r',v)\label{eq:fp11} \\
  \begin{split}
    \forall n:\snode,\, &r,r',r'':\sround,\, v,v':\svalue. \; \ronebmaxvote(n,r,r',v) \land r' \neq \none \land r' < r'' < r \to \\
                                                          &\neg \rvote(n,r'',v').\label{eq:fp12}
  \end{split}
\end{align}

Finally, we express that if $v$ is choosable at round $r$, then only $v$ can be proposed at a round $r'>r$, and that there cannot be an ``any'' message at a round $r'>r$.
We differentiate the case of a value choosable in a classic round and in a fast round.
\begin{equation}
\begin{split}
\forall & r_1,r_2,v_1,v_2, q:\scquorum. \;
\neg\rfast(r_1) \land ((\rproposal(r_2,v_2)\land v_1 \neq v_2) \lor \rany(r_2)) \land r_1 < r_2 \to \\
&\exists n:\snode, r', r'':\sround, v:\svalue. \; \rcmember(n,q) \\
&\qquad \land \neg \rvote(n,r_1,v_1) \land r' > r_1 \land \ronebmaxvote(n,r',r'',v)
\end{split}
\end{equation}
\begin{equation}
\begin{split}
\forall & r_1,r_2,v_1,v_2, f:\sfquorum. \;
\rfast(r_1) \land ((\rproposal(r_2,v_2)\land v_1 \neq v_2) \lor \rany(r_2)) \land r_1 < r_2 \to \\
&\exists n:\snode, r', r'':\sround, v:\svalue. \; \rfmember(n,f) \\
&\qquad \land \neg \rvote(n,r_1,v_1) \land
r' > r_1 \land \ronebmaxvote(n,r',r'',v)
\end{split}
\end{equation}

\subsubsection{Transformation to EPR}

To transform the Fast Paxos model to EPR we introduce the same derived relations as in Paxos (i.e. $\rleftround$ and $\ronebproj$) and we rewrite as explained in~\cref{sec:paxos-epr}.
However the verification of the second rewrite step is not as simple as in Paxos.
This step consists in rewriting the $\apropose$ action by considering directly votes instead of join-acknowledgment messages. In Paxos, we verified the rewrite using the auxiliary invariant $INV_{aux}$, and the verification condition is in EPR.
We employ the same method for Fast Paxos, using~\crefrange{eq:fp1}{eq:fp12} as auxiliary invariant.
However, for Fast Paxos, the verification condition of the rewrite does not fall in EPR when done naively.

Verifying the rewrite of the statement at line \cref{line:fp-propose-max}, from
\begin{equation}
    \max \{ (r',v') \mid \exists n. \; \rcmember(n, q)
    \land \ronebmaxvote(n,r,r',v') \land r' \neq \none \}
\end{equation}
to
\begin{equation}
    \max \{ (r',v') \mid \exists n. \; \rcmember(n, q)
    \land \rvote(n,r,v') \land r' \neq \none \land r' < r \}
\end{equation}
is exactly as in Paxos and poses no problem.

However, we also need to verify the rewrite of the assume statement at~\cref{line:fp-propose-choosable} from assuming
\begin{equation}
\begin{split}
  (\exists & f.\;  \forall n.\;  \rfmember(n,f) \land \rcmember(n,q) \rightarrow \ronebmaxvote(n,r,maxr,v'))\\
        &\lor \; (v' = v \land \forall v'', f' .\;  \exists n.\;  \rfmember(n,f') \land \rcmember(n,q)\land \neg\ronebmaxvote(n,r,maxr,v''))
\end{split}
\end{equation}
to assuming
\begin{equation}
\begin{split}
  (\exists & f.\;  \forall n.\;  \rfmember(n,f) \land \rcmember(n,q) \rightarrow \rvote(n,maxr,v'))\\
        &\lor \; (v' = v \land \forall v'', f' .\;  \exists n.\;  \rfmember(n,f') \land \rcmember(n,q)\land \neg\rvote(n,maxr,v''))
\end{split}
\end{equation}
With the assume statement at the beginning of the $\apropose$ action (without which the equivalence does not hold)
\begin{equation}
  \forall n. \; \rcmember(n, q) \to \exists r',v. \;  \ronebmaxvote(n,r,r',v)
\end{equation}
we get a verification condition that is not stratified: there is a cycle involving the sorts node and value in the quantifier alternation graph.
However, as explained in~\cref{sec:transformation-epr}, we observe that we only have to verify the rewrite of the sub-formula $\rcmember(n,q)\land \neg\ronebmaxvote(n,r,maxr,v'')$ to $\rcmember(n,q) \land \neg\rvote(n,maxr,v'')$.
Assuming the auxiliary invariant and the conditions of the assume statements of the $\apropose$ action before \cref{line:fp-propose-choosable}, this equivalence check falls in EPR and successfully verifies.

The EPR model of Fast Paxos appears in~\cref{fig:fast-paxos-epr}.

\lstset{ %
  breakatwhitespace=false,         %
  deletekeywords={for},            %
  keywordstyle=\bf,       %
  language=C,                 %
  otherkeywords={if,else,module,individual,init,action,returns,assert,assume,instantiate,isolate,mixin,before,relation,function,sort,variable,axiom,then,constant,let,*,local},           %
  numbers=left,                    %
  numbersep=5pt,                   %
  numberstyle=\tiny,               %
  rulecolor=\color{black},         %
  tabsize=8,	                   %
   columns=fullflexible,
}

\begin{figure}
  \begin{minipage}{\textwidth}
    \begin{minipage}[t]{.50\textwidth}
\begin{lstlisting}[
    %
    basicstyle=\scriptsize,%
    keepspaces=true,
    numbers=left,
    %
    xleftmargin=2em,
    numberstyle=\tiny,
    emph={
      %
      %
      %
      %
      %
      %
    },
    emphstyle={\bfseries},
    mathescape=true,
  ]
sort $\snode$
sort $\sround$
sort $\svalue$
sort $\scquorum$ # classic quorums
sort $\sfquorum$ # fast quorums
relation $\rfast$ : $\sround$ # identifies fast rounds

relation $\leq$ : $\sround,\sround$
axiom total_order($\leq$)
constant $\none$ : $\sround$

relation $\rcmember$ : $\snode,\scquorum$
relation $\rfmember$ : $\snode,\sfquorum$
# classic quorums intersect
axiom $\forall q_1,q_2 : \scquorum .$
  $\exists n. \; \rcmember(n,q_1) \land \rcmember(n,q_2)$
# a classic quorum and a two fast quorums intersect
axiom $\forall q_1 : \scquorum . \forall q_2,q_3 : \sfquorum .$ 
  $\exists n. \; \rcmember(n,q_1) \land \rfmember(n,q_2) \land \rfmember(n,q_3)$

relation $\ronea$ : $\sround$
relation $\ronebmaxvote$ : $\snode,\sround,\sround,\svalue$
relation $\rproposal$ : $\sround,\svalue$
relation $\rvote$ : $\snode,\sround,\svalue$
relation $\rdecision$ : $\snode,\sround,\svalue$
relation $\rany$ : $\sround$ # the any messages

init $\forall r. \; \neg\ronea(r)$
init $\forall n,r_1,r_2,v. \; \neg\ronebmaxvote(n,r_1,r_2,v)$
init $\forall r,v. \; \neg\rproposal(r,v)$
init $\forall n,r,v. \; \neg\rvote(n,r,v)$
init $\forall n,r,v. \; \neg\rdecision(n,r,v)$
init $\forall r. \; \neg\rany(r)$

action $\asendonea(\text{r} : \sround)$ {
    assume $\text{r} \neq \none$
    $\ronea(\text{r})$ := true
}
action $\ajoinround(\text{n} : \snode ,\, \text{r} : \sround)$ {
    assume $\text{r} \neq \none$
    assume $\ronea(\text{r})$
    assume $\neg \exists r',r'',v. \; r' > \text{r} \land \ronebmaxvote(\text{n},r',r'',v) \label{line:fp-join-round-if}$
    # find the maximal round in which n voted, 
    # and the corresponding vote.
    # maxr = $\bot$ and v is arbitrary when n never voted.
    local maxr, v := max $\{ (r',v') \mid \rvote(\text{n},r',v') \land r' < \text{r}\} \label{line:fp-join-round-max}$
    $\ronebmaxvote(\text{n},\text{r},\text{maxr},\text{v})$ := true $\label{line:fp-join-round-send}$
}
\end{lstlisting}
\end{minipage}%
\begin{minipage}[t]{.50\textwidth}
\begin{lstlisting}[
    %
    basicstyle=\scriptsize,%
    keepspaces=true,
    numbers=left,
    firstnumber=49,
    %
    xleftmargin=2em,
    numberstyle=\tiny,
    emph={
      %
      %
      %
      %
      %
      %
    },
    emphstyle={\bfseries},
    mathescape=true,
  ]
action $\acastvote$ = {
    # receive a 2a or "any" message and send 2b
    local n:$\snode$, v:$\svalue$, r:$\sround$ {
        assume r $\neq \none$
        assume $\not\exists r',r'',v .\; \ronebmaxvote(n,r',r'',v) \land r < r'$
        assume $\forall v . \neg\rvote(n, r, v)$
        # vote $\text{for}$ a proposal, or vote arbitrarily $\text{if}$ there is an "any" message.
        assume $\rproposal$(r, v) $\lor \rany$(r)
        $\rvote$(n, r, v) := true
    }
}
action $\apropose(r : \sround ,\, q : \scquorum)$ {
    assume $r \neq \none$
    assume $\forall v. \; \neg\rproposal(r,v) \label{line:fp-propose-assume-unique}$
    # 1b from quorum q
    assume $\forall n. \; \rcmember(n, \text{q}) \to \exists r',v. \;  \ronebmaxvote(n,\text{r},r',v) \label{line:fp-propose-assume-ae}$
    # find the maximal round in which a node in the quorum reported 
    # voting, and pick corresponding vote (there may be several).
    # v is arbitrary $\text{if}$ the nodes reported not voting.
    local maxr, v := max $\{ (r',v') \mid \exists n. \; \rcmember(n, q)$ 
                                               $\land \ronebmaxvote(n,r,r',v') \land r' \neq \none \} \label{line:fp-propose-max}$
    if (maxr $\neq \none$) {
      # a vote was reported in round maxr, 
      # and there are no votes in higher rounds.
      local $v'$ # the proposal the node will make
      assume 
          $(\exists f.\;  \forall n.\;  \rfmember(n,f) \land \rcmember(n,q)\rightarrow$ 
            $\ronebmaxvote(n,r,\text{maxr},v'))\label{line:fp-propose-choosable}$
          $\lor \; (v' = v \land \forall v'', f' .\;  \exists n.\;  \rfmember(n,f') \land \rcmember(n,q)$ 
                    $\land \neg\ronebmaxvote(n,r,\text{maxr},v''))$
      $\rproposal(r, v')$ := true
    } else { # no vote was reported at all.
      if $\rfast(r)$ {
        $\rany(r)$ := true # fast round, send any_msg
        } else {
        $\rproposal(r, v)$ := true # classic round, propose an arbitrary value
      }
    }
}
action $\acdecide(\text{n} : \snode, \text{r} : \sround ,\, \text{v} : \svalue ,\, \text{q} : \scquorum)$ {
    assume $\text{r} \neq \none$
    # 2b from classic quorum q:
    assume $\forall n. \; \rcmember(n, \text{q}) \to \rvote(n, \text{r}, \text{v})$  
    $\rdecision(\text{n}, \text{r}, \text{v})$ := true
}
action $\afdecide(\text{n} : \snode, \text{r} : \sround ,\, \text{v} : \svalue ,\, \text{q} : \sfquorum)$ {
    assume $\text{r} \neq \none$
    # 2b from fast quorum q:
    assume $\forall n. \; \rfmember(n, \text{q}) \to \rvote(n, \text{r}, \text{v})$  
    $\rdecision(\text{n}, \text{r}, \text{v})$ := true
}
\end{lstlisting}
\end{minipage}%
\end{minipage}%
\captionof{figure}{
Model of Fast Paxos in in many-sorted first-order logic.
}
\label{fig:fast-paxos-fol}
\end{figure}

\lstset{ %
  breakatwhitespace=false,         %
  deletekeywords={for},            %
  keywordstyle=\bf,       %
  language=C,                 %
  otherkeywords={if,else,module,individual,init,action,returns,assert,assume,instantiate,isolate,mixin,before,relation,function,sort,variable,axiom,then,constant,let,*,local},           %
  numbers=left,                    %
  numbersep=5pt,                   %
  numberstyle=\tiny,               %
  rulecolor=\color{black},         %
  tabsize=8,	                   %
   columns=fullflexible,
}

\begin{figure}
  \begin{minipage}{\textwidth}
    \begin{minipage}[t]{.50\textwidth}
\begin{lstlisting}[
    %
    basicstyle=\scriptsize,%
    keepspaces=true,
    numbers=left,
    %
    xleftmargin=2em,
    numberstyle=\tiny,
    emph={
      %
      %
      %
      %
      %
      %
    },
    emphstyle={\bfseries},
    mathescape=true,
  ]
sort $\snode$
sort $\sround$
sort $\svalue$
sort $\scquorum$ # classic quorums
sort $\sfquorum$ # fast quorums
relation $\rfast$ : $\sround$ # identifies fast rounds

relation $\leq$ : $\sround,\sround$
axiom total_order($\leq$)
constant $\none$ : $\sround$

relation $\rcmember$ : $\snode,\scquorum$
relation $\rfmember$ : $\snode,\sfquorum$
# classic quorums intersect
axiom $\forall q_1,q_2 : \scquorum .$ 
  $\exists n. \; \rcmember(n,q_1) \land \rcmember(n,q_2)$
# a classic quorum and a two fast quorums intersect
axiom $\forall q_1 : \scquorum . \forall q_2,q_3 : \sfquorum .$ 
  $\exists n. \; \rcmember(n,q_1) \land \rfmember(n,q_2) \land \rfmember(n,q_3)$

relation $\ronea$ : $\sround$
relation $\ronebmaxvote$ : $\snode,\sround,\sround,\svalue$
relation $\rproposal$ : $\sround,\svalue$
relation $\rvote$ : $\snode,\sround,\svalue$
relation $\rdecision$ : $\snode,\sround,\svalue$
relation $\rany$ : $\sround$ # the any messages
relation $\rleftround$ : $\snode,\sround$
relation $\ronebproj$ : $\snode,\sround$

init $\forall r. \; \neg\ronea(r)$
init $\forall n,r_1,r_2,v. \; \neg\ronebmaxvote(n,r_1,r_2,v)$
init $\forall r,v. \; \neg\rproposal(r,v)$
init $\forall n,r,v. \; \neg\rvote(n,r,v)$
init $\forall n,r,v. \; \neg\rdecision(n,r,v)$
init $\forall r. \; \neg\rany(r)$
init $\forall n,r. \; \neg \rleftround(n,r)$
init $\forall n,r. \; \neg \ronebproj(n,r)$

action $\asendonea(\text{r} : \sround)$ {
    assume $\text{r} \neq \none$
    $\ronea(\text{r})$ := true
}
action $\ajoinround(\text{n} : \snode ,\, \text{r} : \sround)$ {
    assume $\text{r} \neq \none$
    assume $\ronea(\text{r})$
    assume $\neg \rleftround(n,r)$ # rewritten
    local maxr, v := max $\{ (r',v') \mid \rvote(\text{n},r',v') \land r' < \text{r}\}$
    $\ronebmaxvote(\text{n},\text{r},\text{maxr},\text{v})$ := true 
    $\text{\# generated update code for derived relations:}$
    $\rleftround(\text{n},R)$ := $\rleftround(\text{n},R) \lor R < \text{r}$
    $\ronebproj(\text{n},\text{r})$ := true
}
\end{lstlisting}
\end{minipage}%
\begin{minipage}[t]{.50\textwidth}
\begin{lstlisting}[
    %
    basicstyle=\scriptsize,%
    keepspaces=true,
    numbers=left,
    firstnumber=53,
    %
    xleftmargin=2em,
    numberstyle=\tiny,
    emph={
      %
      %
      %
      %
      %
      %
    },
    emphstyle={\bfseries},
    mathescape=true,
  ]
action $\acastvote$ = {
    # receive a 2a or "any" message and send 2b
    local n:$\snode$, v:$\svalue$, r:$\sround$ {
        assume r $\neq \none$
        assume $\not\exists r',r'',v .\; \ronebmaxvote(n,r',r'',v) \land r < r'$
        assume $\forall v . \neg\rvote(n, r, v)$
        # vote $\text{for}$ a proposal, or vote arbitrarily $\text{if}$ there is an "any" message.
        assume $\rproposal$(r, v) $\lor \rany$(r)
        $\rvote$(n, r, v) := true
    }
}
action $\apropose(r : \sround ,\, q : \scquorum)$ {
    assume $r \neq \none$
    assume $\forall v. \; \neg\rproposal(r,v)$
    # 1b from quorum q
    assume $\forall n. \; \rcmember(n, q) \to \ronebproj(n,r)$ # rewriten
    # find the maximal round in which a node in the quorum reported 
    # voting, and pick corresponding vote (there may be several).
    # v is arbitrary $\text{if}$ the nodes reported not voting.
    local maxr, v := max $\{ (r',v') \mid \exists n. \; \rcmember(n, q)$ 
      $\land \rvote(n,r',v') \land r' \neq \none \land r' < r\}$ # rewritten
    if (maxr $\neq \none$) {
      # a vote was reported in maxr, and there are no votes in higher rounds.
      local $v'$ # the proposal the node will make
      assume 
          $(\exists f.\;  \forall n.\;  \rfmember(n,f) \land \rcmember(n,q) \rightarrow \rvote(n,\text{maxr},v'))$
          $\lor \; (v' = v \land \forall v'', f' .\;  \exists n.\;  \rfmember(n,f') \land \rcmember(n,q)$ 
                    $\land \neg\rvote(n,\text{maxr},v''))$ #rewritten
      $\rproposal(r, v')$ := true
    } else { # no vote was reported at all.
      if $\rfast(r)$ {
        $\rany(r)$ := true # fast round, send any_msg
        } else {
        $\rproposal(r, v)$ := true # classic round, propose an arbitrary value
      }
    }
}
action $\acdecide(\text{n} : \snode, \text{r} : \sround ,\, \text{v} : \svalue ,\, \text{q} : \scquorum)$ {
    assume $\text{r} \neq \none$
    # a classic quorum q sent 2b messages:
    assume $\forall n. \; \rcmember(n, \text{q}) \to \rvote(n, \text{r}, \text{v})$ 
    $\rdecision(\text{n}, \text{r}, \text{v})$ := true
}
action $\afdecide(\text{n} : \snode, \text{r} : \sround ,\, \text{v} : \svalue ,\, \text{q} : \sfquorum)$ {
    assume $\text{r} \neq \none$
    # a fast quorum q sent 2b messages:
    assume $\forall n. \; \rfmember(n, \text{q}) \to \rvote(n, \text{r}, \text{v})$
    $\rdecision(\text{n}, \text{r}, \text{v})$ := true
}
\end{lstlisting}
\end{minipage}%
\end{minipage}%
\captionof{figure}{
Model of Fast Paxos in EPR.
}
\label{fig:fast-paxos-epr}
\end{figure}

\subsection{Flexible Paxos}
\label{sec:flexible-paxos}

Flexible Paxos~\cite{howard_flexible_2016} extends Paxos based on the observation that in Paxos it is only necessary that every phase-1 quorum intersects with every phase-2 quorum (quorums of the same phase do not have to intersect).
Thus nodes may use different sets of quorums for phase 1 (to compute which value may be choosable in lower rounds) and for phase 2 (to get a value decided in the current round) as long as every phase-1 quorum  intersects every phase-2 quorum.
For example, in a system with a large number of nodes, one may choose that a value is decided if any set of 3 nodes (a phase-2 quorum) votes for it, and at the same time require that a node wait for $n-2$ nodes (a phase-1 quorum) to join its round when starting a new round, where $n$ is the total number of nodes. The opposite configuration is also possible (i.e. phase-1 quorums of size 3 and phase-2 quorums of size $n-2$). This approach allows a trade-off between the cost of starting a new round and the cost of deciding on a value.
Note that no inconsistency may arise due to the fact that same-phase quorums do not intersect because the leader of a round proposes a unique value.

To model Flexible Paxos in IVY we introduce two sorts for the two different types of quorums, $quorum\_1$ and $quorum\_2$, and we modify the actions to use quorums of sort $quorum\_1$ in phase 1 and of sort $quorum\_2$ in phase 2.
We also adapt the quorum intersection axiom:
\[
\forall q_1:\squorum_1, q_2:\squorum_2. \; \exists n:\snode. \;
\rmember_1(n,q_1) \land \rmember_2(n, q_2).
\]

The derived relations and rewriting steps are the same as in Paxos, and the safety property and inductive invariant is also the same as in Paxos except that the quorums are taken from the sort $quorum\_2$ in~\cref{eq:quorum-of-decision} and~\cref{eq:choosable}.

The FOL model of Flexible Paxos appears in~\cref{fig:flexible-paxos-fol}.

\lstset{ %
  breakatwhitespace=false,         %
  keywordstyle=\bf,       %
  language=C,                 %
  otherkeywords={module,individual,init,action,returns,assert,assume,instantiate,isolate,mixin,before,relation,function,sort,variable,axiom,then,constant,let,*,local},           %
  numbers=left,                    %
  numbersep=5pt,                   %
  numberstyle=\tiny,               %
  rulecolor=\color{black},         %
  tabsize=8,	                   %
   columns=fullflexible,
}

\begin{figure}
\begin{minipage}{\textwidth}
\begin{lstlisting}[
    %
    basicstyle=\scriptsize,%
    keepspaces=true,
    numbers=left,
    %
    xleftmargin=2em,
    numberstyle=\tiny,
    emph={
      %
      %
      %
      %
      %
      %
    },
    emphstyle={\bfseries},
    mathescape=true,
  ]
sort $\snode$, $\sround$, $\svalue$
# two types of quorums
sort $\sort{quorum\_1}$
sort $\sort{quorum\_2}$

relation $\leq$ : $\sround,\sround$
axiom total_order($\leq$)
constant $\none$ : $\sround$

relation $\relation{member\_1}(n:\snode, q:\sort{quorum\_1})$
relation $\relation{member\_2}(n:\snode, q:\sort{quorum\_2})$
axiom $\forall q_1 : \sort{quorum\_1},q_2 : \sort{quorum\_2}. \;  \exists n:\snode. \;  \relation{member\_1}(n,q_1) \land \relation{member\_2}(n, q_2)$

relation $\ronea$ : $\sround$
relation $\ronebmaxvote$ : $\snode,\sround,\sround,\svalue$
relation $\rproposal$ : $\sround,\svalue$
relation $\rvote$ : $\snode,\sround,\svalue$
relation $\rdecision$ : $\snode,\sround,\svalue$

init $\forall r. \; \neg\ronea(r)$
init $\forall n,r_1,r_2,v. \; \neg\ronebmaxvote(n,r_1,r_2,v)$
init $\forall r,v. \; \neg\rproposal(r,v)$
init $\forall n,r,v. \; \neg\rvote(n,r,v)$
init $\forall n,r,v. \; \neg\rdecision(n,r,v)$

action $\asendonea(\text{r} : \sround)$ {
    assume $\text{r} \neq \none$
    $\ronea(\text{r})$ := true
}
action $\ajoinround(\text{n} : \snode ,\, \text{r} : \sround)$ {
    assume $\text{r} \neq \none$
    assume $\ronea(\text{r})$
    assume $\neg \exists r',r'',v. \; r' > \text{r} \land \ronebmaxvote(\text{n},r',r'',v)$
    # find the maximal round in which n voted, and the corresponding vote.
    # maxr = $\bot$ and v is arbitrary when n never voted.
    local maxr, v := max $\{ (r',v') \mid \rvote(\text{n},r',v') \land r' < \text{r}\}$
    $\ronebmaxvote(\text{n},\text{r},\text{maxr},\text{v})$ := true
}
action $\apropose(\text{r} : \sround ,\, \text{q} : \sort{quorum\_1})$ {
    assume $\text{r} \neq \none$
    assume $\forall v. \; \neg\rproposal(\text{r},v)$
    # 1b from a phase-1 quorum q:
    assume $\forall n. \; \relation{member\_1}(n, \text{q}) \to \exists r',v. \;  \ronebmaxvote(n,\text{r},r',v)$
    # find the maximal round in which a node in the quorum reported
    # voting, and the corresponding vote.
    # v is arbitrary $\text{if}$ the nodes reported not voting.
    local maxr, v := max $\{ (r',v') \mid \exists n. \; \relation{member\_1}(n, \text{q})$
                                               $\land \ronebmaxvote(n,\text{r},r',v') \land r' \neq \none \}$
    $\rproposal(\text{r}, \text{v})$ := true # propose value v
}
action $\acastvote(\text{n} : \snode ,\, \text{r} : \sround ,\, \text{v} : \svalue)$ {
    assume $\text{r} \neq \none$
    assume $\rproposal(\text{r}, \text{v})$
    assume $\neg \exists r',r'',v. \;  r' > \text{r} \land \ronebmaxvote(\text{n},r',r'',v)$
    $\rvote(\text{n}, \text{r}, \text{v})$ := true
}
action $\adecide(\text{n} : \snode, \text{r} : \sround ,\, \text{v} : \svalue ,\, \text{q} : \sort{quorum\_2})$ {
    assume $\text{r} \neq \none$
    assume $\forall n. \; \relation{member\_2}(n, \text{q}) \to \rvote(n, \text{r}, \text{v})$  # 2b from a phase-2 quorum q
    $\rdecision(\text{n}, \text{r}, \text{v})$ := true
}
\end{lstlisting}
\captionof{figure}{Model of the Flexible Paxos consensus algorithm as a transition system in many-sorted first-order logic}
  \label{fig:flexible-paxos-fol}
\end{minipage}%
\end{figure}

\subsection{Stoppable Paxos}
\label{sec:stoppable-paxos}

Stoppable Paxos~\cite{lamport_stoppable_2008} extends Multi-Paxos with
the ability for a node to propose a special stop command in order to
stop the algorithm, with the guarantee that if the stop command is
decided in instance $i$, then no command is ever decided at an
instance $j>i$.  Stoppable Paxos therefore enables Virtually
Synchronous system
reconfiguration~\cite{birman_virtually_2010,chockler_group_2001}:
Stoppable Paxos stops in a state known to all participants, which can
then start a new instance of Stoppable Paxos in a new configuration
(e.g., in which participants have been added or removed); moreover, no
pending commands can leak from a configuration to the next, as only
the final state of the command sequence is transfered from one
configuration to the next.

Stoppable Paxos may be the most intricate algorithm in the Paxos
family: as acknowledged by Lamport et
al.~\cite{lamport_stoppable_2008}, ``getting the details right was not
easy''.  The main algorithmic difficulty in Stoppable Paxos is to
ensure that no command may be decided after a stop command while at
the same time allowing a node to propose new commands without waiting,
when earlier commands are still in flight (which is important for
performance).  In contrast, in the traditional approach to
reconfigurable SMR~\cite{lamport_reconfiguring_2010}, a node that has
$c$ outstanding command proposals may cause up to $c$ commands to be
decided after a stop command is decided; Those commands needs to be
passed-on to the next configuration and may contain other stop
commands, adding to the complexity of the reconfiguration system.

Before proposing a command in an instance in Stoppable Paxos, a node
must check if other instances have seen stop commands proposed and in
which round.  This creates a non-trivial dependency between rounds and
instances, which are mostly orthogonal concepts in other variants of
Paxos. This manifest as the instance sort having no incoming edge in
the quantifier alternation graph in other variants, while such edges
appear in Stoppable Paxos. Interestingly, the rule given by Lamport
et al. to propose commands results in verification conditions that are
not in EPR, and rewriting seems difficult.  However, we found an
alternative rule which results in EPR verification conditions.  This
alternative rule soundly overapproximates the original rule (and
introduces new behaviors), and, as we have verified (in EPR), it also
maintains safety. The details of the modified rule and its
verification appear in

Stoppable Paxos uses the same messages and actions as Multi-Paxos, and
a special command value $\rstop$. In addition to the usual consensus
property of Paxos, Stoppable Paxos ensures the following safety
property, ensuring that nothing is ever decided after a $\rstop$ value is decided:
\begin{equation} \label{eq:stoppable-safety}
\begin{split}
& \forall i_1,i_2:\sinst, n_1,n_2:\snode, r_1,r_2:\sround, v:\svalue. \; \\
& \qquad \qquad \qquad \qquad \rdecision(n_1,i_1,r_1,\rstop) \land i_2 > i_1 \to \neg \rdecision(n_2,i_2,r_2,v)
\end{split}
\end{equation}
In order to obtain this property, Stoppable Paxos uses an intricate condition
in the $\ainstate$ action, to ensure that if a $\rstop$ value is choosable at
instance $i_1$, then no value is be proposed for any instance $i_2 > i_1$.
Recall that in Multi-Paxos, the owner of round $r$ takes the $\ainstate$ action
once it has received join-acknowledgment messages from a quorum of nodes.
These messages allow to compute the maximal vote (by round number) in each
instance, by any node in the quorum. Let $m$ denote the $\svotemap$
representing these voted (as computed in \Cref{fig:multi-paxos-fol}
\cref{line:multi-fol-instate-max}).  In Multi-Paxos, these votes are simply
re-proposed for round $r$.  However, in Stoppable Paxos we must take special
care for $\rstop$ commands.  Suppose that for some instance $i_1$, we have
$\rroundof(m,i_1) \neq \none \land \rvalueof(m,i_1) = \rstop$. Naively, this
suggests we should not propose any value for instances larger than $i_1$;
otherwise, if the stop value is eventually decided at $i_i$, we will violate
the safety property that requires that no value be decided after a stop command.
However, it could be that for some $i_2 > i_1$, we also find $\rroundof(m,i_2)
\neq \none$. Here we face a dilemma: if the stop value at $i_1$ is eventually
decided, proposing at $i_2$ may lead to a safety violation; but if the stop
value at $i_1$ is in fact not choosable and the value at $i_2$ is eventually
decided, then re-proposing the stop value at $i_1$ may lead to a safety
violation too.  As explained in \cite{lamport_stoppable_2008}, there is a way
to ensure that either the stop value at $i_1$ is choosable or the other value
at $i_2$ is choosable, but not both, and to know which one is choosable. The solution
depends on whether $\rroundof(m,i_2) > \rroundof(m,i_1)$, in which case the
$\rstop$ command for $i_1$ cannot be choosable and is \emph{voided} by treating
it as if $\rroundof(m,i_1) = \none$.  Otherwise, the value at $i_2$ cannot be
choosable and the owner should propose the $\rstop$ command for $i_1$, and not
propose any other values for instances larger than $i_1$.

The rule described in \cite{lamport_stoppable_2008} is to first
compute which $\rstop$ commands are voided, and then to propose all
commands except those made impossible by a non-voided $\rstop$ command
(i.e., a non-voided stop command at a lower instance). Formalizing
this introduces cyclic quantifier alternation over instances, since
the condition for voiding a $\rstop$ command involves an existential
quantifier over instances. Formally, let $m_L$ denote the $\svotemap$
obtained from $m$ by voiding according to the rule of
\cite{lamport_stoppable_2008} (where it is called
$\textit{sval2a}$). $m_L$ is given by:

\begin{equation*}\label{eq:stoppable-lamport-sval2a}
\begin{split}
& \forall i:\sinst. \; (\rvalueof(m_L,i) = \rvalueof(m,i)) \land (\varphi_\textit{void} \to \rroundof(m_L,i) = \none) \land \\
& \qquad \qquad \qquad \qquad \qquad \qquad \qquad \qquad \qquad  (\neg \varphi_\textit{void} \to \rroundof(m_L,i) = \rroundof(m,i)) \\
& \text{where:} \\
& \varphi_\textit{void} = \exists i':\sinst . \; i' > i \land \rroundof(m,i') \neq \none \land \rroundof(m,i') > \rroundof(m,i)
\end{split}
\end{equation*}
Then, the rule in \cite{lamport_stoppable_2008} makes proposals for
instances $i$ that satisfy the condition condition:
\begin{equation*}\label{eq:stoppable-lamport-propose}
\rroundof(m_L,i) \neq \none \land \forall i':\sinst . \; i' < i \land \rvalueof(m_L,i') = \rstop \to \rroundof(m_L,i') = \none
\end{equation*}
This introduces yet another quantifier alternation cycle, since it
must be applied with universal quantification to all instances.

To avoid this cyclic quantifier alternation, we observe that a relaxed
rule can be used, and verify a realization of Stoppable Paxos based on
our relaxed rule. The relaxed rule avoids the quantifier alternation,
and it also provides an overapproximation of the rule of
\cite{lamport_stoppable_2008}. The relaxed rule is to first find the
$\rstop$ command with the highest round in $m$, and then check if it
is voided (by a value at a higher instance and higher round). If it is
not voided, then we void all other stop commands, and all proposals at
higher instances. If the maximal stop is voided, we void all stop
commands, as well as all proposals with higher instances and lower
rounds (compared to the $\rstop$ command with the highest round). This
rule does not lead to any quantifie alternation cycles.

\subsubsection{Model of the Protocol}

\lstset{ %
  breakatwhitespace=false,         %
  keywordstyle=\bf,       %
  language=C,                 %
  otherkeywords={module,individual,init,action,returns,assert,assume,instantiate,isolate,mixin,before,relation,function,sort,variable,axiom,then,constant,let,*,local},           %
  numbers=left,                    %
  numbersep=5pt,                   %
  numberstyle=\tiny,               %
  rulecolor=\color{black},         %
  tabsize=8,	                   %
   columns=fullflexible,
}

\begin{figure}
\begin{lstlisting}[
    %
    basicstyle=\scriptsize,%
    keepspaces=true,
    numbers=left,
    %
    xleftmargin=2em,
    numberstyle=\tiny,
    emph={
      %
      %
      %
      %
      %
      %
    },
    emphstyle={\bfseries},
    mathescape=true,
  ]
sort $\snode$
sort $\squorum$
sort $\sround$
sort $\svalue$
sort $\sinst$
sort $\svotemap$

relation $\leq_r$ : $\sround,\sround$
relation $\leq_i$ : $\sinst,\sinst$
axiom total_order($\leq_r$)
axiom total_order($\leq_i$)
constant $\none$ : $\sround$
constant $\rstop$ : $\sinst$
relation $\rmember$ : $\snode,\squorum$
axiom $\forall q_1,q_2 : \squorum. \;  \exists n:\snode. \;  \rmember(n,q_1) \land \rmember(n, q_2) \label{line:stoppable-fol-axiom-quorum}$

relation $\ronea$ : $\sround$
relation $\ronebmaxvote$ : $\snode,\sround,\svotemap$
relation $\rproposal$ : $\sinst,\sround,\svalue$
relation $\ractive$ : $\sround$
relation $\rvote$ : $\snode,\sinst,\sround,\svalue$
relation $\rdecision$ : $\snode,\sinst,\sround,\svalue$
function $\rroundof$ : $\svotemap,\sinst \to \sround$
function $\rvalueof$ : $\svotemap,\sinst \to \svalue$

init $\forall r. \; \neg\ronea(r)$
init $\forall n,r,m. \; \neg\ronebmaxvote(n,r,m)$
init $\forall i,r,v. \; \neg\rproposal(i,r,v)$
init $\forall r. \; \neg\ractive(r)$
init $\forall n,i,r,v. \; \neg\rvote(n,i,r,v)$
init $\forall r,v. \; \neg\rdecision(r,v)$

action $\asendonea$  # same as Multi-Paxos ($\text{\Cref{fig:multi-paxos-fol} \cref{line:multi-fol-sendonea}}$)
action $\ajoinround$  # same as Multi-Paxos ($\text{\Cref{fig:multi-paxos-fol} \cref{line:multi-fol-joinround}}$)
action $\acastvote$  # same as Multi-Paxos ($\text{\Cref{fig:multi-paxos-fol} \cref{line:multi-fol-castvote}}$)
action $\adecide$  # same as Multi-Paxos ($\text{\Cref{fig:multi-paxos-fol} \cref{line:multi-fol-decide}}$)
action $\ainstate(\text{r} : \sround ,\, \text{q} : \squorum)$ {
    assume $\text{r} \neq \none \land \neg\ractive(\text{r}) \label{line:stoppable-fol-instate-assume-once}$
    assume $\forall n. \; \rmember(n, \text{q}) \to \exists m. \;  \ronebmaxvote(n,\text{r},m) \label{line:stoppable-fol-instate-assume-ae}$
    local m : $\svotemap$ := *
    assume $\forall i. \; (\rroundof(\text{m},i),\rvalueof(\text{m},i)) = \text{max }\left\{ (r',v') \mid \exists n,m'. \; \rmember(n, \text{q}) \land \ronebmaxvote(n,\text{r},m') \, \land \right. \label{line:stoppable-fol-instate-max}$
                                                                                                                $\left.  r' = \rroundof(m',\text{i}) \land v' = \rvalueof(m',\text{i}) \land r' \neq \none \, \right\}$
    $\ractive(\text{r})$ := $\true$ $\label{line:stoppable-fol-instate-active}$
    if $(\forall i. \rroundof(\text{m},i) \neq \none \to \rvalueof(\text{m},i) \neq \rstop)$ {  # no stops in m $\label{line:stoppable-fol-instate-if-1}$
            $\rproposal(I,\text{r},V)$ := $\rproposal(I,\text{r},V) \lor (\rroundof(\text{m},I) \neq \none \land V = \rvalueof(\text{m},I))$ $\label{line:stoppable-fol-instate-propose-1}$
    } else {  # find maximal stop in m and propose accordingly
        local $\text{i}_\text{s}$ : $\sinst$ := *
        assume $\rroundof(\text{m},\text{i}_\text{s}) \neq \none \land \rvalueof(\text{m},\text{i}_\text{s}) = \rstop \, \land \label{line:stoppable-fol-instate-max-stop}$
                   $\forall i . \; (\rroundof(\text{m},i) \neq \none \land \rvalueof(\text{m},i) = \rstop) \to \rroundof(\text{m},i) \leq_r \rroundof(\text{m},\text{i}_\text{s})$ 
        if $(\exists i. \; i >_i \text{i}_\text{s} \land \rroundof(\text{m},i) \neq \none \land \rroundof(\text{m},i) >_r \rroundof(\text{m},\text{i}_\text{s}) )$ { # maximal stop is voided $\label{line:stoppable-fol-instate-if-2}$
            $\rproposal(I,\text{r},V)$ := $\rproposal(I,\text{r},V) \lor (\rroundof(\text{m},I) \neq \none \land V = \rvalueof(\text{m},I) \land V \neq \rstop \, \land $ $\label{line:stoppable-fol-instate-propose-2}$
                                                                                           $(I >_i \text{i}_\text{s} \to \rroundof(\text{m},I) >_r \rroundof(\text{m},\text{i}_\text{s})))$
        } else { # maximal stop not voided
            $\rproposal(I,\text{r},V)$ := $\rproposal(I,\text{r},V) \lor (\rroundof(\text{m},I) \neq \none \land V = \rvalueof(\text{m},I) \land I \leq_i \text{i}_\text{s} \land (V = \rstop \to I = \text{i}_\text{s}))$ $\label{line:stoppable-fol-instate-propose-3}$
        }
    }
}
action $\aproposenew(\text{r} : \sround ,\, \text{i} : \sinst ,\, \text{v} : \svalue)$ {
    assume $\text{r} \neq \none \land \ractive(\text{r}) \land \forall v. \; \neg\rproposal(\text{i},\text{r},v) \label{line:stoppable-fol-propose-available-1}$
    assume $\forall i. \; i \leq_i \text{i} \to \neg \rproposal(i, \text{r}, \rstop)) \label{line:stoppable-fol-propose-available-2}$
    assume $\text{v} = \rstop \to \forall i,v. \; \text{i} \leq_i i \to \neg\rproposal(i,\text{r},v) \label{line:stoppable-fol-propose-available-3}$
    $\rproposal(\text{r}, \text{v})$ := true $\label{line:stoppable-fol-propose-send}$
}
\end{lstlisting}
\caption{%
\label{fig:stoppable-paxos-fol}%
Model of Stoppable Paxos as a transition system in many-sorted first-order
logic.%
}
\end{figure}

Our model of Stoppable Paxos in first-order logic appears in
\Cref{fig:stoppable-paxos-fol}. The only actions that differ from
Multi-Paxos (\Cref{fig:multi-paxos-fol}) are $\ainstate$ and
$\apropose$. The rule described above is implemented in the
$\ainstate$ action. \Cref{line:stoppable-fol-instate-max} computes $m$
as in Multi-Paxos. Then, \cref{line:stoppable-fol-instate-if-1} checks
if there are any $\rstop$ commands reported in $m$. If there are no
$\rstop$ commands, then \cref{line:stoppable-fol-instate-propose-1}
proposes exactly as in Multi-Paxos (\Cref{fig:multi-paxos-fol}
\cref{line:multi-fol-instate-propose}). If there are $\rstop$ commands
in $m$, then \cref{line:stoppable-fol-instate-max-stop} finds
$\text{i}_\text{s}$, the instance of the stop command with the highest
round present in $m$. \Cref{line:stoppable-fol-instate-if-2} then
checks if this $\rstop$ command is voided by a value present in $m$ at
a higher instance and with a higher round. If so, then
\cref{line:stoppable-fol-instate-propose-2} proposes all values from
$m$ that are not $\rstop$ commands, excluding those which are voided
by the $\rstop$ at $\text{i}_\text{s}$, i.e., those at a higher
instance and lower round. In case the $\rstop$ at $\text{i}_\text{s}$
is not voided, \cref{line:stoppable-fol-instate-propose-3} proposes
all non-stop commands until $\text{i}_\text{s}$, as well as a $\rstop$
at $\text{i}_\text{s}$, and does not propose anything for higher
instances.

The $\apropose$ action is similar to Multi-Paxos, but contains a few
changes as described in \cite{lamport_stoppable_2008}. First, we may
not propose any value if we have proposed a $\rstop$ command at a
lower instance
(\cref{line:stoppable-fol-propose-available-2}). Second, we may only
propose a stop value at an instance such that we have not already
proposed anything at higher instances
(\cref{line:stoppable-fol-propose-available-3}).

Our rule for Stoppable Paxos provides an overapproximation of the rule
of \cite{lamport_stoppable_2008}, in that it forces less proposals in
the $\ainstate$ action. Thus, any behaviour of
\cite{lamport_stoppable_2008} can be simulated by an $\ainstate$
action followed by several $\apropose$ actions to produce the missing
proposals. We have also verified this using IVy and Z3, albeit the
verification conditions were outside of EPR. Nevertheless, Z3 was able
to verify this in under 2 seconds. This is in contrast to verifying
the inductive invariant for the version of
\cite{lamport_stoppable_2008}, for which Z3 diverged.

\subsubsection{Inductive Invariant}

The inductive invariant for Stoppable Paxos contains the inductive
invariant for Multi-Paxos (\Cref{sec:multi-paxos-inv}), and in
addition includes conjuncts that capture the special meaning of
$\rstop$ commands, and ensure the additional safety property of
\cref{eq:stoppable-safety}. Below we list these additional conjuncts.

First, the inductive invariant asserts that a $\rstop$ command
proposed at some instance forbids proposals of other commands in higher instances \emph{in the same round}:
\begin{small}
\begin{equation}
\forall i_1,i_2:\sinst,r:\sround,v:\svalue. \;
\rproposal(i_1,r,\rstop) \land i_2 >_i i_1 \to \neg \rproposal(i_2,r,v)
\end{equation}
\end{small}

Next, the inductive invariant connects different rounds via the
\emph{choosable} concept (see \Cref{sec:paxos-fol-inv},
\cref{eq:choosable}). If any value is proposed, then $\rstop$ command
cannot be choosable at lower instances and lower rounds:

\begin{small}
\begin{equation} \label{eq:stoppable-choosable-1}
\begin{split}
& \forall i_1,i_2:\sinst,r_1,r_2:\sround,v:\svalue, q:\squorum. \;
\rproposal(i_2,r_2,v) \land r_1 <_r r_2 \land i_1 <_i i_2 \to \\
& \qquad \qquad \exists n:\snode, r', r'':\sround, v':\svalue. \; \\
& \qquad \qquad \qquad \qquad
\rmember(n,q) \land \neg \rvote(n,r_1,\rstop) \land r' > r_1 \land \ronebmaxvote(n,r',r'',v')
\end{split}
\end{equation}
\end{small}
And, in addition, if $\rstop$ is proposed, than nothing can be choosable at lower rounds and higher instances:
\begin{small}
\begin{equation} \label{eq:stoppable-choosable-2}
\begin{split}
& \forall i_1,i_2:\sinst,r_1,r_2:\sround,v:\svalue, q:\squorum. \;
\rproposal(i_1,r_2,\rstop) \land r_1 <_r r_2 \land i_1 <_i i_2 \to \\
& \qquad \qquad \exists n:\snode, r', r'':\sround, v':\svalue. \; \\
& \qquad \qquad \qquad \qquad
\rmember(n,q) \land \neg \rvote(n,r_1,v) \land r' > r_1 \land \ronebmaxvote(n,r',r'',v')
\end{split}
\end{equation}
\end{small}

\subsubsection{Transformation to EPR}

The quantifier alternation structure of \Cref{fig:stoppable-paxos-fol}
and the inductive invariant described above is the same as the one
obtained for Multi-Paxos. Notice that
\cref{eq:stoppable-choosable-1,eq:stoppable-choosable-2} introduce a
quantifier alternation cycle identical to the one introduced by
\cref{eq:multi-choosable} for Multi-Paxos (and by \cref{eq:choosable}
for single decree Paxos). Thus, the transformation of the Stoppable
Paxos model to EPR is identical to that of Multi-Paxos
(\Cref{sec:multi-paxos-epr}), using the same derived relations
$\ronebproj$ and $\rleftround$, when $\rleftround$ is used to rewrite
\cref{eq:stoppable-choosable-1,eq:stoppable-choosable-2} in the
inductive invariant in the same way it was used in
\Cref{sec:paxos-epr}. This again shows the reusability of the derived
relations across many variants of Paxos.

\section{Comparing the EPR Methodology with Interactive Theorem Proving}
\label{sec:isabelle}

To compare the methodolgy presented in this paper with one based on an interactive proof assistant, we proved the safety of a Paxos model almost identical to the IVy model in Isabelle/HOL (the proof figures in ~\cref{sec:paxos_isabelle} and is written for Isabelle-2016-1).
Isabelle/HOL~\cite{nipkow_isabelle/hol:_2002} is an interactive theorem prover with a small trusted kernel that checks all inferences, and can offload proof obligations to integrated automated provers or external SMT solvers whose proofs are then automatically reconstructed and checked in Isabelle/HOL.
The Isabelle/HOL model of Paxos uses the theory of I/O-Automata~\cite{lynch_hierarchical_1987} (we reuse an existing formalization) and closely follows the IVy model, except that no encoding of higher-order concepts is needed in Isabelle/HOL because its logic is a higher-order logic: we use natural numbers for rounds and sets of nodes with cardinality constraints for quorums.

The Isabelle/HOL proof is structured as follows:
We define an abstraction of the single-shot Paxos model which we prove prove safe.
This abstraction models the core of the Paxos algorithm, namely how a node proposing a new command computes which command is choosable at rounds lower than its round.
The proof is facilitated by the high level of abstraction.
Then, using a refinement mapping, we prove that the Paxos specification implements the abstraction.
The full proof of Paxos consists of roughly 200 lines.

The proof process in Isabelle/HOL involves two steps: first writing a high-level proof skeleton (including the definition of the abstraction, the refinement mapping, and all the invariants needed) in which some proof steps are missing (e.g. that an invariant is inductive and that the refinement mapping is correct).
This proof skeleton is then debugged with the Nipick~\cite{blanchette_nitpick:_2010} model-finder.
When we are satisfied that our proof skeleton is correct, we turn to proving all the steps left out.

The first step is similar to proving in IVy, except that instead of rewriting the model to make it fall in EPR we built the abstraction and the proof skeleton.
On our model of Paxos, the Nitpick model-finder exhibits reliable performance and thus this step is comparable in difficulty to proving with IVy, although the graphical user interface of IVy helps understand counterexample faster.
However, thanks to decidability of EPR, with IVy we are done at the end of this step.

In Isabelle/HOL, proving that the steps left out are correct is still a challenge.
To do so, we may have to refine our proof skeleton until all steps can be proved by the automated provers available in Isabelle/HOL.
This task can be time consuming and requires an experienced user who can gauge what can be proved automatically and what not.
Moreover, in refining the proof skeleton, one may try to isolate facts relevant for a sub-proof, and mistakenly omit necessary assumptions.
Therefore the proof skeleton must continuously be debugged with the model-finder to ensure that the user does not take a dead-end.
The Isabelle/HOL proof is completed in about 200 lines of proof script.

One tricky aspect of the Isabelle/HOL proof is the proof that the propose action preserves~\cref{eq:choosable}, which is automatic with IVy.
We reason as follows.
To show that~\cref{eq:choosable} is maintained by the \apropose\ action, assume it is not, i.e. that the owner $p$ of $r_2$ proposes $v_2$ in $r_2$, and that $v_1\neq v_2$ is choosable at $r_1< r_2$.
Consider the highest round $r_3$ reported by the quorum $q'$ of nodes that acknowledged $p$'s start-round message, and the associated value $v_3$.
If $r_3=\bot$, then no node in $q'$ voted before $r_2$. Therefore, for any quorum $q$, since $q$ intersects $q'$, one node of $q$ joined $r_2>r_1$ and did not vote in $r_1$. Therefore $v_1$ cannot be choosable in $r_1$, a contradiction.
Now assume $r_3\neq \bot$. This implies that $v_2=v_3$.
Consider the following three cases: either (a) $r_1<r_3$, (b) $r_1=r_3$, or (c) $r_1>r_3$.
Assume (a), $r_1<r_3$.  \Cref{eq:vote-proposal} implies that $v_3$ was proposed at $r_3$.
Then by~\cref{eq:choosable} and the fact that $v_1$ is choosable at $r_1<r_3$, we get that $v_1=v_3$. With $v_2=v_3$ we get that $v_1=v_2$, a contradiction.
Assume (b), $r_1=r_3$. Since $v_3$ has been voted for at $r_3$, the proposal at $r_3$ is $v_3$ and only $v_3$ can be choosable at $r_3$. With $v_2=v_3$ we get that $v_1=v_2$, a contradiction.
Finally, assume (c), $r_1>r_3$. We know that all node in $q'$ did not vote between $r_3$ and $r_2$. Therefore, for any quorum $q$, since $q$ intersects $q'$, one node of $q$ joined $r_2>r_1$ and did not vote in $r_1$. Therefore $v_1$ cannot be choosable in $r_1$, a contradiction.
We have reached a contradiction in all cases.

Finally, we note that the EPR methodology could be implemented in a proof assistant like Isabelle/HOL thanks to its ``smt'' method, which discharges proof obligations to SMT solvers and reconstructs SMT proof in Isabelle/HOL's logic.

\includepdf[pages={1},pagecommand=\thispagestyle{plain}\subsection{Safety Proof of Paxos in Isabelle/HOL}\label{sec:paxos_isabelle}]{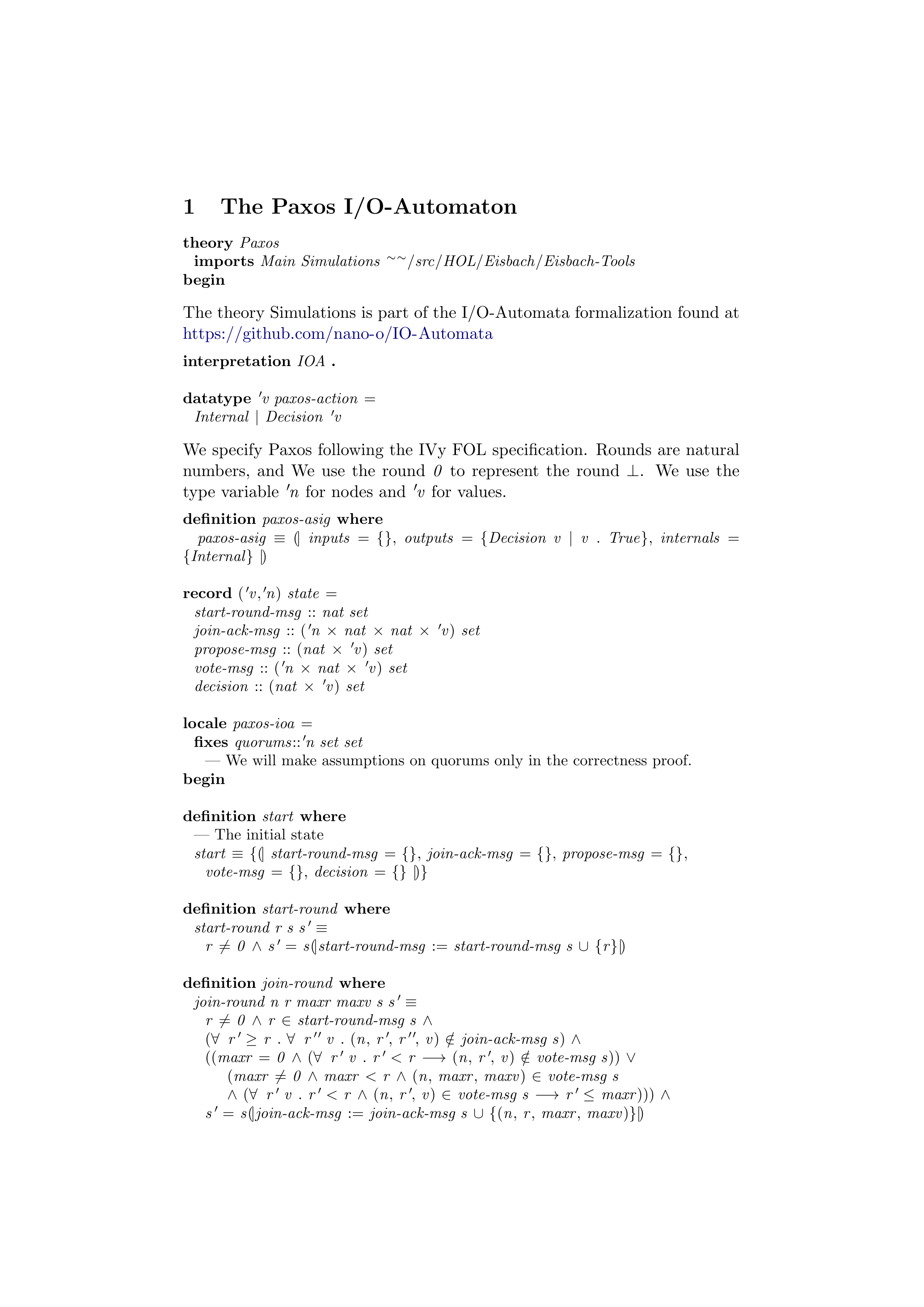}
\includepdf[pages={2-11},pagecommand=\thispagestyle{plain}]{examples/Paxos_Isabelle/output/document.pdf}

\end{document}